%
%
%

\documentclass[AIF,Unicode,manuscript]{cedram}

%

\usepackage{bm} 
\usepackage{mathrsfs} 
\usepackage{stmaryrd} 

\usepackage{caption}
\usepackage{subcaption}	
\usepackage[edges]{forest}

\hypersetup{colorlinks,citecolor=blue,filecolor=blacklinkcolor=red, urlcolor=magenta,linkcolor=red}
\newcommand{\bbR}{\mathbb{R}}
\newcommand{\bbC}{\mathbb{C}}

\DeclareMathOperator*{\slim}{s\,--\,lim}
\equalenv{rem}{rema}
\newcommand{\bra}[1]{\langle {#1} \rangle}
\newcommand{\eqk}{\underset{K}{\sim}}
\newcommand{\geqk}{\underset{K}{\gtrsim}}
\newcommand{\supp}{\textrm{supp}~}
\newcommand{\slashh}{H\mkern-13mu/}

\title[]{Scattering theory for Dirac fields near an extreme Kerr-de Sitter black hole}
\alttitle{Théorie de la diffusion pour champs de Dirac au voisinage d'un trou noir extrême de type Kerr-de Sitter.}

\author{\firstname{Jack} \middlename{A.} \lastname{Borthwick}}
\address{Univ. Brest\\ UMR CNRS 6205\\ Laboratoire de Mathématiques de Bretagne Atlantique\\ 6. av Victor Le Gorgeu\\ CS 93837\\ 29238 BREST cedex 3}
\email[]{jack.borthwick@math.cnrs.fr}
 \thanks{Il m’aurait sans doute été impossible d’achever ce travail en un temps fini sans les innombrables discussions avec mon directeur de thèse Jean-Philippe Nicolas, pour lesquelles je le remercie. Je souhaiterais également remercier Thierry Daudé, dont la thèse constitue la base de ce travail, le temps qu’il a pu m’accorder pour répondre à mes questions et doutes, a une valeur inestimable. I would also like to thank the referee who carefully read the original lengthy manuscript and whose very detailed report contained valuable advice.  }

\keywords{scattering, extremal black hole, Kerr-de Sitter blackhole, Dirac equation, Mourre theory}

\subjclass{35P25, 35Q75,83C57,35Q41}

\begin{abstract} 
 In this paper, we construct a scattering theory for classical massive Dirac fields near the “double” horizon of an extreme Kerr-de Sitter blackhole. Our main tool is the existence of a conjugate operator in the sense of Mourre theory. Additionally, despite the fact that effects of the rotation are ``amplified'' near the double horizon, we show that one can still reduce our study to a 1-dimensional problem through an appropriate decomposition of the Hilbert space.
 \end{abstract}


\begin{altabstract}
Dans cet article, nous développons une théorie du scattering pour des champs de Dirac massifs ou non en métrique Kerr-de Sitter extrême, dans la région située entre l’horizon (double) du trou noir et l’horizon cosmologique. L’outil principal de la construction est l’existence d’un opérateur conjugué au sens de la théorie de Mourre. Par ailleurs, bien que les effets de la rotation soient amplifiées au voisinage de l’horizon double, nous montrons qu’il est néanmoins possible de se ramener à un problème de scattering unidimensionnelle moyennant une décomposition ad-hoc de l’espace de Hilbert.
\end{altabstract}    
    
\begin{document}
\newpage

\maketitle

\section{Introduction}
Over the past two decades or so there has been quite a bit of mathematical interest in scattering theories for particles in black-hole type geometries. This is useful for the understanding of these geometries and the detection of black holes but also in the study of Quantum Field Theory on curved spacetimes, see for example~\cite{Dappiaggi:2009fx,gerard:hal-02939993}.

For rotating black holes, due to super-radiance, it is well known that the usual energy functional of integer spin particle fields, described for instance by the wave or Klein-Gordon equation, is no longer positive-definite, this leads to obvious technical difficulties that have nevertheless been overcome in a handful of situations, such as the Klein-Gordon equation on (De Sitter) Kerr spacetimes~\cite{Georgescu:2017vl,Hafner:2003aa} and the wave equation on Kerr spacetime~\cite{Dafermos:2014jwa}.

On the other hand, for Dirac fields, there is still a conserved current which leads to a natural Hilbert space framework adapted to a spectral theory approach. Scattering theories for massive or massless Dirac fields have been constructed in this manner in the exterior region of Reissner-Nordström, slow Kerr and Kerr-Newman black holes~\cite{thdaude,Nicolas:2004aa}. More recently, there has been interest in non-asymptotically flat backgrounds such as Schwarzschild-de Sitter~\cite{idelonriton:tel-01370116}, slow Kerr Newman-de Sitter~\cite{Daude:2016aa} and slow Kerr-Newman-AdS~\cite{Belgiorno:2010aa} black holes. 

In this paper we study the case of an \emph{extreme} Kerr-de Sitter black hole in a region situated between what we will refer to as a ``double'' horizon and a usual ``simple'' one (the cosmological horizon). The ``double'' horizon is the hypersurface resulting from the coincidence of the two inner black hole horizons\footnote{which occurs for special choices of the parameters of the family. }, and differs quite significantly from the exterior horizon of, for instance, Kerr spacetime.  The extreme case is of particular interest for the understanding of mechanisms behind stability/instability of black hole type spacetimes as it presents features of both types. This is analysed thoroughly in the case of an Extreme Reissner-Nordström black hole in~\cite{Aretakis:2011aa,Aretakis:2011hc}, and complemented by the remarks in~\cite{Batic:2007jb} on the asymptotic behavior to the wave equation. Regarding the Dirac equation, an integral representation of the Dirac propagator in the extreme Kerr metric is derived in~\cite{Bizon:2012we}. Our main result is the asymptotic completeness of the Dirac operator in an extreme Kerr-de Sitter black hole, this is the conjunction of the absence of singular continuous spectrum, and Theorem~\ref{th:final_result}, formulated in Section~\ref{fullscattering}, which provides a complete description of the asymptotic behaviour of all states in the absolutely continuous subspace by establishing the existence of a direct sum decomposition into states that scatter towards either the cosmological or blackhole horizon. This is achieved by comparison with simplified dynamics:
\begin{itemize}
\item $H_{-\infty}= \Gamma^1D_{r^*} + \frac{a}{r_+^2+a^2}-\frac{a}{r_e^2+a^2}D_\phi$ at the extreme black hole horizon,
\item $H_{+\infty}=\Gamma^1D_{r^*} + \frac{a}{r_+^2+a^2}-\frac{a}{r_+^2+a^2}D_\phi$ at the cosmological horizon.
\end{itemize}
In the context of understanding stability/instability features of extreme black hole space times, this result can perhaps be seen as a stability feature of the extreme Kerr-de Sitter blackhole.
  
Our global strategy follows closely that of~\cite{thdaude,Daude:2010aa,Nicolas:2004aa}~: we will adopt the point of view of a class of observers for which the two horizons are asymptotic and will show in Section~\ref{mourre} that a conjugate operator in the sense of Mourre theory~\cite{Amrein:1996aa,Mourre:1981aa} can be constructed in an analogous fashion to that in the exterior of a Kerr black hole as in~\cite{thdaude,Nicolas:2004aa}. Furthermore, it has already been noted, for example in~\cite{Belgiorno:2010aa}, that the presence of the simple horizon is enough to ensure that the usual proof of the absence of eigenvalues -- via a Grönwall inequality exploiting the separability of the Dirac equation -- follows through without modification. However, our results do not follow directly from these works due to long-range potentials at the extreme horizon and a significantly perturbed angular operator. In particular, the decomposition of the Hilbert space into spin harmonics, essential to the reduction to the spherically symmetric case treated in~\cite{Daude:2010aa} is no longer stable.  A key ingredient to our analysis, carried out in Section~\ref{sec:Hx} is constructing operators at both asymptotic ends with similar adapted decompositions and of which the full Dirac operator is a short-range perturbation. 
Furthermore, it is worth noting that since the mass terms do not survive at either of the horizons, despite constituting a long-range potential near the double one, some of the arguments in~\cite{Daude:2010aa} can be simplified.
%
%
%
\subsection{The Kerr-de Sitter metric}
Throughout this text, we will mainly use the usual Boyer-Lindquist like coordinates $(t,r,\theta,\varphi)$ in which the Kerr-de Sitter metric is known to be (signature $(+,-,-,-)$):
\begin{equation} \label{metric} g = \frac{\Delta_r}{\Xi^2\rho^2}[\textrm{d}t-a\sin^2\theta\textrm{d}\varphi]^2 - \frac{\rho^2}{\Delta_r}\textrm{d}r^2 -\frac{\rho^2}{\Delta_\theta} \textrm{d}\theta^2 - \frac{\Delta_\theta \sin^2\theta}{\rho^2\Xi^2}[(r^2+a^2)\textrm{d}\varphi-a\textrm{d}t]^2, \end{equation}
where:
\begin{equation}
\begin{aligned} 
&l^2=\frac{\Lambda}{3}, && \Delta_r=r^2-2Mr+a^2-l^2r^2(r^2+a^2), \\
&\Xi=1+a^2l^2,&& \Delta_\theta=1+a^2l^2\cos^2\theta,\\
& \rho^2=r^2+a^2\cos^2\theta. 
\end{aligned}
\end{equation}
It depends on three parameters $a,M,\Lambda$, the angular momentum per unit mass of the black hole, the mass of the black hole and the cosmological constant, respectively. We will always assume $l > 0$.

The above expression is singular when $\Delta_r=0$ or $\rho=0$, however, the manifold can be analytically extended across the singularities $\{\Delta_r=0\}$. In such an extension, the roots of $\Delta_r$ give rise to null hypersurfaces that we will refer to as horizons. They will be labelled by the root $r_i$ to which they correspond as so: $\mathscr{H}_{r_i}$. If $r_i$ is a double (resp. simple) root of $\Delta_r$, $\mathscr{H}_{r_i}$ will be said to be a ``double" (resp. ``simple'') horizon. In, for instance,~\cite{Borthwick:2018aa}, it is shown that the roots of $\Delta_r$ can be labelled such that either:
\begin{enumerate}
\item $r_{--}< 0 < r_- < r_+ < r_{++}$ 
\item\label{extreme} $r_{--}<0 < r_-=r_+ < r_{++}$ 
\item $r_{--}<0 < r_-<r_+=r_{++}$ 
\item $r_{--}<0 < r_-=r_+-r_{++}$
\item $r_{--}<r_{++}, r_-,r_+ \in \mathbb{C}\setminus{\mathbb{R}}$. 
\end{enumerate}
We will refer to case~(\ref{extreme}) as extreme Kerr-de Sitter; a necessary and sufficient condition for this is:
\begin{equation}
\begin{aligned} \label{condition_al} &|a|l<2-\sqrt{3}, \\
&M^2=\frac{(1-a^2l^2)(a^4l^4+34a^2l^2+1)- \gamma^{\frac{3}{2}}}{54l^2}, \end{aligned}\end{equation}
where $\gamma=(1-a^2l^2)^2-12a^2l^2 $. In this situation the double root is given by:
\begin{equation}
r_e\footnote{We note that in~\cite{Borthwick:2018aa} it was denoted by $x$} = \frac{12a^2l^2+(1-a^2l^2)(1-a^2l^2-\sqrt{\gamma})}{18Ml^2}.
\end{equation}
For future reference, we quote the following useful properties of $r_e$ :
\begin{equation}
\left\{ \begin{array}{l} 0 \leq r_e < \frac{4}{3}\frac{a^2}{M} \\ l^2r_e^4+a^2=Mr_e\end{array}\right.
\end{equation}

Finally, we note that the other two roots $r_{++}$ and $r_{--}$ are equally those of the polynomial:
\begin{equation}X^2 +2r_eX - \frac{a^2}{l^2r_e^2}. \end{equation}
To avoid unnecessarily complicated subscripts, we will now rename the roots of $\Delta_r$ as follows: \[r_- < 0 < r_e < r_+.\]
The region, $B$, in which we will study the scattering of Dirac fields is defined in the coordinates $(t,r,\theta,\varphi)$ by $r_e<r<r_+$. In essence, $B=\mathbb{R}\times]r_e,r_+[\times S^2$, with the metric given by~\eqref{metric}, that extends analytically to the poles. It is between two horizons, one double, one simple and it is the effect of the double horizon that we wish to understand.

The scattering problem will be considered from the point of view of a stationary observer with world-line:
\[r=r_0, \theta=\theta_0, \varphi=\omega t+\phi_0, \omega \in \mathbb{R}, r_0\in]r_e,r_+[,\theta_0\in]0,\pi[,\phi_0 \in ]0,2\pi[.\]
Proper time for such an observer differs from the coordinate function $t$ only by a multiplicative constant depending on the parameters of the trajectory. For this family of observers photons travelling, say, along a principal null geodesic, which are in some sense the most direct trajectories for light to travel towards one of the horizons, will not reach it in finite time. For instance, the coordinate time $t$ necessary for a photon, emitted from $r=r_0$ at $t=t_0$, to reach $\mathscr{H}_+$ travelling along such a curve is:
\begin{equation}
 t-t_0 = \int_{r_0}^{r_+} \frac{\Xi(r^2+a^2)}{\Delta_r}\textrm{d}r = +\infty
 \end{equation}
 In fact, for our purposes, it will be appropriate to replace the coordinate $r$, by the Regge-Wheeler type coordinate $r^* = \displaystyle \int \frac{\Xi(r^2+a^2)}{\Delta_r}\textrm{d}r$ appearing in this computation. By definition:
\begin{equation} \label{eq:rrstar} \textrm{d}{r^*}=\frac{\Xi(r^2+a^2)}{\Delta_r} \textrm{d}r. \end{equation} 
 It will be useful to calculate an explicit expression for $r^*$ by a partial fraction decomposition of the integrand: 
 \begin{equation} \frac{r^2+a^2}{(r-r_-)(r-r_e)^2(r-r_+)} = \frac{\alpha}{r-r_-} + \frac{\beta}{r-r_+} + \frac{\gamma}{r-r_e} + \frac{\delta}{(r-r_e)^2}.  \end{equation}
The coefficients $\alpha,\beta,\gamma,\delta$ are given by~:
\begin{equation*} \begin{gathered} \alpha= - \frac{l}{2}\sqrt{\frac{r_e}{M}}\frac{r^2_-+a^2}{(r_e-r_-)^2}<0, \quad \beta = \frac{l}{2}\sqrt{\frac{r_e}{M}}\frac{ r^2_++a^2}{(r_+-r_e)^2}>0,\\ \delta = \frac{l^2r_e^2(r_e^2+a^2)}{3Mr_e-4a^2} <0,\quad   \gamma = -\frac{2l^2r_e^3(2r_e^2 -7Mr_e +6a^2)}{(3Mr_e-4a^2)^2} < 0.\end{gathered}\end{equation*}
The sign of $\gamma$ follows from the following relations: $$\left\{ \begin{array}{l} r_e^2l^2(r_e^2+a^2)=r_e^2+a^2 -2Mr_e, \\ 0 < 3Mr_e-4a^2 -2r_e^2l^2(r_e^2+a^2) = 7Mr_e - 6a^2 -2r_e^2. \end{array}\right.$$
The expression of $r^*$ is therefore:
\begin{equation}\label{eq:regge_wheeler}\boxed{ \begin{aligned}r^* = \frac{\Xi}{2l}\sqrt{\frac{r_e}{M}}\ln \left(\frac{| r-r_-|^{\eta_-}}{|r-r_+|^{\eta_+}}\right) +\frac{r_e^2(r_e^2+a^2)}{3Mr_e-4a^2}\frac{\Xi}{r-r_e}   \\ +\frac{2r_e^3(2r_e^2 -7Mr_e +6a^2)}{(3Mr_e-4a^2)^2} \Xi\ln |r-r_e| + R_0.\end{aligned}} \end{equation} 
Above, $R_0$ is an arbitrary real constant and $\eta_{\pm} = \frac{r_\pm^2 +a^2}{(r_e-r_\pm)^2}$.

\noindent From \eqref{eq:regge_wheeler}, one can deduce the following asymptotic equivalences: 

\begin{lemme}
\begin{equation} r_+ - r \underset{r^* \to + \infty}{\sim} e^{-\frac{2l}{\Xi \eta_+}\sqrt{\frac{M}{r_e}} r^*}, \label{eq:asymp_cosmo} \end{equation}

\begin{equation} r-r_e \underset{r^* \to - \infty}{\sim} \frac{r_e^2(r_e^2+a^2)\Xi}{3Mr_e-4a^2}\frac{1}{r^*}.\label{eq:asymp_double} \end{equation}
\end{lemme}
 
Equation~\eqref{eq:asymp_cosmo} is true for a suitable choice of $R_0$: it is the usual behaviour that we have come to expect at a simple black hole horizon. Equation~\eqref{eq:asymp_double}, on the other hand, illustrates the first notable difference in the case of the double horizon, as the decay is a lot slower. This will be a source of long-range potentials in the Dirac operator. 

\subsection{The Dirac equation}
\label{section:intro_dirac_eq}
On $B$, $\Delta_r>0$ and the coordinate $t$ is a ``time function'',  providing a foliation $(\Sigma_t)_{t\in \mathbb{R}}$ of $B$ into spacelike Cauchy hypersurfaces.  $B$ is therefore an orientable globally hyperbolic 4-manifold and as such, by a result due to R. Geroch~\cite{Geroch:1968aa,Geroch:1970aa}, possesses a global spin structure; required for the Dirac equation.

The Dirac equation is most conveniently expressed with Penrose's abstract index notation (cf.~\cite{Penrose:1984aa}). Let $\mathbb{S}^A$ be the module of sections of the two-spinor bundle $\mathbb{S}$ and, $\mathbb{S}^{A'}$, that of the pointed two spinor bundle $\mathbb{S}'$; lowered indices are used for sections of the dual bundles. We recall that $\mathbb{S}^A$ is identified with $\mathbb{S}^{A'}$ via complex conjugation and to $\mathbb{S}_A$ via the canonical symplectic form $\varepsilon_{AB}$ according to~:
\begin{equation*} \left\{\begin{array}{l} \kappa_B = \kappa^A\varepsilon_{AB}=-\varepsilon_{BA}\kappa^A \\ \kappa^{A'}=\overline{\kappa^A} \end{array}\right.  , \quad \kappa^A\in \mathbb{S}^A.\end{equation*}
The bundle $\mathbb{S}\otimes\mathbb{S}'$ can be identified with the complexified tangent bundle $\mathbb{C}\otimes TB$ and finally:  \[\varepsilon_{AB}\varepsilon_{A'B'}=g_{ab}.\]

\noindent Following~\cite{Nicolas:2002aa}, we will refer to elements of $\mathbb{S}_A\oplus \mathbb{S}^{A'}$ as \emph{Dirac spinors}, the massive Dirac equation for a spin-$\frac{1}{2}$ Dirac spinor $(\phi_A,\chi^{A'})$ is then~:
 \begin{equation}
 \label{eq:dirac}
 \left\{ \begin{array}{lcr} \nabla^{AA'} \phi_A& =& \mu \chi^{A'} \\ \nabla_{AA'} \chi^{A'} &=& -\mu \phi_A \ \end{array}, \quad\mu=\frac{m}{\sqrt{2}}\right. .
 \end{equation}

As mentioned in the introduction, it is well known that the equation has a conserved current, namely: \[ j_{AA'}=\phi_A \bar{\phi}_{A'}+\chi_{A'}\bar{\chi}_{A}.\] Thus the total charge:
\begin{equation}\label{eq:charge} Q=\int_{\Sigma_t} T^a j_a \omega_{g,\Sigma_t},\end{equation}
is conserved. $\omega_{g,\Sigma_t}=\sqrt{\frac{\Delta_r}{\Delta_\theta}}\frac{\rho\sigma}{(r^2+a^2)\Xi^2}\textrm{d}r^*\wedge(\sin\theta\textrm{d}\theta\wedge \textrm{d}\varphi)$ is the induced volume form on $\Sigma_t$\footnote{Oriented by $-\nabla t$.} and $T^a$ is colinear to $\nabla^a t$ and normalised, for convenience, such that $T^aT_a=2$. 

$Q$ defines an inner product on spinors defined on any slice\footnote{These can be thought of as either sections of the pullback bundle of $\mathbb{S}$ via the canonical injection, or, sections of the spinor bundle on $\Sigma_s$; there is an identification between them since $\dim B=4$. \iffalse The underlying reason for this is that $\mathbb{C}\otimes\mathfrak{su}_2=\mathfrak{sl}_2.$ \fi}, $\Sigma_t$, $t\in \mathbb{R}$, and gives rise to a Hilbert space $\mathscr{H}_t$.
Solving the Dirac equation can be thought of as finding a family of isometries $U(u,s): \mathscr{H}_s\mapsto\mathscr{H}_u$ such that for any $u,s,w\in \mathbb{R}$~: \[ U(s,s)=\textrm{Id},\quad U(u,s)U(s,w)=U(u,w). \]
The framework sketched here can nevertheless be significantly simplified since $\partial_t$ is a global Killing field on $B$. All slices $\Sigma_t$ are thus isometric, in particular, $B$ is homeomorphic to $\mathbb{R}\times\Sigma$ for some fixed $\Sigma$. Furthermore, the $\mathscr{H}_t$ can all be identified and so one can view the problem as an evolution problem on a fixed Hilbert space $\mathscr{H}$.
For these reasons, we will seek expressly to write the Dirac equation as a Schrödinger type equation. Moreover, we will work directly with spinor densities\footnote{An orientation on $\Sigma$ can be seen as a bundle morphisme between $\Lambda^nT^*\Sigma$ and the density bundle.} on $\Sigma$~:
\[(\phi_A,\chi^{A'})|\omega_{g,\Sigma}|^{\frac{1}{2}}.\]
After a choice of spin-frame, this means that our Hilbert space $\mathscr{H}$ can be assimilated with $L^2(\Sigma)\otimes\mathbb{C}^4=L^2(\mathbb{R}_{r^*}\times S^2)\otimes \mathbb{C}^4$ equipped with its natural inner product~:
\[(\phi,\psi)=\int \langle\phi,\psi\rangle_{\mathbb{C}^4} \textrm{d}r^*\textrm{d}\Omega, \quad \textrm{d}\Omega=\sin\theta\textrm{d}\theta\textrm{d}\varphi.\]
We refer to~\cite{Nicolas:2002aa} for a more detailed discussion on the framework outlined above.
\subsubsection{Spin frame}
In the study of the massless Dirac equation in the case of a slow Kerr black hole~\cite{Nicolas:2004aa}, it was remarked that in order to avoid some artificial long-range terms one should choose a spin-frame determined up to sign by a local orthonormal frame of $TB$ that is aligned with the foliation of $B$ , in the sense that the timelike vector should be parallel to $\nabla^a t$. Since ${\nabla^a t}^\perp = \text{span}(\partial_r, \partial_\theta, \partial_\varphi)$, we make the simplest choice:
\[\begin{gathered}{g'}_0^a= \frac{\nabla^a t}{\sqrt{|\nabla^a t\nabla_a t|}},\,\, {g'}_{1}^a \frac{\partial}{\partial x^a}= \frac{1}{\sqrt{-g_{rr}}}\partial_{r},\\ {g'}_{2}^a\frac{\partial}{\partial x^a}=\frac{1}{\sqrt{-g_{\theta\theta}}}\partial_\theta, \,\,{g'}_{3}^a\frac{\partial}{\partial x^a}=\frac{1}{\sqrt{-g_{\varphi\varphi}}}\partial_\varphi. \end{gathered}\]
With this choice of spin-frame and trivialising the density bundle on $\Sigma$ with respect to the density $|\textrm{d}r^*\wedge\textrm{d}\Omega|^{\frac{1}{2}}$ the Dirac equation can be written $\displaystyle i\frac{\partial \Phi}{\partial t}=H\Phi$ with the operator $H$ given by:
\begin{equation} \label{def_H}\begin{gathered}H= \frac{\Delta_r\sqrt{\Delta_\theta}}{\Xi\sigma}\Gamma^1D_r + \frac{\sqrt{\Delta_r}\Delta_\theta}{\Xi\sigma}\mathfrak{D}_{S^2}+\frac{a q^2\rho^2}{\sigma^2}D_\varphi \\+\frac{\sqrt{\Delta_r\Delta_\theta}}{\sigma\sin\theta}\left(\frac{\rho^2}{\sigma}-\frac{\sqrt{\Delta_\theta}}{\Xi} \right)\Gamma^3 D_\varphi -\frac{i\sqrt{\Delta_r\Delta_\theta}\rho}{\sigma\Xi} \tilde{V_1}+  \frac{\sqrt{\Delta_r\Delta_\theta}}{\Xi\sigma}\rho m\Gamma^0. \end{gathered} \end{equation}
In the above, we have adopted similar notations to \cite{thdaude}: 
\begin{equation*} D_\varphi = -i\partial_\varphi, \quad D_r = -i\partial_r, \quad D_\theta=-i\partial_\theta, \end{equation*}
$\mathfrak{D}_{S^2}$ is the Dirac operator on the sphere $S^2$~: \[ \mathfrak{D}_{S^2} = \left(D_\theta -i\frac{\textrm{cotan}\theta}{2}\right)\Gamma^2 +\frac{D_\varphi}{\sin\theta}\Gamma^3,\]
the matrices $\Gamma^i$ are defined by~: \begin{equation*} \begin{gathered}\Gamma^0=i\left(\begin{array}{cc} 0 & I_2 \\ -I_2 & 0 \end{array} \right),\, \Gamma^1=\textrm{diag}(-1,1,1,-1), \\ \Gamma^2=\left(\begin{array}{cc}  -\sigma_x&0 \\ 0 & \sigma_x \end{array} \right), \, \Gamma^3=\left(\begin{array}{cc} \sigma_y & 0 \\ 0 &-\sigma_y \end{array} \right). \end{gathered} \end{equation*} 
Defining an operation $ c \boxtimes A$ with $c\in \mathbb{C}$ and $A=\left(\begin{array}{cc}A_1 & 0 \\ 0 & A_2 \end{array}\right)$ a block-diagonal matrix by:
\[c\boxtimes A = \left(\begin{array}{cc} cA_1 & 0 \\ 0 & \bar{c} A_2 \end{array} \right),\]
the potential $\tilde{V}_1$ can be written~:
\begin{equation*}\begin{gathered}  \tilde{V}_1 =  \tilde{F} \boxtimes \Gamma^1 + \left(\tilde{G} -\frac{\textrm{cotan}\theta\sqrt{\Delta_\theta}}{2\rho}\right)\boxtimes\Gamma^2, \end{gathered},\end{equation*}
where:
\begin{equation*}
\begin{gathered}
\tilde{F}= \frac{i\sqrt{\Delta_r}a\cos\theta}{2\rho^3} -\frac{ia\sqrt{\Delta_r}(r^2+a^2)\cos\theta\Xi}{2\sigma^2\rho} \hspace{1in}\\\hspace{2in}+\frac{\sqrt{\Delta_r}a^2\sin^2\theta}{2\rho\sigma^2(r^2+a^2)}\left(\frac{\Delta'_r}{2}(r^2+a^2)-2r\Delta_r \right),
\\
\tilde{G}=\frac{ia\Delta_\theta\sin^2\theta r + \cos\theta\rho^2\Xi - 3a^2l^2\sin^2\theta\cos\theta\rho^2}{2\sqrt{\Delta_\theta}\sin\theta\rho^3} \hspace{.8in} \\ \hspace{2in}+\frac{\sqrt{\Delta_\theta}a^2\sin\theta\cos\theta}{2\rho\sigma^2}\left( (r^2+a^2)\Xi - 2Mr\right) \\\hspace{2in}-\frac{ia\sin\theta\sqrt{\Delta_\theta}}{2\sigma^2\rho}\left(2r\Delta_r - \frac{\Delta'_r}{2}(r^2+a^2) \right),
\end{gathered}
\end{equation*}
and finally~:
\[\sigma^2=\Delta_\theta(r^2+a^2)^2-\Delta_ra^2\sin^2\theta. \]

For computational purposes it is worth noting that the operation $\boxtimes$ enjoys the following properties~:
\begin{enumerate}
\item $\boxtimes$ is distributive with respect to addition,
\item It is $\mathbb{C}$-homogenous in $A$ and $\mathbb{R}$-homogenous in $c$,
\item $(c\boxtimes A)^* = \bar{c} \boxtimes A^*$,
\item If $c\in \mathbb{R}$, $c\boxtimes A= cA$., 
\item If A is hermitian, $(-i(c\boxtimes A))^* = -i(c\boxtimes A) +2i \Re(c)A$.
\end{enumerate} 

The details of the calculation leading to this expression are sketched in Appendix~\ref{app:dirac_equation}.

 \section{Analytic framework}\label{analytic_framework}
\subsection{Symbol spaces}
We recall that $\mathscr{H}$ denotes the Hilbert space~:\[L^2(\Sigma)\otimes\bbC^4  \cong L^2(\bbR_{r^*}\times S^2)\otimes\bbC^4,\] equipped with its natural scalar product. In what follows we will study the operator $H$ defined in~\eqref{def_H} on $\mathscr{H}$ as a perturbation of another operator. In order to have a succinct language in which to distinguish the asymptotic behaviour of the coefficients of $H$, we introduce the following symbol spaces:
\begin{equation*}\Pi=\left\{ f\in C^\infty(\Sigma), \partial^{\alpha_1}_r\partial^{\alpha_2}_\theta\partial^{\alpha_3}_\varphi f\circ \psi^{-1} \in L^{\infty}(]r_e,r_+[\times S^2), \alpha_i\in \mathbb{N} \right\}.\end{equation*}
For $(m,n)\in\mathbb{N}^2$:
\begin{equation*}\bm{S}^{m,n}= \left\{ f\in C^\infty(\Sigma), \partial^{\alpha_1}_{r^*}\partial^{\alpha_2}_\theta\partial^{\alpha_3}_\varphi f \circ {\psi^*}^{-1}=\left\{\begin{array}{c} \underset{r^*\to +\infty}{O}\left( e^{-m\kappa r^*} \right) \\ \underset{r^*\to -\infty}{O} \left( \frac{1}{{r^*}^{n+\alpha_1}} \right)\end{array} \right.\alpha_i\in \mathbb{N} \right\}.\end{equation*}
$\psi$ and $\psi^*$ denote the coordinate charts $(r,\theta,\varphi)$ and $(r^*,\theta,\varphi)$ respectively and $\kappa$ is defined by:
\begin{equation} \kappa=\frac{l}{\Xi\eta_+}\sqrt{\frac{M}{r_e}}.\end{equation}
By extension, if $M \in C^\infty(\Sigma)\otimes M_4(\mathbb{C})$,we will also write $M\in \bm{S}^{m,n}$ (resp. $M\in \Pi$) if the operator norm of the matrix $M$, $||M||$, is an element of  $\bm{S}^{m,n}$ (resp. $\Pi$); this is of course equivalent to the requirement that each of its components satisfies the appropriate condition. Finally, we define:
\begin{equation}\bm{S}^{\infty,n} = \bigcap_m \bm{S}^{m,n}, \quad \bm{S}^{m,\infty}= \bigcap_n \bm{S}^{m,n}.   \end{equation}
Many of the functions $f$ at hand will be naturally expressed in the coordinate chart $\psi$, the following results will enable us to infer rapidly the asymptotic behaviour of the function when expressed in the chart $\psi^{*}$. The only missing information is the relationship between partial derivatives with respect to $r$ and those with respect to $r^*$.  From~\eqref{eq:rrstar}, one has:
\begin{equation}\partial_{r^*}=\frac{\Delta_r}{\Xi(r^2+a^2)}\partial_r. \end{equation}
So the question is settled by~:
\begin{lemme}
\label{lemme:asymptotic_behaviour_symbol}
Define the map $\alpha$ on $\Sigma$ by its coordinate expression: $\alpha\circ \psi^{-1} =\frac{\Delta_r}{\Xi(r^2+a^2)}$, then $\alpha \in \bm{S}^{2,2}. $
\end{lemme}
\begin{proof}
cf. Appendix~\ref{app:asymp_behav}.
\end{proof}
One can now use the Faà di Bruno formula\footnote{See, appendix~\ref{app:fdbformula}} to show that:
\begin{lemm}
\begin{equation} 
\label{eq:daabruno}
f\in \Pi \Rightarrow f\in \bm{S}^{0,0}, \, \partial_{r^*}f \in \bm{S}^{2,2}.
\end{equation}
In particular, if $f\in \Pi$ and $f(r^*)=\underset{r^*\to -\infty}{O}(\frac{1}{r^*})$ then $f \in \bm{S}^{0,1}$.
\end{lemm}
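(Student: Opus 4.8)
The plan is to prove both assertions of \eqref{eq:daabruno} by transferring derivatives from the $\psi$-chart to the $\psi^*$-chart via the chain rule, using the change of variables $\partial_{r^*} = \alpha\,\partial_r$ with $\alpha \in \bm{S}^{2,2}$ from Lemma~\ref{lemme:asymptotic_behaviour_symbol}. The first claim, $f\in\Pi \Rightarrow f\in\bm{S}^{0,0}$, is essentially immediate: membership in $\Pi$ says that every $r,\theta,\varphi$-derivative of $f\circ\psi^{-1}$ is bounded on $]r_e,r_+[\times S^2$. We must recompute this in the chart $\psi^*$, i.e. control $\partial_{r^*}^{\alpha_1}\partial_\theta^{\alpha_2}\partial_\varphi^{\alpha_3} f\circ{\psi^*}^{-1}$. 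Since the $\theta$- and $\varphi$-directions are untouched by the coordinate change, only the $r^*$-derivatives need attention; writing $\partial_{r^*}^{\alpha_1} = (\alpha\partial_r)^{\alpha_1}$ and expanding (each $\partial_r$ may land on a factor of $\alpha$ or pass through), one gets a finite sum of terms each of which is a product of derivatives of $\alpha$ times a derivative of $f$ of order $\le \alpha_1$ in $r$. Derivatives of $f$ in $r$ are bounded (as $f\in\Pi$), and $\alpha$ together with all its $r$-derivatives is bounded on the relevant range — indeed $\alpha\in\bm{S}^{2,2}\subset\bm{S}^{0,0}$ — so the product is $O(1)$ as $r^*\to+\infty$, and $O(1) = O(e^{-0\cdot\kappa r^*})$ trivially; likewise $O(1)=O(1/{r^*}^{0})$ as $r^*\to-\infty$. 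Hence $f\in\bm{S}^{0,0}$.

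For the second claim, $\partial_{r^*}f\in\bm{S}^{2,2}$, the key point is that applying $\partial_{r^*}$ brings in exactly one factor of $\alpha\in\bm{S}^{2,2}$. More precisely, I would compute $\partial_{r^*}^{\alpha_1+1}\partial_\theta^{\alpha_2}\partial_\varphi^{\alpha_3}(\partial_{r^*}f) = \partial_{r^*}^{\alpha_1+1}\partial_\theta^{\alpha_2}\partial_\varphi^{\alpha_3} f$ — wait, that is $\partial_{r^*}$ of $f$ differentiated, so really we are bounding $\partial_{r^*}^{\beta_1}\partial_\theta^{\alpha_2}\partial_\varphi^{\alpha_3} f$ for $\beta_1 = \alpha_1 + 1 \ge 1$. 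Expanding $\partial_{r^*}^{\beta_1} = (\alpha\partial_r)^{\beta_1}$ via the Faà di Bruno / Leibniz bookkeeping in Appendix~\ref{app:fdbformula}, every resulting term contains at least one full power of $\alpha$ (possibly differentiated) as a factor, multiplied by $r$-derivatives of the remaining $\alpha$'s and of $f$, all of which are in $\Pi$ hence $O(1)$. Since $\bm{S}^{2,2}$ is an ideal under multiplication by $\Pi$-functions — a fact one checks directly from the definitions: multiplying an $O(e^{-2\kappa r^*})$ (resp. $O(1/{r^*}^{2+\alpha_1})$) function by a bounded one preserves the bound, and the Leibniz rule distributes derivatives so the decay order is never degraded — each term lies in $\bm{S}^{2,2}$, and therefore so does the sum. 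One subtlety to address: in the $\bm{S}^{m,n}$ definition the negative-$r^*$ decay rate improves with the number of $r^*$-derivatives (the $n+\alpha_1$ in the exponent), so I must check that differentiating $\alpha\in\bm{S}^{2,2}$ further only helps; this is built into $\alpha\in\bm{S}^{2,2}$ itself. Thus $\partial_{r^*}f\in\bm{S}^{2,2}$.

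The final ``in particular'' statement then follows by combining the two: if $f\in\Pi$ with $f(r^*) = O(1/r^*)$ as $r^*\to-\infty$, then from $f\in\bm{S}^{0,0}$ we already have the exponential decay at $+\infty$ (with rate $0$, i.e. boundedness, which is all $\bm{S}^{0,1}$ requires at $+\infty$), while at $-\infty$ we are given $f = O(1/r^*)$ for $\alpha_1=0$, and for $\alpha_1\ge 1$ we have $\partial_{r^*}^{\alpha_1}f = \partial_{r^*}^{\alpha_1-1}(\partial_{r^*}f)$ which lies in $\bm{S}^{2,2}\subset\bm{S}^{0,1}$ by the second claim (here I use $\bm{S}^{2,2}\subseteq\bm{S}^{0,1}$, immediate from $e^{-2\kappa r^*}\le e^{0}$ near $+\infty$ for $r^*$ large and $1/{r^*}^{2+\alpha_1}\le 1/{r^*}^{1+\alpha_1}$ near $-\infty$); matching this against the requirement $\partial_{r^*}^{\alpha_1}f = O(1/{r^*}^{1+\alpha_1})$ in the definition of $\bm{S}^{0,1}$ closes the argument.

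The main obstacle is purely bookkeeping: organizing the expansion of $(\alpha\partial_r)^{\beta_1}$ cleanly enough to see that (a) every term carries an honest factor of $\alpha$ (this is what upgrades $\bm{S}^{0,0}$ to $\bm{S}^{2,2}$) and (b) the extra $r^*$-derivatives falling on $\alpha$ sharpen rather than spoil the polynomial decay at $-\infty$ — which is exactly the content of $\alpha\in\bm{S}^{2,2}$, whose index structure was designed for this. No genuinely hard analysis is involved; the Faà di Bruno formula of Appendix~\ref{app:fdbformula} and the fact that the $\bm{S}^{m,n}$ are modules over $\Pi$ and nested appropriately are the only ingredients. I would relegate the explicit combinatorics to the same appendix and keep the body of the proof to the two-line chain-rule observation plus the ideal property.
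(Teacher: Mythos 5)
Your overall route --- transferring derivatives through $\partial_{r^*}=\alpha\,\partial_r$ with $\alpha\in\bm{S}^{2,2}$ and doing Fa\`a di Bruno/Leibniz bookkeeping --- is exactly the paper's intended one, but the bookkeeping at $r^*\to-\infty$ is too crude, and as written it does not establish either claim. The definition of $\bm{S}^{m,n}$ demands a decay at $-\infty$ that \emph{improves with the number of $r^*$-derivatives}: $\partial_{r^*}^{\alpha_1}(\cdot)=O(1/{r^*}^{\,n+\alpha_1})$. For the first claim you verify only boundedness of all derivatives and conclude with ``$O(1)=O(1/{r^*}^{0})$'', i.e.\ you read the $\bm{S}^{0,0}$ condition at $-\infty$ as plain boundedness; in fact it requires $\partial_{r^*}^{\alpha_1}f=O(1/{r^*}^{\alpha_1})$ for every $\alpha_1$, which does not follow from $f\in\Pi$ alone. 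The same issue recurs in the second claim: expanding $(\alpha\partial_r)^{\beta_1}f$, each term has the form $\bigl(\prod_i\partial_r^{j_i}\alpha\bigr)\,\partial_r^{k}f$ with $\sum_i j_i+k=\beta_1$, and treating the $r$-derivatives of $\alpha$ and of $f$ as merely bounded while extracting ``one full power of $\alpha$'' yields only $O(1/{r^*}^{2})$ uniformly in $\beta_1$, whereas $\partial_{r^*}f\in\bm{S}^{2,2}$ requires $\partial_{r^*}^{\beta_1}f=O(1/{r^*}^{\beta_1+1})$. Your ``subtlety'' remark appeals to $\alpha\in\bm{S}^{2,2}$, but that hypothesis controls $r^*$-derivatives of $\alpha$, not the $r$-derivatives $\partial_r^{j}\alpha$ that actually appear in your expansion, for which $\Pi$ gives only $O(1)$. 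Relatedly, the ``ideal'' property $\bm{S}^{2,2}\cdot\Pi\subset\bm{S}^{2,2}$ you invoke is, once the $n+\alpha_1$ requirement is accounted for, equivalent to the full-strength first claim $\Pi\subset\bm{S}^{0,0}$, so as organized the argument is circular at that point.

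The gap is closed by two facts you never use: (i) $\partial_r\alpha=O(1/r^*)$ at $-\infty$, because $r_e$ is a \emph{double} root of $\Delta_r$, so $\Delta_r'=O(r-r_e)=O(1/r^*)$ --- this is precisely the estimate appearing in the appendix proof of Lemma~\ref{lemme:asymptotic_behaviour_symbol}; and (ii) in every term of the expansion of $(\alpha\partial_r)^{\beta_1}f$ at least one derivative falls on $f$, so $\sum_i j_i\le\beta_1-1$. Assigning decay weight $2$ to each undifferentiated $\alpha$, weight $1$ to each $\partial_r\alpha$, and weight $0$ to $\partial_r^{j}\alpha$ with $j\ge2$, the total decay of any term is at least $\sum_i(2-j_i)=2\beta_1-\sum_i j_i\ge\beta_1+1$, which is exactly the $O(1/{r^*}^{2+\alpha_1})$ needed for $\partial_{r^*}f\in\bm{S}^{2,2}$ (and a fortiori the $O(1/{r^*}^{\alpha_1})$ needed for $f\in\bm{S}^{0,0}$); at $+\infty$ the outermost, undifferentiated factor of $\alpha$ gives $O(e^{-2\kappa r^*})$ as you say. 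Alternatively, run the induction in the $r^*$-variable: write $\partial_{r^*}^{\beta_1}f=\partial_{r^*}^{\beta_1-1}(\alpha\,\partial_r f)$, apply the Leibniz rule in $r^*$, control $\partial_{r^*}^{j}\alpha$ by $\alpha\in\bm{S}^{2,2}$ and the other factor by the inductive hypothesis applied to $\partial_r f\in\Pi$. With either repair, your derivation of the ``in particular'' statement goes through.
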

\subsection{$\varphi$-invariance}
\label{phi-invariance}
The metric on $B$ does not depend on the coordinate $\varphi$; this invariance will be exploited in two ways in this paper. Firstly, diagonalising $D_{\varphi}$ with anti-periodic boundary conditions, any $\phi\in \mathscr{H}$ can be represented as:

\[ \phi(r,\theta,\varphi) = \sum_{p \in \mathbb{Z}+\frac{1}{2}} \phi_p(r,\theta)e^{i p \varphi}.\]
The subspaces of this Hilbert sum are stable under the action of $H$, and we could just consider the restriction of $H$ to any such subspace; this would enable us to treat the terms with factor $D_{\varphi}$ as potentials. However, some terms contain explicit coordinate singularities. To avoid technical difficulties due to this, it is more convenient to work with the operator $H^p$ formally defined on $\mathscr{H}$ by:

\begin{multline} H^p= \frac{\Delta_r\sqrt{\Delta_\theta}}{\Xi\sigma}\Gamma^1D_r + \frac{\sqrt{\Delta_r}\Delta_\theta}{\Xi\sigma}\mathfrak{D}_{S^2}-\frac{i\sqrt{\Delta_r\Delta_\theta}\rho}{\sigma\Xi} \tilde{V_1}\\+\frac{\sqrt{\Delta_r\Delta_\theta}}{\Xi\sigma}\rho m\Gamma^0 + \frac{a q^2\rho^2}{\sigma^2}p+\frac{\sqrt{\Delta_r\Delta_\theta}}{\sigma\sin\theta}\left(\frac{\rho^2}{\sigma}-\frac{\sqrt{\Delta_\theta}}{\Xi} \right)\Gamma^3 p. \end{multline}
The function $\frac{\sqrt{\Delta_r\Delta_\theta}}{\sigma\sin\theta}\left(\frac{\rho^2}{\sigma}-\frac{\sqrt{\Delta_\theta}}{\Xi} \right)$ is well-defined and bounded, because:
\[\frac{\rho^2}{\sigma}-\frac{\sqrt{\Delta_\theta}}{\Xi}=\frac{1}{\sigma\Xi}\left(\frac{\Xi^2\rho^4 - \Delta_\theta\sigma^2}{\Xi\rho^2+\sqrt{\Delta_\theta}\sigma}\right),\]
and:
\begin{equation*}\begin{aligned} \Xi^2\rho^4-\Delta_\theta\sigma^2 =a^2\sin^2\theta\bigg(\Delta_\theta\Delta_r &+2\Xi(r^2+a^2)(l^2r^2-1)\\&+a^2\sin^2\theta(\Xi^2-l^4(r^2+a^2)^2)\bigg).\end{aligned}\end{equation*}
$H^p$ coincides with $H$ on the subspace corresponding to the eigenvalue $p \in \mathbb{Z}+\frac{1}{2}$ of $D_{\varphi}$ and the coordinate singularity is absorbed into $\mathfrak{D}_{S^2}$ which is well-defined as an operator on the sphere. 

In later analysis, it will also prove convenient to rotate the coordinate system so as to cancel some of the effects of rotation at the double horizon. 
Setting $c_0= \frac{a}{r_e^2+a^2}$, the coordinate transformation~: 
\[ t'=t,\quad r^{*'}=r^*, \quad \theta'=\theta,\quad \varphi'=\varphi-c_0 t \]
Naturally, $\varphi$ and $\varphi'$ are circular coordinates. Due to the $\varphi$-invariance of the metric, $H^p$ transforms very little under this change of coordinates, in fact, it boils down to the substitution:
\[H^p \rightarrow H^p -c_0p.\]
From now on, unless otherwise stated, we will work in the \emph{rotated} coordinates. For convenience however, we will continue to call $\varphi$ the new circular coordinate $\varphi-c_0t$. Thanks to the $\varphi$-invariance of our problem this should not cause any confusion.
\subsection{A comparison operator}
\label{comparison_operator}
Almost all the operators we will study in this paper are perturbations of a single operator $H_0$ given by:
\begin{align} 
H_0 = \Gamma_1 D_{r^*} + g(r^*)\mathfrak{D}+ f(r^*). \end{align}
The functions $g$ and $f$ satisfy:
\begin{align} g(r^*)=\frac{\sqrt{\Delta_r}}{\Xi(r^2+a^2)}\in \bm{S}^{1,1}, \quad f(r^*)=\frac{ap}{r^2+a^2}-\frac{ap}{r_e^2+a^2}\in \bm{S}^{0,1}, \end{align}
whilst, the operator $\mathfrak{D}$ is defined by:
\begin{align}
 \mathfrak{D}=\Delta_\theta^{\frac{1}{4}}\mathfrak{D}_{S^2}\Delta_\theta^{\frac{1}{4}}. 
\end{align}
The structure of this comparison operator is very similar to that of those used in~\cite{Nicolas:2004aa,thdaude}, except that, here, the angular part $\mathfrak{D}$ is a perturbation of the Dirac operator on the sphere $\mathfrak{D}_{S^2}$, rather than $\mathfrak{D}_{S^2}$ itself. The spectral properties of the latter, which are well-documented\footnote{see, for example~\cite{Abrikosov:2002aa,Camporesi:1996aa,Trautman:1993aa} }, were quite essential to the analysis in~\cite{Nicolas:2004aa,thdaude}, luckily, $\mathfrak{D}$ shares many of them.
\begin{lemme}
\label{lemme:self_adjoint_angular}
Let $S$ be the self-adjoint extension in $L^2(S^2)\otimes\mathbb{C}^2$ of the operator:
\[(D_\theta - i\frac{\cot \theta}{2})\sigma_x -\frac{D_{\varphi}}{\sin \theta}\sigma_y,\]
defined on the subset of $[C^{\infty}(S^2)]^2$ with anti-periodic boundary conditions in $\varphi$. Denoting its domain $D(S)$, $\tilde{S} = \Delta_\theta^{\frac{1}{4}}S\Delta_\theta^{\frac{1}{4}}$ is self-adjoint on $D(S)$ and has compact resolvent.
\end{lemme}
\begin{proof}
$S$ has a core consisting of smooth functions on which a simple calculation shows that:
\[\tilde{S}= \sqrt{\Delta_\theta}S -\frac{i}{2}\frac{a^2l^2\cos\theta\sin\theta}{\sqrt{\Delta_\theta}}\sigma_x. \]
The expression extends to all of $D(S)$ by continuity in the graph topology.
The estimates:
\begin{equation}
\begin{aligned} 0 \leq \sqrt{\Delta}_\theta -1 \leq \frac{\Delta_\theta -1}{\sqrt{\Delta_\theta}+1} &\leq \frac{a^2l^2}{2}, \\
\left\lVert i\sigma_x\frac{a^2l^2\cos\theta\sin\theta}{2\sqrt{\Delta_\theta}}u\right\rVert^2 &\leq \frac{a^4l^4}{4} ||u||^2, \quad u\in L^2(S^2,\mathbb{C}^2),
 \end{aligned}
 \end{equation}
together imply for $u\in D(S)$:
 \begin{equation} \label{eq:kato} \left\lVert(\sqrt{\Delta_\theta}-1)Su -i\sigma_x\frac{a^2l^2\cos\theta\sin\theta}{2\sqrt{\Delta_\theta}}u\right\rVert \leq \frac{a^2l^2}{2}\left( ||Su|| + ||u||  \right). \end{equation}
It is easy to see from~\eqref{condition_al} that $\frac{a^2l^2}{2} < 1$. Thus, by the Kato-Rellich Perturbation Theorem~\cite{Lax:2002aa, Kato:1980aa}, $\tilde{S}$ is self-adjoint on ${D}(S)$. 

In order to show that $\tilde{S}$ has compact resolvent, it suffices to show that there is a $z \in \rho(\tilde{S})$ such that $R(\tilde{S},z)$ is compact, for, by the resolvent identity, the property will follow for all $z\in \rho(\tilde{S})$. In fact, in this perturbation theory setup, it is sufficient to show that there is some $z\in \rho(S)$ such that the following inequality holds:
\begin{equation} \label{eq:resolvent_inequality} \frac{a^2l^2}{2}||R(z,S)|| + \frac{a^2l^2}{2}||SR(z,S)|| <1, \end{equation} 
where $R(z,S)$ denotes the resolvent of the operator $S$ at $z$. Indeed, assuming~\eqref{eq:resolvent_inequality}, it follows from~\eqref{eq:kato} that for any $u \in L^2(S^2,\mathbb{C}^2)$:
 \[||(\tilde{S}-S)R(z,S)u|| \leq \frac{a^2l^2}{2} ||SR(z,S)u|| + \frac{a^2l^2}{2}||R(z,S)u|| < ||u||.\]
$(\tilde{S}-S)R(z,S)$ is therefore a bounded linear operator and $I+(\tilde{S}-S)R(z,S)$ is invertible with bounded inverse. Moreover: \[ \tilde{S}-zI = S +\tilde{S}-S - zI = (I + (\tilde{S}-S)R(z,S))(S-zI).\] Consequently, $\tilde{S} -zI$ has bounded inverse given by:
\[ R(z,S)(I + (\tilde{S}-S)R(z,S))^{-1}.\]
$R(z,S)$ is compact because $S$ has compact resolvent, so $(\tilde{S}-zI)^{-1}=R(z,\tilde{S})$ is compact.

We now show there is $z\in \rho(S)$ such that~\eqref{eq:resolvent_inequality} is satisfied. By self-adjointness, it suffices to seek $z$ of the form $z=ic$. A classical resolvent estimate shows then that: $||R(z,S)||\leq \frac{1}{|c|}$ so that $||R(z,S)||$ is arbitrarily small for $|c|$ large enough. Furthermore, for any $z\in\rho(S)$ we have $||SR(z,S)|| \leq 1$, since $\frac{a^2l^2}{2}< \frac{1}{2}$,~\eqref{eq:resolvent_inequality} holds for any $|c|>2$.
\end{proof}

\begin{lemme}
\label{lemme:spectreD}
Let $\tilde{S}$ be as in Lemma~\ref{lemme:self_adjoint_angular}, the following properties hold:
\begin{itemize}
\item $-\sigma(\tilde{S})= \sigma(\tilde{S})$,
\item $\sigma(\tilde{S})\cap ]-1,1[ =\emptyset$.
\end{itemize} 
In particular, the eigenvalues $(\lambda_k)_{k\in \mathbb{Z}^*}$ can be indexed by $\mathbb{Z}^*$, in such a way that $\lambda_{-k}=-\lambda_k$ for each $k\in \mathbb{Z}^*$. Furthermore, for each $k\in \mathbb{Z}^*$, there is a subset $J_k \subset \mathbb{Z}+\frac{1}{2}$, such that for each $n\in J_k$ one can find  $\psi_{k,n}(\theta,\varphi)=\left(\begin{array}{c} \alpha_{k,n}(\theta) \\ \beta_{k,n}(\theta) \end{array}\right)e^{in\varphi}\in L^2(S^2,\mathbb{C}^2), ||\psi_{k,n}||=1$, unique up to a complex phase, satisfying $\tilde{S}\psi_{k,n}=\lambda_k\psi_{k,n}$. Necessarily, these form a total orthonormal family of eigenvectors for $\tilde{S}$. \end{lemme}
\begin{proof}
%
%
%
To prove that the spectrum of $\tilde{S}$ is disjoint from the open unit interval, it is sufficient to notice that, as a quadratic form, $\tilde{S}^2\geq 1$. Indeed, for any $u\in D(S)$:
\begin{align} (\tilde{S}u,\tilde{S} u) &= (\sqrt{\Delta_\theta}S\Delta_\theta^{\frac{1}{4}}u,S\Delta_\theta^{\frac{1}{4}}u) )\geq ||u||^2,\end{align}
because $\Delta_\theta\geq 1$. The other points will be proved in a slightly more involved case in Section~\ref{sec:Hx}. 
\end{proof}
Due to the block diagonal form of $\mathfrak{D}$, the following is an immediate consequence of the above:
\begin{corollaire}
The family:
\[\left\{ \psi^{+}_{k,n} = \left(\begin{array}{c} \psi_{k,n} \\ 0 \end{array}\right), \psi^{-}_{k,n} = \left(\begin{array}{c} 0 \\  \psi_{k,n}\end{array}\right), k \in \mathbb{Z}^*, n \in J_k\right\},\]
is a total orthonormal family of eigenvectors of $\mathfrak{D}$.
\end{corollaire}

These results are sufficient to construct a natural decomposition of $\mathscr{H}$ that can be used to obtain a convenient representation of the operator $H_0$. However, we begin by noting that the subspaces $L^2(\mathbb{R})\otimes\textrm{span}\{ \psi^+_{k,n}, \psi^-_{k,n}\}, k\in \mathbb{Z}^*, n \in J_k$, are not stable under the action of $\Gamma^1$. Indeed, if $\psi$ is an eigenvector with eigenvalue $\lambda$ of $\mathfrak{D}$, then, since $\Gamma^1$ anti-commutes with $\Gamma^2$ and $\Gamma^3$, $\Gamma^1\psi$ is an eigenvector with eigenvalue $-\lambda$. In particular, the block diagonal form of $\Gamma^1$ implies that $\Gamma^1\psi_{k,n}^\pm$ and $\psi_{-k,n}^\pm$ must be colinear (because $\psi_{k,n}$ is unique up to scaling). In fact, $\Gamma^1$ being unitary and symmetric, one has $\Gamma^1\psi_{k,n}^\pm=\pm\psi_{-k,n}^\pm$. The family $\psi_{k,n}$ remains total and orthonormal if $\psi_{-k,n}$ is rescaled to absorb the sign, so one can assume that: $\Gamma^1\psi_{k,n}^\pm=\psi_{-k,n}^\pm$.
The subspaces: \[\mathscr{H}_{k,n}=L^2(\mathbb{R})\otimes\textrm{span}\left\{ \psi^+_{k,n}, \psi^+_{-k,n}, \psi^{-}_{k,n}, \psi^{-}_{-k,n} \right\}, k \in \mathbb{N^*}, n \in J_k,\] 
are then naturally stable under $\Gamma^1$ and therefore, under $H_0$, and $\displaystyle \mathscr{H}= \overset{\perp}{\bigoplus_{k,n}} \mathscr{H}_{k,n}$. 
For each $(k,n)$, $\mathscr{H}_{k,n}$ can be isometrically identified to $ [L^2(\mathbb{R})]^4$ by the map:
\begin{equation}\label{def_func_b_kn} \begin{array}{rcl} b_{k,n}: \hspace{.5in} \mathscr{H}_{k,n} &\longrightarrow& [L^2(\mathbb{R})]^4 \vspace{.1in} \\  \begin{split} u_1\psi^{+}_{k,n} + u_2 \psi^{+}_{-k,n} \\+ u_3 \psi^{-}_{k,n} + u_4 \psi^{-}_{-k,n} \end{split} &\longmapsto& \frac{1}{\sqrt{2}}\left( \begin{array}{c} u_1 - u_2 \\ u_1 + u_2 \\ u_3+u_4 \\ u_3-u_4 \end{array} \right) \end{array}. \end{equation}
Through this identification the restriction, $H_0^{k,n}$, of $H_0$ to $\mathscr{H}_{k,n}$ can be written:
\begin{equation} \label{eq:decompH0} H_0^{k,n}=\Gamma^1D_{r^*} -\lambda_{k,n}g(r^*)\Gamma^2 + f(r^*). \end{equation}
This is clearly a bounded perturbation of the self-adjoint operator $\Gamma^1D_{r^*}$ with domain $[H^1(\mathbb{R})]^4$, hence it is self-adjoint on the same domain.

We are now ready to use the lemma below\footnote{see \cite[Lemma~3.5]{Nicolas:2004aa} and Appendix~\ref{app:proof_lemme_fin_2}} to obtain a description of a domain where the formal expression for $H_0$ is self-adjoint.
\begin{lemme}
\label{lemme-decomposition}
Let $X$ be a Hilbert space and $(X_n)_{n \in \mathbb{N}}$ a family of subspaces of $X$ such that:
\begin{equation*} X=\underset{{n\in\mathbb{N}}}{\overset{\perp}{\bigoplus}} X_n, \end{equation*}
where the sum is topological. Let $(A_n)_{n \in \mathbb{N}}$ be a sequence of operators $A_n$ on $X_n$, such that for each $n$, $A_n$ is self-adjoint on its domain $D(A_n)$. Then the operator $A$ defined by: \[Ax=\sum_n A_nx_n,\] if $x=\sum x_n,  x_n \in X_n$ for any $n\in\mathbb{N}$ is self-adjoint on: \[D(A) = \left\{ x= \sum_n x_n \in X, \sum_{n\in\mathbb{N}} ||A_nx_n||^2 < \infty \right\}.\]
\end{lemme}
%

The natural domain for $H_0$, which is always meaningful in the distributional sense, is certainly $\{ u \in \mathscr{H}, H_0 u \in \mathscr{H} \}$, this, in fact, coincides with the domain of the operator given by the previous lemma: \[D(H_0)= \{ u= \sum_{k,n} u_{k,n} \in \mathscr{H}, \sum_{k,n} ||H_0^{k,n}u_{k,n}||^2 < \infty \}.\]  The proof is analogous to that of \cite[Lemma 3.5]{Nicolas:2004aa}.

Since for each $k\in\mathbb{N^*},n\in J_k$ , $D\left(H_0^{k,n}\right)$ is isometric to $[H^1(\mathbb{R})^4]$, and  $\mathscr{S}(\mathbb{R})$ is dense in $H^1(\mathbb{R})$, we deduce immediately a core for $H_0$, that we will simply denote by $\mathscr{S}$. This core will be convenient for many computations, in particular, it will justify the use of the Leibniz rule when computing commutators. More precisely:
\begin{lemme}
\label{lemme:core_S}
$\displaystyle \mathscr{S}=\overset{\perp}{\bigoplus_{k,n}}\mathscr{S}(\mathbb{R})\otimes\textrm{span}\left\{ \psi^+_{k,n}, \psi^+_{-k,n}, \psi^{-}_{k,n}, \psi^{-}_{-k,n} \right\}$ is a core for $H_0$.
\end{lemme}
\begin{proof}cf. Appendix~\ref{app:proof_lemme_fin_2}\end{proof}


\subsection{Short and long-range potentials}
The construction of the wave operators, modified or not, will mainly be based on Cook's method\footnote{See for example \cite[Chapter~37]{Lax:2002aa} } or minor variations thereof. Because of this, it will be interesting to investigate the integrability of the matrix-valued coefficients appearing in our differential operators. Amongst those, we will call ``potentials'', the parts of the order $0$ component of its symbol that vanish on the horizons. For our purposes, they will be split into merely three groups. Namely a potential $V$ is: \begin{itemize} \item short-range at $+ \infty$ (resp. $-\infty$) if:
 \begin{align}\label{eq:potentiels} \underset{r^*\geq 0, \vartheta\in S^2}{\sup} ||\langle r^*\rangle^\alpha V|| < + \infty \quad \textrm{(resp.$\underset{r^*\leq 0, \vartheta\in S^2}{\sup} ||\langle r^*\rangle^\alpha V|| < + \infty$)} \end{align} for some $\alpha>1$, \item long-range otherwise, \item of Coulomb-type at $+\infty$ (resp. $-\infty$) if $V$ is long-range there and~\eqref{eq:potentiels} holds with $\alpha=1$. \end{itemize}
The norm here is the operator norm on $M_4(\mathbb{C})$ and $\bra{.}$ denotes the Japanese bracket $\bra{r}=\sqrt{r^2+1}$. 
In relation with the symbol spaces we introduced previously, let $m,n\in \mathbb{Z}$ and suppose $V\in \bm{S}^{m,n}$, then:
\begin{itemize}
\item$m\geq 1$ $\Rightarrow$ $V$ short-range at $+\infty$,
\item$n\geq 2$ $\Rightarrow$ $V$ short-range at $-\infty$,
\item$n=1$ $\Rightarrow$ $V$ of Coulomb type at $-\infty$. 
\end{itemize}
\subsection{Self-adjointness of $H^p$}
\label{self-adjointness}
It is now relatively easy to prove the self-adjointness of $H^p$, we first introduce the function:
\begin{equation}\label{eq:defh} h(r,\theta)= \Delta_\theta^{\frac{1}{4}}\sqrt{\frac{r^2+a^2}{\sigma}}, \end{equation}
it satisfies the following properties:
\begin{align}
&|h^2-1|\leq 1-a^2l^2<1,\\
&  \label{eq:hprop2} \partial_\theta h = \Delta_r\frac{(r^2+a^2)a^2\sin\theta\cos\theta \Xi}{2h\sqrt{\Delta_\theta}\sigma^3} \in \textbf{S}^{2,2}. \end{align}
\begin{proof} 
The first property follows from the following chain of inequalities:
\begin{equation*}
\begin{aligned}
0 \leq h^2-1 &= \frac{\Delta_r a^2\sin^2\theta}{\sigma\left(\sigma +\sqrt{\Delta_\theta}(r^2+a^2)\right)}\\ &\leq \frac{\Delta_r a^2\sin^2\theta}{\sigma^2} \\
&\leq \frac{a^2}{r^2} \\&\leq \frac{a^2}{r_e^2}=\frac{6a^2l^2}{1-a^2l^2-\sqrt{(1-a^2l^2)^2-12a^2l^2}}
\leq 1-a^2l^2.\end{aligned}
\end{equation*}
By Equation~\eqref{condition_al}, $1-a^2l^2<1$, the conclusion follows.
\end{proof}
The boundedness of $\partial_{r^*} h=\frac{\Xi\Delta_r}{r^2+a^2}\partial_{r}h$ and $\partial_\theta h$ shows that $h\in B(D(H_0))$. Indeed, $[H_0,h]$ is defined on $D(H_0)$ and:
\begin{align*} [H_0,h]u=-i\Gamma^1\partial_{r^*} hu -i \frac{\sqrt{\Delta_\theta\Delta_r}}{\Xi(r^2+a^2)}\Gamma^2\partial_\theta h u, \quad u \in D(H_0).
\end{align*}
Consequently, for any $u\in {D}(H_0)$: \begin{equation} ||H_0hu|| \leq ||hH_0u|| + ||[H_0,h]u|| \leq C(||H_0u||+||u||),\end{equation}
for some constant $C\in \mathbb{R}_+^*$. The following relationship between $H_0$ and $H^p$ is therefore meaningful:  
\begin{align}\label{eq:relH} H^p= hH_0h  + V_C + V_S, \end{align}
with:
\begin{multline}
\label{eq:defvs}
V_S =-\frac{ap\sqrt{\Delta_\theta}}{\sigma} + \frac{a\Delta_\theta(r^2+a^2)p}{\sigma^2} - a\frac{\Delta_rp}{\sigma^2} +\frac{ap(h^2-1)}{r_e^2+a^2} \\+i\left[ \left(\frac{ia\Delta_\theta\sqrt{\Delta_r}}{2\sigma^3\Xi}\left( 2r\Delta_r - \frac{\Delta_r'}{2}(r^2+a^2)\right) \right)\boxtimes\Gamma^2 \right] \\-i\left[ \left(\frac{i\Delta_r\sqrt{\Delta_\theta}a\cos\theta}{2\rho^2\sigma\Xi} -\frac{ia\Delta_r\sqrt{\Delta_\theta}(r^2+a^2)\cos\theta}{2\sigma^3} \right)\boxtimes\Gamma^1\right],
\end{multline}
\begin{multline}
\label{eq:defvc}
V_C =\frac{\sqrt{\Delta_r\Delta_\theta}}{\sigma\sin\theta}\left(\frac{\rho^2}{\sigma}-\frac{\sqrt{\Delta_\theta}}{\Xi}\right)\Gamma^3p +\frac{\sqrt{\Delta_r\Delta_\theta}}{\Xi\sigma}\rho m\Gamma^0\\-i\left[ \left(\frac{ia\sqrt{\Delta_r}\sin\theta r \Delta_\theta}{2\rho^2\sigma\Xi}\right)\boxtimes\Gamma^2 \right].
\end{multline}
In the above, we have sorted the terms according to their asymptotic behaviour at $-\infty$, since at $+\infty$ all the potentials are short-range. More precisely, the terms in $V_S$ are short-range at $-\infty$ and those of $V_C$ are of Coulomb-type there. \eqref{eq:hprop2} means that $h[H_0,h]$ is short-range at both infinities.  

Using Equation~\eqref{eq:relH}, one now shows that:
\begin{lemme}
$H^p$ is self-adjoint on $D(H_0)$, for any $p\in \mathbb{Z}+\frac{1}{2}$.
\end{lemme}
\begin{proof}
It follows from Equation\eqref{eq:relH} that:
\[H^p= H_0+(h^2-1)H_0 +h[H_0,h] + V_C + V_S,\]
since $[H_0,h]$ is bounded,  $H^p$ is $H_0$-bounded and, using the fact that : \[|h^2-1|\leq 1-a^2l^2<1,\] the result follows from the Kato-Rellich Perturbation Theorem.
\end{proof}
\subsection{Further properties of $H_0$}
We briefly quote the following important properties of the simplified operator $H_0$~:
\begin{lemme}
\label{lemme:equiv_quadratic_form}
As quadratic forms on $\mathscr{S}$, $H_0^2$ and $Q=D_{r^*}^2 + g^2(r^*)\mathfrak{D}^2$ are equivalent.
\end{lemme}
Lemma~\ref{lemme:equiv_quadratic_form} has the following important consequences:
\begin{corollaire}
\label{critere_compact}
$D(H_0) \subset H^1_{\textrm{loc}}$ continuously and we have the following criterion for compactness\footnote{The criterion is a consequence of the Rellich-Kondrachov theorem. See for example~\cite{Evans:2010aa}.}:
\center{\fbox{If $f,\chi \in C_{\infty}$ then $f(r^*)\chi(H_0)$ is compact.}}
\end{corollaire}
\begin{corollaire}
\label{corollaire:DrDbounded}
$\Gamma^1D_{r^*}$ and $g(r^*)\mathfrak{D}$ are elements of $B(D(H_0), \mathscr{H})$.
\end{corollaire}
The relationship between the operators $Q$ and $H_0^2$ goes even further, one can show that: \begin{equation} \label{eq:domaine_d2_H}\boxed{D(H_0^2)=D(Q)} \end{equation}
The proof of these results is an adaptation of that of~\cite[Lemmata~4.3,~4.4,~4.6, Corollaries~4.1, 4.2 ]{Nicolas:2004aa}.

\section{Mourre theory}\label{mourre}
\subsection{Brief overview}
\label{subsec:mourre_overview}
Mourre theory is a very powerful tool for constructing analytical scattering theories. It has been used in many different situations including the quantum $N$-particle problem~\cite{Derezinski:1997aa} and for scattering of classical fields - with or without spin- in a range of black-hole type geometries~\cite{Hafner:2003aa,thdaude,Nicolas:2004aa}. The theory has been refined since E. Mourre's original article~\cite{Mourre:1981aa} following, in particular, the theoretical developments in~\cite{Amrein:1996aa}. There, it is discussed that one can substitute a certain regularity condition for some of the technical conditions in Mourre's original work. We present here a non-optimal ``working'' version of the theory.

We begin by making precise the aforementioned regularity condition:
\begin{definition}
\label{def:regularity_class}
Let $A,H$ be two self-adjoint operators on a Hilbert space $\mathscr{H}$. We will say that $H\in C^1(A)$ if for any $u\in \mathscr{H}$ the map $s \mapsto e^{isA}(H-z)^{-1}e^{-isA}u $ is of class $C^1$ for a (and therefore all) $z\in \rho(H)$.
\end{definition}

In other words, Definition~\ref{def:regularity_class} states that, in a certain sense, the resolvent of $H$ evolves smoothly under the action of $A$\footnote{This interpretation fits nicely into the Heisenberg picture, where operators evolve instead of the wave function}.  An interesting technical consequence of this regularity is that (in the form sense) the following equation makes sense on $\mathscr{H}$. 
\[[A,(H-z)^{-1}]=(H-z)^{-1}[H,A](H-z)^{-1},\]
we refer to~\cite{Amrein:1996aa} for more details.
\begin{definition}
A pair $(A,H)$ of self-adjoint operators on a Hilbert space $\mathscr{H}$ such that $H\in C^1(A)$ will be said to satisfy a Mourre estimate (with compact error) on some energy interval $I \subset \mathbb{R}$ if there is a compact operator $K$ and a strictly positive constant $\mu$ such that:
\begin{equation*}\textbf{1}_I(H)i[H,A]\textbf{1}_I(H) \geq \mu \textbf{1}_{I}(H) + K. \end{equation*}
This will be written more briefly:
\begin{equation} \label{eq:mourre_estimate}\textbf{1}_I(H)i[H,A]\textbf{1}_I(H) \geqk \mu \textbf{1}_{I}(H).\end{equation}
\end{definition}

The heart of Mourre theory is contained in the following theorem; the statement here differs from that in Mourre's original article~\cite{Mourre:1981aa}; here we follow~\cite{Daude:2010aa,Nicolas:2004aa}.
\begin{thm}[Mourre]
\label{thm:mourre}
Suppose that:
\begin{enumerate}
\item $H\in C^1(A)$,
\item $i[H,A]$ defined as a quadratic form on $D(H)\cap D(A)$ extends to an element of $B(D(H),\mathscr{H})$,
\item $[A,[A,H]]$ defined as a quadratic form on $D(H)\cap D(A)$ extends to a bounded operator from $D(H)$ to $D(H)^*$.
\item $(A,H)$ satisfy a Mourre estimate on $I\subset \mathbb{R}$ 
\end{enumerate}
Then, $H$ has no singular continuous spectrum in $I$, and $H$ has at most a finite number of eigenvalues, counted with multiplicity, in $I$.
\end{thm}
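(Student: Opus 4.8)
The plan is to establish the two conclusions separately, both resting on the virial theorem, which is available precisely because of hypothesis~(1): since $H\in C^1(A)$, any eigenvector $u$ of $H$ satisfies the identity $\bra{u,i[H,A]u}=0$, where $i[H,A]$ is understood as the bounded form on $D(H)$ furnished by hypothesis~(2). Once this is in place, the finiteness of the point spectrum is an essentially soft argument using compactness, while the absence of singular continuous spectrum requires the limiting absorption principle, which is where the real work lies.

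For the finiteness of eigenvalues in $I$, suppose for contradiction that $H$ had infinitely many eigenvalues in $I$ when counted with multiplicity. Then one can extract an orthonormal sequence $(u_n)$ of eigenvectors with eigenvalues in $I$; being orthonormal, $u_n\rightharpoonup 0$ weakly, hence $\bra{u_n,Ku_n}\to 0$ since $K$ is compact. As each $u_n$ lies in the range of $\mathbf{1}_I(H)$, the Mourre estimate~\eqref{eq:mourre_estimate} gives $\bra{u_n,i[H,A]u_n}=\bra{u_n,\mathbf{1}_I(H)i[H,A]\mathbf{1}_I(H)u_n}\geq \mu+\bra{u_n,Ku_n}$, whereas the virial theorem forces the left-hand side to vanish. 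Letting $n\to\infty$ yields $0\geq\mu>0$, a contradiction; hence $\sigma_{pp}(H)\cap I$ is finite with all multiplicities finite.

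For the absence of singular continuous spectrum, the strategy is a limiting absorption principle obtained by Mourre's differential inequality. First, away from the (now finite) set $\sigma_{pp}(H)\cap I$, one upgrades the estimate to a strict one: given $\lambda\in I\setminus\sigma_{pp}(H)$, choose a small closed interval $I_\lambda\ni\lambda$ with $\overline{I_\lambda}\cap\sigma_{pp}(H)=\emptyset$; then $\mathbf{1}_{I_\lambda}(H)\to 0$ strongly as $|I_\lambda|\to 0$ and compactness of $K$ give $\lVert\mathbf{1}_{I_\lambda}(H)K\mathbf{1}_{I_\lambda}(H)\rVert<\mu/2$, so that $\mathbf{1}_{I_\lambda}(H)i[H,A]\mathbf{1}_{I_\lambda}(H)\geq\tfrac{\mu}{2}\mathbf{1}_{I_\lambda}(H)$. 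With a strict Mourre estimate in hand, fix $s\in(\tfrac12,1)$, set $\mathcal{H}_s=D(\bra{A}^s)$, and study $F_z=\bra{A}^{-s}(H-z)^{-1}\bra{A}^{-s}$ for $z=\lambda'+i\varepsilon$ with $\lambda'$ near $\lambda$ and $\varepsilon>0$; regularizing the unbounded $A$ (replacing it by $A_\varepsilon=A(1+i\varepsilon A)^{-1}$), one differentiates the conjugated resolvent and uses the strict positivity of $i[H,A]$ on the relevant spectral subspace together with hypotheses~(2) and~(3) to control the error terms produced by localization and regularization, thereby closing a Gr\"onwall-type differential inequality and obtaining a bound on $\lVert F_z\rVert$ uniform as $\varepsilon\downarrow 0$. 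This yields the existence of boundary values $(H-\lambda'\mp i0)^{-1}\in B(\mathcal{H}_s,\mathcal{H}_{-s})$, locally H\"older continuous in $\lambda'\in I_\lambda$. By Stone's formula this forces the spectral measure of $H$ to be absolutely continuous on $I_\lambda$; covering $I\setminus\sigma_{pp}(H)$ by finitely many such intervals gives $\sigma_{sc}(H)\cap I=\emptyset$.

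The routine ingredient is the virial–compactness argument for the point spectrum; the substantial part is the limiting absorption principle, i.e. making the commutator calculus rigorous for the unbounded $A$ and closing the Mourre differential inequality — this is exactly where hypothesis~(3) on the double commutator is indispensable, since it is what turns a bound with an uncontrolled remainder into a genuine differential inequality. As this argument is by now classical, I would content myself with the sketch above and refer to~\cite{Mourre:1981aa,Amrein:1996aa}, and to the presentations adapted to this exact setting in~\cite{Daude:2010aa,Nicolas:2004aa}, for the complete details.
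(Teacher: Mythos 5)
Your sketch is correct and is exactly the classical argument: the virial theorem (valid under hypothesis (1)) combined with compactness of $K$ gives the finiteness of the point spectrum in $I$, and the strict Mourre estimate away from eigenvalues feeds into Mourre's differential inequality, where hypothesis (3) controls the remainders and yields the limiting absorption principle, hence $\sigma_{\textrm{sc}}(H)\cap I=\emptyset$. The paper itself offers no proof of this theorem — it is quoted from~\cite{Mourre:1981aa,Amrein:1996aa} following~\cite{Daude:2010aa,Nicolas:2004aa} — so deferring the technical core of the limiting absorption principle to those references, as you do, is precisely what the paper does, and the only cosmetic caveat is that the Hölder continuity of the boundary values is not needed for the stated conclusion (the uniform bound on $\bra{A}^{-s}(H-z)^{-1}\bra{A}^{-s}$ already suffices).
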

When a pair $(A,H)$ satisfy the conditions of Theorem~\ref{thm:mourre}, $A$ will be said to be a \textbf{conjugate operator} for $H$ on $I$. 

\subsection{Our conjugate operators} 
\label{section:choice_conjugate_operators}
We will now proceed to describe our choice of conjugate operators for $H_0$ and a class of perturbations of $H_0$ that will include $H^p, p \in \mathbb{Z} +\frac{1}{2}$. Mourre theory is very flexible in that the notion of conjugate operator is local in energy but also, using cut-off functions, in space; this is well-illustrated in~\cite{thdaude,Nicolas:2004aa}. As a consequence, determining a candidate for the conjugate operator of a given operator $H$ can be a very creative process, although in many examples from physics, the generator of dilatations, or minor variations thereof, is usually a good candidate. We will see that, despite the extreme blackhole geometry, our case is no exception. As in~\cite{thdaude}, the full conjugate operator will be a combination of two operators $A_+$ and $A_-$ tailored to deal with the distinct natures of the geometry at the two asymptotic ends. Throughout the sequel we separate the two infinities using smooth cut-off functions, $j_+,j_-,j_1$ satisfying:
\begin{equation}
\label{eq:cutoff_def}
\left\{\begin{array}{cccc}
j_-(t)=1 & \textrm{if } t \leq -2, & j_-(t)=0 &\textrm{if } t \geq -\frac{3}{2}, \\
j_+(t)=1 &\textrm{if } t \geq -\frac{1}{2}, & j_+(t) = 0 &\textrm{if } t \leq -1,\\
j_1(t)=1 & \textrm{if } t \geq -1, & j_1(t)=0 & \textrm{if } t \leq -\frac{3}{2}.
 \end{array} \right.
\end{equation}
$j_-$ and $j_1$ should be chosen such that their supports are disjoint.

\subsubsection{At the simple horizon}
Near $\mathscr{H}_{r_+}$, we will follow the treatment in~\cite{thdaude} and set:
\begin{equation}A_+(S)=R_+(r^*,\mathfrak{D})\Gamma^1,\end{equation}
where:
\begin{align} R_+(r^*,\mathfrak{D})=(r^*-\kappa^{-1}\ln|\mathfrak{D}|)j_+^2\left(\frac{r^*-\kappa^{-1}\ln|\mathfrak{D}|}{S}\right).\end{align}
Since $|\mathfrak{D}|\geq 1$, the same arguments in the proof of~\cite[Lemma IV.4.4]{thdaude} can be used to show that:
\begin{lemme}
\label{lemme:r+bounded}
For any $S\geq 1$, uniformly in $\lambda_k$, $k\in \mathbb{N}^*$:
\begin{align}
|R_+(r^*,\lambda_k)|&\leq C\bra{r^*}.
\end{align}
In the above, $C$ is a positive constant and $R_+(r^*,\lambda_k)$ denotes the restriction of $R_+(r^*,\mathfrak{D})$ to $\mathscr{H}_{k,n}$.
\end{lemme}
Despite the strange argument in the cut-off function, this choice is surprisingly simple and is essentially: $\Gamma^1 r^*$. This is motivated by the observation that, under the unitary transformation: $U=e^{-i\kappa^{-1}\ln(|\mathfrak{D}|)D_{r^*}}$, the toy model on $\mathbb{R}_+\times S^2$~ given by: \[\slashh=\Gamma^1D_{r^*} + e^{-\kappa r^*}\mathfrak{D} + c,\]
transforms to:
\[\hat{\slashh} = \Gamma^1D_{r^*} +e^{-\kappa r^*}\frac{\mathfrak{D}}{|\mathfrak{D}|}+c. \]
The commutator with $\Gamma^1r^*$ is then easily seen to be:
\[i[\hat{\slashh},\Gamma^1r^*]=1+2r^*e^{-\kappa r^*}\frac{\mathfrak{D}}{|\mathfrak{D}|}\Gamma^1.\]
Restricting to a compact energy interval using $\chi(H)$, $\chi\in C^\infty_0(\bbR)$, the second term will lead to a compact error by Corollary~\ref{critere_compact}.
Note that without the unitary transformation $U$ the commutator is:
\[i[\slashh,\Gamma^1r^*]=1+2r^*e^{-\kappa r^*}\mathfrak{D}\Gamma^1.\]
Here the second term is problematic, as $r^*e^{-\kappa r^*}$ does not decay faster than $e^{-\kappa r^*}$ and hence we cannot control $||r^*e^{-\kappa r^*}\mathfrak{D}||$ with $||e^{-\kappa r^*}\mathfrak{D}||$.
\subsubsection{Near the double horizon}
Let us start our discussion at $\mathscr{H}_{r_e}$ by motivating the coordinate transformation we performed in Section~\ref{phi-invariance}. 

At the double horizon ($r^*\to - \infty$), the function $g$ appearing in the expression for $H_0$ decays as $O\left(\frac{1}{-r^*}\right)$. This is significantly slower than the exponential decay at a simple horizon, and is similar to the behaviour at space-like infinity in an asymptotically flat spacetime. In fact, when $r^*\to -\infty$ the principal symbol of $H_0$ formally ressembles: $$\tilde{\slashh}=\Gamma^1D_{r^*} -\frac{C}{r^*}\mathfrak{D},$$ 
which \emph{is} the massless Dirac operator (for the spinor density) for the asymptotically flat metric on $\mathbb{R}_-^*\times S^2$: 
\[\eta=dt^2 -d{r^*}^2 -\left(\frac{r^*}{C}\right)^2\frac{1}{\Delta_\theta}\textrm{d}\sigma^2.\]
This suggests that we should try to treat the double horizon in a similar manner to space-like infinity, and in particular that $A=\frac{1}{2}\{D_{r^*},r^*\}$ should be a reasonable candidate for a conjugate operator there; indeed, \begin{equation}\label{eq:toy_commutator_double_horizon} i[\tilde{\slashh},A]=\tilde{\slashh}.\end{equation}

However, had we used the original Boyer-Lindquist like coordinates $(t,r,\theta,\varphi)$, near $r^*\to -\infty$, we would have been lead to set: 
\[\tilde{H}_0 = \Gamma^1 D_{r^*} +g(r^*)\mathfrak{D} + \tilde{f}(r^*),\]
where $\tilde{f} \in \bm{S}^{0,0}$ and $\displaystyle \lim_{r^*\to -\infty} \tilde{f}(r^*)=c_0=\frac{a}{r_e^2+a^2}$.
The corresponding toy model would hence be: $\tilde{\slashh}+c_0$. Since $A$ commutes with constants, we need to modify it to generalise Equation~\eqref{eq:toy_commutator_double_horizon}. This can be achieved simply by appending $\Gamma_1c_0r^*$ to $A$. However, in doing so, we are immediately confronted to similar issues (that are carefully avoided by the unitary transformation $U$) described above at the simple horizon. The solution relies on the morphism properties of $\exp$ and the fact that $r^*e^{k_+r^*}=\underset{r^*\to-\infty}{o}(1)$. In our situation, even if we can imagine trying to exploit the morphism properties of $t\mapsto \frac{1}{t}$, with a unitary transformation such as $\tilde{U}=e^{-\frac{i}{2}\ln|\mathfrak{D}|\{D_{r^*},r^*\}}$, the error may not be compact simply because there is no decay left ! The coordinate change performed in Section~\ref{phi-invariance} circumvents the problem entirely by shifting the potential to the simple horizon, where we know how to treat it. In the sequel we set:
\begin{equation}A_-(S)= \frac{1}{2} \{ R_-(r^*),D_{r^*}\}, \end{equation}
where,
\begin{equation}R_-(r^*)=j_-^2(\frac{r^*}{S})r^*, \end{equation}
$\{\,\cdot\,,\,\cdot\,\}$ denotes the anti-commutator and $S\geq 1$ is a real parameter.

The conjugate operator $A_I$ will vary depending on the energy interval $I$, in fact we will show that there is $S_I \in [1,+\infty)$ such that on $I$ either:
\begin{equation}
\begin{aligned}
A_+(S_I)+A_-(S_I) \quad  \textrm{if }I\subset (0,+\infty),\\
A_+(S_I)-A_-(S_I) \quad \textrm{if }I \subset (-\infty,0),
\end{aligned}
\end{equation}
is a conjugate operator on $I$.

\subsection{The technical conditions}
Despite being the key assumption in Mourre theory, the estimate~\eqref{eq:mourre_estimate} alone is not sufficient for the conclusion of Theorem~\ref{thm:mourre}. However, checking the more technical conditions of Theorem~\ref{thm:mourre} is quite involved and diverts from the core of the text without offering much insight. For this reason, we have chosen to put the outline of the proof in Appendix~\ref{app:proof_mourre}. We nevertheless record here the appropriate conclusions of these developments:
\begin{prop}
\label{prop:technical_mourre1}
For any $S\geq 1$, $A_\pm(S)$ and $A_+(S) \pm A_-(S)$ are essentially self-adjoint on $\mathscr{S}$.
\end{prop}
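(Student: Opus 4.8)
The plan is to establish essential self-adjointness by exhibiting a common invariant core on which each operator is symmetric and then applying Nelson's analytic vector theorem (or a Nelson commutator theorem argument), working fibre-by-fibre in the decomposition $\mathscr{H}=\bigoplus_{k,n}\mathscr{H}_{k,n}$ already constructed. First I would note that since $A_+(S)$ involves $\mathfrak{D}$ only through the functional calculus via $\kappa^{-1}\ln|\mathfrak{D}|$ and $j_+^2$, and $A_-(S)$ involves no angular operators at all, all three operators $A_\pm(S)$ and $A_+(S)\pm A_-(S)$ preserve each $\mathscr{H}_{k,n}$, and on $\mathscr{H}_{k,n}$ the operator $\mathfrak{D}$ acts simply as multiplication by the scalar $\lambda_k$ (up to the $\Gamma^1$-related identifications of Lemma~\ref{lemme:spectreD}). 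Thus it suffices to prove essential self-adjointness on $\mathscr{S}(\mathbb{R})\otimes\mathbb{C}^4$ of the fibre operators, uniformly enough in $(k,n)$ that Lemma~\ref{lemme-decomposition} can be invoked to reassemble the full operator; here the core is $\mathscr{S}$ from Lemma~\ref{lemme:core_S}.

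On each fibre, $A_-(S)$ becomes $\tfrac12\{j_-^2(r^*/S)r^*,D_{r^*}\}=j_-^2(r^*/S)r^*D_{r^*}-\tfrac{i}{2}\partial_{r^*}\!\big(j_-^2(r^*/S)r^*\big)$, a first-order differential operator with smooth coefficients having at most linear growth, acting on $\mathscr{S}(\mathbb{R})$; and $A_+(S)$ becomes $R_+(r^*,\lambda_k)\Gamma^1$, a bounded multiplication operator by Lemma~\ref{lemme:r+bounded}. For $A_-(S)$ I would either quote the standard fact that $\tfrac12\{a(x),D_x\}$ with $a$ smooth and $a(x)=O(\langle x\rangle)$, $a'$ bounded, is essentially self-adjoint on $\mathscr{S}(\mathbb{R})$ (it is, e.g., unitarily equivalent via the flow of the vector field $a\partial_x$ to multiplication by a bounded function plus $D_x$ on $L^2$ with a weight), or run Nelson's theorem directly with comparison operator $N=D_{r^*}^2+{r^*}^2+1$: one checks $\|A_-(S)u\|\lesssim \|Nu\|$ and $|(A_-(S)u,Nu)-(Nu,A_-(S)u)|\lesssim (Nu,u)$ on $\mathscr{S}(\mathbb{R})$, both of which are routine since all commutators of $A_-(S)$ with $D_{r^*}$ and with multiplication by $r^*$ are first order with polynomially bounded coefficients. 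Adding the bounded self-adjoint $\pm A_+(S)\Gamma^1$-type term does not affect essential self-adjointness (Kato--Rellich, or the fact that bounded symmetric perturbations preserve essential self-adjointness of a core).

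The main obstacle is the reassembly: one must ensure the constants in the fibre estimates are uniform in $(k,n)$ so that the direct-sum operator is essentially self-adjoint on $\mathscr{S}=\bigoplus_{k,n}\mathscr{S}(\mathbb{R})\otimes\mathbb{C}^4$, rather than merely self-adjoint on the larger domain produced by Lemma~\ref{lemme-decomposition}. The $\lambda_k$-dependence enters only through $A_+$, and Lemma~\ref{lemme:r+bounded} gives the bound $|R_+(r^*,\lambda_k)|\le C\langle r^*\rangle$ \emph{uniformly} in $k$; the fibre operator $A_-(S)$ has no $(k,n)$-dependence whatsoever. Hence the Nelson estimates on each fibre hold with a single constant, so $\overline{A_\pm(S)}$ and $\overline{A_+(S)\pm A_-(S)}$ restricted to $\mathscr{S}$ are already self-adjoint, and since $\mathscr{S}(\mathbb{R})\otimes\mathbb{C}^4$ is invariant under each fibre operator and a core there, Nelson's theorem applied on the total space (or the argument of \cite[Lemma~3.5]{Nicolas:2004aa} combined with \cite[Theorem~VIII.33]{Kato:1980aa}) yields essential self-adjointness of $A_\pm(S)$ and $A_+(S)\pm A_-(S)$ on $\mathscr{S}$. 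The remaining bookkeeping — verifying invariance of $\mathscr{S}$ under the resolvents, or equivalently that $\mathscr{S}$ consists of analytic vectors for the comparison operator $\bigoplus_{k,n} N$ — is routine and can be deferred to the appendix in the same spirit as the other technical verifications.
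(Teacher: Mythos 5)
There is a genuine gap in your treatment of $A_+(S)$. You assert that on each fibre $\mathscr{H}_{k,n}$ the operator $A_+(S)$ becomes ``$R_+(r^*,\lambda_k)\Gamma^1$, a bounded multiplication operator by Lemma~\ref{lemme:r+bounded}'', and you then dispose of the sums $A_+(S)\pm A_-(S)$ by saying that a bounded symmetric perturbation does not affect essential self-adjointness. But Lemma~\ref{lemme:r+bounded} does not give boundedness: it gives $|R_+(r^*,\lambda_k)|\leq C\bra{r^*}$ uniformly in $k$, and indeed $R_+(r^*,\lambda_k)=r^*-\kappa^{-1}\ln|\lambda_k|$ as soon as $r^*-\kappa^{-1}\ln|\lambda_k|\geq -S/2$, so $R_+$ grows linearly as $r^*\to+\infty$. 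Hence each fibre restriction of $A_+(S)$ is an \emph{unbounded} symmetric multiplication operator, and the Kato--Rellich/bounded-perturbation step collapses: the sum of two unbounded essentially self-adjoint operators (a linearly growing multiplication operator and the dilation-type operator $A_-(S)$) is not automatically essentially self-adjoint, which is precisely the point the proposition has to address. The repair is to put $A_+(S)$ through the same Nelson commutator estimates as $A_-(S)$ rather than treating it as a perturbation: in your fibrewise setting, with $N=D_{r^*}^2+\bra{r^*}^2$, one has $\lVert R_+(r^*,\lambda_k)u\rVert\leq C\lVert\bra{r^*}u\rVert\leq C\lVert Nu\rVert$, $[R_+\Gamma^1,{r^*}^2]=0$, and $[R_+\Gamma^1,D_{r^*}^2]$ is first order with $\partial_{r^*}R_+$ bounded uniformly in $k$, so both Nelson hypotheses hold for $A_\pm(S)$ and for $A_+(S)\pm A_-(S)$. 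This is in substance what the paper does, except that it works globally (no fibre reduction) with the comparison operator $N=Q+\bra{r^*}^2$ including the angular term $g(r^*)^2\mathfrak{D}^2$, verifying the two commutator estimates for $A_-(S)$ on $\mathscr{S}$ and quoting Daud\'e's thesis for the $A_+(S)$ estimates.

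Two further remarks. Your ``main obstacle'' of uniformity in $(k,n)$ is not actually needed for the statement: if each fibre operator is essentially self-adjoint on $\mathscr{S}(\mathbb{R})\otimes\mathbb{C}^4$, then the ranges of $A\pm i$ on finite sums of fibre cores are already dense fibre by fibre, so the direct sum is essentially self-adjoint on $\mathscr{S}$ with no uniform constants (uniformity would only matter if one wanted to identify the domain of the closure); as it happens the constants are uniform anyway, since $A_-(S)$ is fibre-independent and the bounds on $R_+$ and $\partial_{r^*}R_+$ are uniform in $\lambda_k$. Aside from the flaw above, the fibrewise reduction with the smaller comparison operator $D_{r^*}^2+\bra{r^*}^2$ is a legitimate and somewhat lighter alternative to the paper's global argument, since for this proposition the angular part of $N$ plays no role.
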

\begin{prop}
\label{prop:technical_mourre2}
Let $H$ be an operator on $\mathscr{H}$ defined by:
\begin{equation} \label{eq:formeH} H=hH_0h +V, \end{equation}
where\footnote{These assumptions are not optimal}:
\begin{itemize}\item $V$ is a matrix-valued potential such that $V\in \bm{S}^{1,1}$ \item $h\in C_b^{\infty}(\mathbb{R}_{r^*}\times]0,\pi[)$ such that $h>0$, $|h^2-1|\leq c<1$, $\partial_{r^*} h, \partial_\theta h$, $h^2-1\in \bm{S}^{1,1}$. \end{itemize} 
Any such operator is self-adjoint on $\mathscr{H}$ with domain $D(H_0)$ by the Kato-Rellich theorem.
Furthermore for any $A\in \{A_\pm(S), A_+(S)\pm A_-(S)\}$:
\begin{enumerate}
\item The quadratic forms $i[H,A]$ and $i[[H,A],A]$ on $D(H)\cap D(A)$ extend to elements of $\mathcal{B}(D(H),\mathscr{H})$,
\item $H\in C^2(A)$.
\end{enumerate}
\end{prop}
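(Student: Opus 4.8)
The plan is to verify Proposition~\ref{prop:technical_mourre2} by reducing everything to commutator computations on the core $\mathscr{S}$, where the Leibniz rule is legitimate, and then controlling the resulting operators using the symbol calculus of $\bm{S}^{m,n}$ together with Corollary~\ref{corollaire:DrDbounded} and the quadratic-form equivalence of Lemma~\ref{lemme:equiv_quadratic_form}. First I would record the algebraic identity
\[
i[H,A] = i[hH_0h,A] + i[V,A] = ih[H_0,A]h + i[h,A]H_0h + ihH_0[h,A] + i[V,A],
\]
valid as a quadratic form on $\mathscr{S}$ (and hence on $D(H)\cap D(A)$ by the essential self-adjointness in Proposition~\ref{prop:technical_mourre1} and a density argument in the graph topology). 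The essential observation is that, modulo the unitary conjugation by $U$ built into $R_+$, the commutators $[D_{r^*},A_\pm]$ and $[\mathfrak{D},A_\pm]$ produce only bounded coefficients times $\Gamma^1 D_{r^*}$, $g(r^*)\mathfrak{D}$, or order-zero multiplication operators; since $\Gamma^1 D_{r^*}, g(r^*)\mathfrak{D}\in B(D(H_0),\mathscr{H})$ by Corollary~\ref{corollaire:DrDbounded}, each such term lies in $B(D(H_0),\mathscr{H})=B(D(H),\mathscr{H})$.

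The key steps, in order. \emph{(i)} Compute $i[H_0,A_-(S)]$ on $\mathscr{S}$: using $A_-(S)=\frac12\{R_-(r^*),D_{r^*}\}$ one gets $i[\Gamma^1D_{r^*},A_-]=\Gamma^1\{R_-', D_{r^*}\}$-type terms (bounded coefficient times $D_{r^*}$, since $R_-'$ and $R_-''$ are bounded), $i[g\mathfrak{D},A_-]= -R_- g' \mathfrak{D} + \text{(bounded)}\,g\mathfrak{D}$ where $R_- g' \in \bm{S}^{0,0}$ because $g\in\bm{S}^{1,1}$ forces $g'\in\bm{S}^{2,2}$ hence $r^* g'$ is bounded near $-\infty$ and exponentially small near $+\infty$ — so this is a bounded coefficient times $g\mathfrak{D}$, again in $B(D(H_0),\mathscr{H})$; and $i[f,A_-]= -R_- f'$, bounded since $f\in\bm{S}^{0,1}$ gives $r^* f'$ bounded. \emph{(ii)} Compute $i[H_0,A_+(S)]$: here one first conjugates by $U=e^{-i\kappa^{-1}\ln|\mathfrak{D}|D_{r^*}}$ to turn $R_+$ into (essentially) $\Gamma^1 r^*$ acting against $\hat H_0$ whose angular coefficient is a bounded multiplication in $r^*$ times $\mathfrak{D}/|\mathfrak{D}|$; then $i[\cdot,\Gamma^1 r^*]$ yields $1$ plus bounded-coefficient $\times$ ($r^* g \mathfrak{D}/|\mathfrak{D}|$), and since $|\mathfrak{D}|\ge1$ and $r^* g$ is bounded on $\operatorname{supp} j_+$ this is bounded relative to $D(H_0)$. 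The cut-offs $j_\pm^2(\cdot/S)$ only ever differentiate to bounded compactly-supported-in-the-relevant-half-line factors, so they cause no trouble; and for $A_+\pm A_-$ one adds the two computations, the supports being arranged (via the conditions on $j_-,j_1$) so the cross terms are harmless. \emph{(iii)} Handle the $h$- and $V$-contributions: $i[h,A_\pm]$ involves $\partial_{r^*}h,\partial_\theta h\in\bm{S}^{1,1}$ times bounded cut-off/$r^*$ factors, so $i[h,A_\pm]H_0 h$ and $hH_0[h,A_\pm]$ are $H_0$-bounded after one more commutation moving $H_0$ past the bounded multiplication $\partial h$ (this generates only another $\partial h\cdot H_0$-type term plus a second-derivative term, all in $\bm{S}^{0,0}$ or better); and $i[V,A_\pm]$ with $V\in\bm{S}^{1,1}$ produces $R_\mp V'$ (or $R_- \partial_{r^*}V$) type terms which are again in $\bm{S}^{1,1}\subset\bm{S}^{0,0}$, i.e. bounded multiplication operators. \emph{(iv)} Iterate once more for the double commutator $i[[H,A],A]$: every term produced in steps (i)--(iii) has the schematic form $m_0(r^*,\theta)\cdot L$ with $m_0$ a bounded smooth function (in fact in $\bm{S}^{0,0}$) and $L\in\{1,\Gamma^1 D_{r^*}, g\mathfrak{D}\}$; commuting such a term once more with $A_\pm$ reproduces the same schematic form (one uses $[A_\pm,\Gamma^1 D_{r^*}], [A_\pm, g\mathfrak{D}]$ already computed, and $[A_\pm,m_0]$ is again a bounded multiplication by the symbol estimates), so the double commutator is also in $B(D(H),\mathscr{H})$. \emph{(v)} Conclude $H\in C^2(A)$: by the standard criterion (e.g. \cite[][]{Amrein:1996aa}), since $A$ is essentially self-adjoint on the core $\mathscr{S}\subset D(H)$, $H\mathscr{S}\subset\mathscr{S}$-type stability is not needed — rather it suffices that the iterated commutator forms $i[H,A]$ and $i[[H,A],A]$, a priori defined on $D(H)\cap D(A)$, extend to bounded operators $D(H)\to\mathscr{H}$, which is exactly what (i)--(iv) give; $C^1(A)$ then $C^2(A)$ follow.

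The main obstacle I expect is step (ii), the $A_+$ contribution near the simple horizon: the naive commutator $i[\Gamma^1D_{r^*}+g\mathfrak{D}+f,\ \Gamma^1 r^* j_+^2]$ contains the term $2 r^* g(r^*)\mathfrak{D}\Gamma^1$, and since $g$ decays exactly like $e^{-\kappa r^*}$ while $r^* g$ decays no faster, one cannot dominate $\|r^* g\mathfrak{D} u\|$ by $\|g\mathfrak{D}u\|+\|u\|$ uniformly in the angular eigenvalue. The resolution — already flagged in the text — is that one must compute the commutator for $A_+(S)=R_+(r^*,\mathfrak{D})\Gamma^1$ with the genuine $\kappa^{-1}\ln|\mathfrak{D}|$-shift built in, i.e. work with $\hat H_0=U H_0 U^{-1}$, whose angular coefficient is $e^{-\kappa r^*}\mathfrak{D}/|\mathfrak{D}|$ with $\|\mathfrak{D}/|\mathfrak{D}|\|=1$; then $r^* e^{-\kappa r^*}$ is bounded on $[\,-1,\infty)\supset\operatorname{supp} j_+$ and the term is genuinely $H_0$-bounded. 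Making this rigorous requires checking that $U$ maps $\mathscr{S}$ (or at least $D(H_0)$ and $D(A_+)$) appropriately and that $R_+(r^*,\mathfrak{D})$ together with its two $r^*$-derivatives obeys the bound of Lemma~\ref{lemme:r+bounded} uniformly in $\lambda_k$ — including the contributions where the cut-off $j_+^2(\cdot/S)$ gets differentiated — which is the genuinely technical heart of the argument and the reason the full proof is deferred to Appendix~\ref{app:proof_mourre}; the $A_-$ side, by contrast, is straightforward because there $R_-(r^*)=j_-^2(r^*/S)r^*$ and its derivatives are manifestly polynomially bounded and the relevant coefficients $g, f, V$ carry enough inverse-power decay ($\bm{S}^{1,1}$, $\bm{S}^{0,1}$, $\bm{S}^{1,1}$) to absorb the factor $r^*$.
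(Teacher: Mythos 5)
Your commutator computations in steps (i)--(iv) are essentially the right ones (they match what is done in the appendix, with the $A_+$ difficulty correctly identified and the $A_-$ side controlled by bounds of the type $|R_-(r^*)g'(r^*)|\leq C|g(r^*)|$; note only that $g\in\bm{S}^{1,1}$ gives $g'\in\bm{S}^{1,2}$, not $\bm{S}^{2,2}$ --- harmless here because $R_-$ is supported near $-\infty$). The genuine gap is step (v), together with the parenthetical density claim in your very first identity. It is \emph{not} true in general that a bounded extension of the quadratic form $i[H,A]$ from $D(H)\cap D(A)$ to $B(D(H),\mathscr{H})$ implies $H\in C^1(A)$: by the characterisation recalled in Theorem~\ref{thm:6210}, one needs in addition a domain compatibility condition, namely that $\{u\in D(A):\,(H-z)^{-1}u\in D(A) \text{ and } (H-\bar z)^{-1}u\in D(A)\}$ is a core for $A$ for some $z\in\rho(H)$. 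Essential self-adjointness of $A$ on $\mathscr{S}$ (Proposition~\ref{prop:technical_mourre1}) does not by itself give this, because nothing in your argument shows that the resolvents of $H$ interact correctly with $D(A)$. For the same reason, your passage from ``valid on $\mathscr{S}$'' to ``valid on $D(H)\cap D(A)$ by a density argument in the graph topology'' begs the question: density of $\mathscr{S}$ (or of any convenient core) in $D(H)\cap D(A)$ for the norm $\|Hu\|+\|Au\|+\|u\|$ is precisely one of the conclusions one has to earn, not an automatic fact.

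The paper supplies exactly this missing backbone. It introduces the comparison operator $N=D_{r^*}^2+g(r^*)^2\mathfrak{D}^2+\bra{r^*}^2$ of~\eqref{eq:def_N}, identifies $D(N)=D(H_0^2)\cap D(\bra{r^*}^2)$ using~\eqref{eq:domaine_d2_H} and Lemma~\ref{lemme:domaine_H2}, proves via Lemma~\ref{lemme:stabilite} that $D(N)$ is invariant under the resolvents of $H_0$ and $H$, checks Nelson's hypotheses (Theorem~\ref{thm:nelson}) for $H_0$ and for $A_\pm(S)$ relative to $N$ (this is where the estimates $\|r^{*2}u\|,\|D_{r^*}^2u\|\lesssim\|Nu\|+\|u\|$ are needed), and then invokes Theorem~\ref{thm:gerard}: its conclusions are precisely that $D(N)$ is dense in $D(H)\cap D(A)$ for the intersection norm, that the form $i[H,A]$ on $D(H)\cap D(A)$ is the unique extension of its restriction to $D(N)$, and that $H\in C^1(A)$. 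Only after $C^1(A)$ is secured does the boundedness of the double commutator (your step (iv), which is fine) upgrade the regularity to $C^2(A)$. So to repair your proof you must either verify the second condition of Theorem~\ref{thm:6210} directly, or reproduce the Nelson/comparison-operator scheme; the commutator estimates alone do not close the argument.
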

We record here for future reference a further property of the operators $H$ in the preceding proposition:
\begin{lemme}\label{lemme:domaine_H2}
$D(H^2)=D(H_0^2).$
\end{lemme}
For a proof we refer to~\cite[Lemma~4.6]{Nicolas:2004aa}.
The functions $h$ and $V=V_C+V_S$ of $H^p$ satisfy slightly better conditions than those above:
\begin{lemme}\label{lemme:behaviour_potentials}\phantom{a}
\begin{itemize}
\item The function $h$ defined by Equation~\eqref{eq:defh} satisfies: \[h^2-1,\partial_{r^*}h,\partial_\theta h \in \bm{S}^{2,2}.\]
\item Let $V_C$ and $V_S$ be as in Equations~\eqref{eq:defvc} and~\eqref{eq:defvs} then:
\[V_C \in \bm{S}^{1,1}, V_S \in \bm{S}^{1,2}. \]
\end{itemize}
\end{lemme}

\subsection{Mourre estimates for $H_0$} 
We shall now move on to derive Mourre inequalities, naturally, we will treat $\mathscr{H}_{r_e}$ and $\mathscr{H}_{r_+}$ separately.
\subsubsection{Near the double horizon}
We begin with:
\begin{lemme}
\label{step1_mourrein}
Let $\chi \in C^{\infty}_0(\mathbb{R})$ then for any $S\geq 1$;
\begin{equation}\chi(H_0)i[H_0,A_-(S)]\chi(H_0) \eqk \chi(H_0)j_-(\frac{r^*}{S})H_0j_-(\frac{r^*}{S}) \chi(H_0), \end{equation}
where $\eqk$ is used to denote equality up to a compact error.
\end{lemme}
\begin{proof} 
One has:
\begin{equation}
\begin{aligned} i[H_0,A_-(S)]=&\Gamma^1 R'_-(r^*)D_{r^*} - \frac{i}{2}\Gamma^1R^{''}_-(r^*) \\&- R_-(r^*)g'(r^*)\mathfrak{D}-R_-(r^*)f'(r^*) \\ =& j_-(\frac{r^*}{S})\Gamma^1D_{r^*}j_-(\frac{r^*}{S}) + 2r^*j_-(\frac{r^*}{S})j'_-(\frac{r^*}{S})\Gamma^1 D_{r^*}  \\ &\quad\underline{-\frac{i}{S}j'_-(\frac{r^*}{S})j_-(\frac{r^*}{S})}-\underline{\frac{ir^*}{S^2}(j'_-(\frac{r^*}{S}))^2}\\ &-\underline{\frac{i r^*}{S^2}j^{''}_-(\frac{r^*}{S})j_-(\frac{r^*}{S})}-R_-(r^*)\left(g'(r^*)\mathfrak{D} +f'(r^*)\right).\end{aligned}\end{equation}
Note that if $0 \leq \chi \leq 1$ is a smooth function with compact support in $\mathbb{R}$, since $j'$ has compact support, Corollary~\ref{critere_compact} implies that the terms underlined above will only lead to compact terms in $\chi(H_0)i[H_0,A_-(S)]\chi(H_0)$, consequently:
\begin{multline}\chi(H_0)i[H_0,A_-(S)]\chi(H_0) \eqk \chi(H_0)\bigg( j_-(\frac{r^*}{S})\Gamma^1D_{r^*}j_-(\frac{r^*}{S}) \\\hspace{1.5in}+2r^*j_-(\frac{r^*}{S})j'_-(\frac{r^*}{S})\Gamma^1 D_{r^*} \\-R_-(r^*)\left(g'(r^*)\mathfrak{D} +f'(r^*)\right)\bigg)\chi(H_0). \end{multline}
Using Corollary~\ref{corollaire:DrDbounded}, one can show that $2r^*j_-(\frac{r^*}{S})j'_-(\frac{r^*}{S})\Gamma^1 D_{r^*}\chi(H_0)$ is also compact.
Indeed, let $\gamma(r^*)=2r^*j_-(\frac{r^*}{S})j'_-(\frac{r^*}{S})$ and note that $\gamma \in C^\infty_0(\mathbb{R})$. For any $u \in \mathscr{H}$, one has: \begin{equation} \gamma(r^*) \Gamma^1D_{r^*}\chi(H_0)u = \Gamma^1D_{r^*}\gamma(r^*)\chi(H_0)u +i\Gamma^1\gamma'(r^*)\chi(H_0)u. \end{equation}
Corollary~\ref{corollaire:DrDbounded} implies that there is $C_1>0$ such that for any $u\in D(H_0)$. 
 \[||\Gamma^1D_{r^*}u|| \leq C_1(||H_0u||+||u||).\]
Hence:
\begin{equation*}
\begin{aligned}
|| \gamma(r^*)\Gamma^1D_{r^*}\chi(H_0)u || &\leq ||\Gamma^1 D_{r^*}\gamma(r^*)\chi(H_0)u|| + ||\gamma'(r^*)\chi(H_0)u||,
\\ &\leq C_1||H_0\gamma(r^*)\chi(H_0)u|| + C_1||\gamma(r^*)\chi(H_0)u|| \\&\hspace{1.5in}+ ||\gamma'(r^*)\chi(H_0)u||,
\\ &\leq C_1||\gamma(r^*)H_0\chi(H_0)u|| + C_1||\gamma(r^*)\chi(H_0)u||  \\&\hspace{1in}+ (1+C_1)||\gamma'(r^*)\chi(H_0)u||.
\end{aligned}
\end{equation*}
According to Corollary~\ref{critere_compact} the operators $\gamma(r^*)H_0\chi(H_0)$, $\gamma(r^*)\chi(H_0)$ and $\gamma'(r^*)\chi(H_0)$ are all compact and so it follows from a simple extraction argument that $\gamma(r^*)\Gamma^1D_{r^*}\chi(H_0)$ must be too.
Thus:
\begin{multline}\label{eq:step1_mourre}\chi(H_0)i[H_0,A_-(S)]\chi(H_0) \eqk \chi(H_0) j_-(\frac{r^*}{S})\Gamma^1D_{r^*}j_-(\frac{r^*}{S}) \\-R_-(r^*)\left(g'(r^*)\mathfrak{D} +f'(r^*)\right)\chi(H_0).\end{multline}
Now,~\eqref{eq:step1_mourre} can be rewritten:
\begin{equation*}\begin{aligned}\chi(H_0)i[H_0,A_-(S)]\chi(H_0)\eqk &\chi(H_0)j_-(\frac{r^*}{S})H_0j_-(\frac{r^*}{S})\chi(H_0)\\ &-\chi(H_0)j^2_-(\frac{r^*}{S})\left( g(r^*)+r^*g'(r^*)\right)\mathfrak{D}\chi(H_0) \\&- \chi(H_0)j_-^2(\frac{r^*}{S})\left(f(r^*)+ r^*f'(r^*)\right)\chi(H_0).\end{aligned}\end{equation*}
Since $f(r^*) +r^*f'(r^*) \to 0$ when $r^*\to -\infty$, it follows from Corollary~\ref{critere_compact} that the terms in the last line of the previous equation are compact. The compactness of those on the middle line is also a consequence of Corollary~\ref{critere_compact}, because near the double horizon $r_* \to - \infty$ $(r\to r_e)$ one has:
\begin{align*} r^*g'(r^*)+g(r^*) = \left(1+ \frac{r^*}{\Xi(r_e^2+a^2)}\frac{\Delta_r'}{2} + O\left(\frac{1}{r^*}\right)\right)g(r^*), \end{align*}
and:
\begin{align*} 
\Delta_r'&= 2l^2(r-r_e)(r_e-r_-)(r_+-r_e) + O((r-r_e)^2),
\\&= -2\frac{(3Mr_e-4a^2)(r-r_e)}{r_e^2} + O((r-r_e)^2).
\end{align*}
Using~\eqref{eq:asymp_double} we obtain that:
\begin{equation*} \Delta_r'= -2\frac{(r_e^2+a^2)\Xi }{r^*}+o(\frac{1}{r^*}).  \end{equation*}
From which it follows:
\begin{equation} r^*g'(r^*)+g(r^*)= o(g(r^*)). \end{equation}
Therefore, there is a continuous function $\varepsilon \in C_\infty(\mathbb{R})$ such that:
\begin{align*} ||j_-^2(\frac{r^*}{S})(r^*g'(r^*)+g(r^*))\mathfrak{D}\chi(H_0)|| &= ||g(r^*)\mathfrak{D}\varepsilon(r^*)\chi(H_0)||, \\&\leq ||H_0\varepsilon(r^*)\chi(H_0)|| + ||\varepsilon(r^*)\chi(H_0)||. \end{align*}
Compactness then follows with a similar argument as before.
\end{proof}

We are now ready to prove:
\begin{prop}
\label{prop:mourre_estimate_DH}
Let $\chi$ be of a compact support contained in $(0, +\infty)$ and $\mu>0$ be such that $\supp \chi \subset [\mu, +\infty)$ then for any $S\geq 1$:
\begin{equation}\label{eq:m_ineq_1} \chi(H_0)i[H_0,A_-(S)]\chi(H_0) \geqk \mu \chi(H_0)j_-^2(\frac{r^*}{S})\chi(H_0).\end{equation}
\end{prop}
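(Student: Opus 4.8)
The plan is to start from Lemma~\ref{step1_mourrein}. Writing $j:=j_-\!\big(\tfrac{r^*}{S}\big)$ for brevity, that lemma already gives
\[\chi(H_0)\,i[H_0,A_-(S)]\,\chi(H_0)\eqk\chi(H_0)\,j\,H_0\,j\,\chi(H_0).\]
Since $\mu$ is a scalar, $jH_0j-\mu j^2=j(H_0-\mu)j$, so~\eqref{eq:m_ineq_1} is equivalent to the lower bound
\[\chi(H_0)\,j\,(H_0-\mu)\,j\,\chi(H_0)\geqk 0,\]
which is what I would establish. This operator is self-adjoint, being a palindrome in the self-adjoint factors $\chi(H_0)$, $j$ and $H_0-\mu$, so the assertion is a genuine operator inequality modulo a compact term.

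Two facts drive the argument. First, the spectral localisation: because $\supp\chi\subset[\mu,+\infty)$, the function $t\mapsto(t-\mu)\chi(t)^2$ is everywhere nonnegative, hence $\chi(H_0)(H_0-\mu)\chi(H_0)=\big((t-\mu)\chi(t)^2\big)(H_0)\geq 0$, and since $j$ is a self-adjoint multiplication operator, $j\,\chi(H_0)(H_0-\mu)\chi(H_0)\,j\geq 0$ as well. Second, $j$ is ``invisible'' to the functional calculus of $H_0$ up to compact errors: $j$ depends on $r^*$ only and commutes with $\mathfrak D$ and with $f$, so $[H_0,j]=-\tfrac{i}{S}\Gamma^1 j_-'\!\big(\tfrac{r^*}{S}\big)$ is a multiplication operator whose matrix coefficient has compact $r^*$-support; in particular $j\in B(D(H_0))$ (exactly as in the treatment of $h$), and $[H_0,j]\chi(H_0)$ is compact by Corollary~\ref{critere_compact}. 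Moreover $[(H_0+i)^{-1},j]=(H_0+i)^{-1}\big(\tfrac{i}{S}\Gamma^1 j_-'(\tfrac{r^*}{S})\big)(H_0+i)^{-1}$ is compact for the same reason, so approximating $\phi(H_0)$ by polynomials in $(H_0+i)^{-1}$ shows that $[\phi(H_0),j]$ is compact for every $\phi\in C_\infty(\mathbb R)$.

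I would then set $K:=[j,\chi(H_0)]$ (compact) and $\chi_\mu(t):=(t-\mu)\chi(t)$, which lies in $C_\infty(\mathbb R)$ and satisfies $(H_0-\mu)\chi(H_0)=\chi_\mu(H_0)$. Substituting $j\chi(H_0)=\chi(H_0)j+K$ on the right and $\chi(H_0)j=j\chi(H_0)-K$ on the left of $\chi(H_0)j(H_0-\mu)j\chi(H_0)$ yields
\[\chi(H_0)\,j\,(H_0-\mu)\,j\,\chi(H_0)=j\,\chi(H_0)(H_0-\mu)\chi(H_0)\,j\;+\;\chi(H_0)j\big[(H_0-\mu)K\big]\;-\;\big[K\chi_\mu(H_0)\big]j.\]
The first term is nonnegative by the observation above. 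The last term is compact, being a compact operator followed by the bounded operator $j$. For the middle term, $(H_0-\mu)K=[\,j,\chi_\mu(H_0)\,]+[H_0,j]\chi(H_0)$ is compact — the first summand because $\chi_\mu\in C_\infty(\mathbb R)$, the second by Corollary~\ref{critere_compact} — so the whole term is compact. Hence $\chi(H_0)j(H_0-\mu)j\chi(H_0)$ is nonnegative modulo a compact operator, i.e. $\geqk 0$, and~\eqref{eq:m_ineq_1} follows.

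The only delicate point — the step I would treat most carefully — is the compactness of $[\phi(H_0),j]$, equivalently of $(H_0-\mu)K$: since $j$ is not a $C_\infty$ function of $r^*$ (it tends to $1$ at the double horizon), Corollary~\ref{critere_compact} does not apply to $j$ itself, and one must route the argument through the resolvent, where only the compactly supported derivative $j_-'$ survives. Everything else is routine bookkeeping with products of bounded and compact operators.
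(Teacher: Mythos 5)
Your proof is correct and takes essentially the same route as the paper: reduce via Lemma~\ref{step1_mourrein}, commute $j_-(\tfrac{r^*}{S})$ through $\chi(H_0)$ modulo compact errors, and use the spectral theorem with $\supp \chi\subset[\mu,+\infty)$ to produce the nonnegative term. The only difference is the tool for the compactness of $[\chi(H_0),j_-(\tfrac{r^*}{S})]$ — the paper uses the Helffer--Sjöstrand formula while you go through resolvent commutators and approximation (where, strictly, you should approximate by polynomials in both $(H_0+i)^{-1}$ and $(H_0-i)^{-1}$ so that Stone--Weierstrass applies); this is a cosmetic point, not a gap.
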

The result holds also if $\supp\chi \subset (-\infty,0)$, if we replace $A_-(S)$ by $-A_-(S)$.
\begin{proof}
Using Lemma~\ref{step1_mourrein}, it is sufficient to prove that:
\[\chi(H_0)j_-(\frac{r^*}{S})H_0j_-(\frac{r^*}{S}) \chi(H_0) \geqk \mu\chi(H_0)j_-^2(\frac{r^*}{S})\chi(H_0).\]
Our first step is to note that, although $\chi(H_0)$ and $j_-(\frac{r^*}{S})$ do not commute, their commutator is a compact operator. This can be seen using the Helffer-Sjöstrand formula~\cite[Proposition 7.2]{Helffer:1987aa}, for one has:
\begin{multline}\label{eq:use_hs_1} \left[\chi(H_0), j_-(\frac{r^*}{S})\right]=\frac{i}{2\pi} \int_{\mathbb{C}} \partial_{\bar{z}}\tilde{\chi}(z)[(H_0-z)^{-1},j_-(\frac{r^*}{S})]\textrm{d}z\wedge\textrm{d}\bar{z},
\\= -\frac{i}{2\pi} \int \partial_{\bar{z}}\tilde{\chi}(z)(H_0-z)^{-1}[H_0,j_-(\frac{r^*}{S})](H_0-z)^{-1}\textrm{d}z \wedge \textrm{d}{\bar{z}}. \end{multline}
The second equation makes sense since $j_-(\frac{r^*}{S})$ is bounded and $[H_0,j_-(\frac{r^*}{S})]$ extends to a bounded operator on $\mathscr{H}$. Furthermore, the integral~\label{eq:use_hs_1} exists in the norm topology, so the compactness of the commutator follows from that of the integrand which, again, is a consequence of Corollary~\ref{critere_compact} since: \[[H_0,j_-(\frac{r^*}{S})]=-\frac{i}{S}\Gamma^1j_-'(\frac{r^*}{S}).\]
Now $\chi(H_0)j_-(\frac{r^*}{S})H_0j_-(\frac{r^*}{S})\chi(H_0)$ is equal to
\begin{equation*}\begin{gathered} \label{eq:m_ineq_2}j_-(\frac{r^*}{S})\chi(H_0)H_0\chi(H_0)j_-(\frac{r^*}{S}) \\+\underline{j_-(\frac{r^*}{S})\chi(H_0)H_0[j_-(\frac{r^*}{S}),\chi(H_0)] } +\underline{ [\chi(H_0),j_-(\frac{r^*}{S})]H_0j_-(\frac{r^*}{S})\chi(H_0)}.\end{gathered}\end{equation*}
The underlined terms form a symmetric compact operator and denoting\footnote{following the notations of~\cite{Lax:2002aa}.} $\mathbb{E}$ the operator-valued spectral measure, for any $u \in \mathscr{H}$:
\begin{equation*}\begin{aligned}  (j_-(\frac{r^*}{S})\chi(H_0)H_0\chi(H_0)j_-(\frac{r^*}{S})u,u) &=(\chi(H_0)H_0\chi(H_0)j_-(\frac{r^*}{S})u, j_-(\frac{r^*}{S})u), \\&= \int t\chi^2(t) (\mathbb{E}(\textrm{d}t)j_-(\frac{r^*}{S})u, j_-(\frac{r^*}{S})u), \\ &\geq \mu(j_-(\frac{r^*}{S})\chi(H_0)^2j_-(\frac{r^*}{S})u,u).  \end{aligned}\end{equation*}
In other words:
\begin{equation}\begin{aligned}\label{eq:m_ineq_3} j_-(\frac{r^*}{S})\chi(H_0)H_0\chi(H_0)j_-(\frac{r^*}{S}) &\geq \mu j_-(\frac{r^*}{S})\chi(H_0)^2j_-(\frac{r^*}{S}), 
\\ &\geqk \mu \chi(H_0)j^2_-(\frac{r^*}{S})\chi(H_0), \end{aligned}\end{equation}
where we have used once more the compactness of the commutator $[\chi(H_0),j_-(\frac{r^*}{S})]$.
Similar arguments prove the final point.
\end{proof}
\subsubsection{At the simple horizon}
The decomposition of the Hilbert space constructed in Section~\ref{comparison_operator} and the results discussed there concerning the properties of the eigenvalues, mean that the proof of the Mourre estimate at the simple horizon in~\cite{thdaude}, applies to our case without any essential modification. Hence we quote without proof:
\begin{prop}[{\cite[Lemma~IV.4.11]{thdaude}}]
\label{lemme:IV411daude}
Let $\lambda_0 \in \mathbb{R}$, then there are $\chi \in C^{\infty}_0(\mathbb{R})$ such that $\lambda_0 \in \supp\chi$ and $\mu \in \mathbb{R}_+^*$ such that:
\begin{equation} \chi(H_0)i[H_0,A_+(S)]\chi(H_0) \geqk \mu \chi(H_0)j^2_1(\frac{r^*}{S})\chi(H_0), \end{equation}
for large enough $S \in \mathbb{R}_+^*$.
\end{prop}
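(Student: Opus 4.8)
The plan is to transfer the computation to the toy model $\slashh$ of Section~\ref{section:choice_conjugate_operators} and follow the proof of~\cite[Lemma~IV.4.4 and Lemma~IV.4.11]{thdaude}; the point is that the decomposition $\mathscr{H}=\bigoplus_{k,n}\mathscr{H}_{k,n}$, the lower bound $|\lambda_k|\geq 1$ of Lemma~\ref{lemme:spectreD}, and the estimate $|R_+(r^*,\lambda_k)|\leq C\bra{r^*}$ of Lemma~\ref{lemme:r+bounded} are exactly what is needed for that proof to carry over unchanged. First I would work on the core $\mathscr{S}$ and compute $i[H_0,A_+(S)]$ blockwise: on $\mathscr{H}_{k,n}$ one has $H_0^{k,n}=\Gamma^1D_{r^*}-\lambda_{k,n}g\Gamma^2+f$, while $A_+(S)$ acts as $R_+(r^*,\lambda_{k,n})\Gamma^1$, that is, multiplication by a \emph{scalar} symbol of $r^*$ times $\Gamma^1$ (here $|\lambda_{k,n}|\geq 1$ guarantees that $\ln|\mathfrak{D}|$, hence $R_+$, is well defined). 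Using $(\Gamma^1)^2=I$, $\Gamma^1\Gamma^2=-\Gamma^2\Gamma^1$ and the fact that $f$ and $R_+$ are both functions of $r^*$ (so that their commutator vanishes on the block), one finds
\begin{equation*}
i[H_0^{k,n},A_+(S)]=\partial_{r^*}R_+(r^*,\lambda_{k,n})+2\lambda_{k,n}\,g(r^*)R_+(r^*,\lambda_{k,n})\,i\Gamma^1\Gamma^2,
\end{equation*}
so that, reassembling the blocks, $i[H_0,A_+(S)]=\partial_{r^*}R_+(r^*,\mathfrak{D})+2i\,g(r^*)R_+(r^*,\mathfrak{D})\,\mathfrak{D}\Gamma^1$, with
\begin{equation*}
\partial_{r^*}R_+(r^*,\mathfrak{D})=j_+^2\!\Big(\tfrac{r^*-\kappa^{-1}\ln|\mathfrak{D}|}{S}\Big)+\tfrac{2(r^*-\kappa^{-1}\ln|\mathfrak{D}|)}{S}\,(j_+j_+')\!\Big(\tfrac{r^*-\kappa^{-1}\ln|\mathfrak{D}|}{S}\Big).
\end{equation*}
The first summand $P_S:=j_+^2\big(\tfrac{r^*-\kappa^{-1}\ln|\mathfrak{D}|}{S}\big)\geq 0$ is the positive main term; everything else must be shown to be negligible once sandwiched by $\chi(H_0)$.

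Disposing of the errors is the technical core. The structural fact --- the very one that motivated the unitary $U=e^{-i\kappa^{-1}\ln|\mathfrak{D}|D_{r^*}}$ in Section~\ref{section:choice_conjugate_operators} --- is that $R_+(r^*,\mathfrak{D})$ is supported where $r^*\geq\kappa^{-1}\ln|\mathfrak{D}|-S$, so on that support $g(r^*)|\mathfrak{D}|\leq Ce^{-\kappa r^*}|\mathfrak{D}|\leq Ce^{\kappa S}$ is bounded; together with $|R_+|\leq C\bra{r^*}$ (Lemma~\ref{lemme:r+bounded}), and the sharper $|R_+|\leq|r^*-\kappa^{-1}\ln|\mathfrak{D}||$, which yields $g(r^*)|R_+(r^*,\mathfrak{D})|\,|\mathfrak{D}|\leq C\sup_{y\geq -S}|y|e^{-\kappa y}<\infty$, this shows that both the $(j_+j_+')$--term and $gR_+\mathfrak{D}\Gamma^1$ are \emph{bounded}. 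To turn boundedness into compactness of $\chi(H_0)(\cdot)\chi(H_0)$ I would split off $\{|\mathfrak{D}|\leq e^{\kappa R}\}$: on that part the $r^*$--support becomes compact and Corollary~\ref{critere_compact} applies; on the complement one has $g^2\mathfrak{D}^2\geq c\,e^{\kappa S}$, so the quadratic-form bound $H_0^2\geqk g^2\mathfrak{D}^2$ of Lemma~\ref{lemme:equiv_quadratic_form} (together with $D(H_0^2)=D(Q)$) gives an Agmon-type estimate making the contribution small, and letting $R\to\infty$ exhibits the error as a norm-limit of compact operators. This is precisely the mechanism of~\cite[Lemma~IV.4.4]{thdaude}: the shift by $\kappa^{-1}\ln|\mathfrak{D}|$ built into $A_+(S)$ confines the ``amplification'' carried by the unbounded operator $\mathfrak{D}$ to the classically forbidden region $\{r^*\leq\kappa^{-1}\ln|\mathfrak{D}|+O(1)\}$, where it does no harm.

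It remains to replace $P_S$ by $j_1^2(r^*/S)$. Since $0\leq j_1\leq 1$, one has $P_S\geq P_S\,j_1^2(r^*/S)=j_1^2(r^*/S)-(1-P_S)j_1^2(r^*/S)$, and the defect $(1-P_S)j_1^2(r^*/S)$ is supported in $\{-\tfrac{3}{2}S\leq r^*\leq\kappa^{-1}\ln|\mathfrak{D}|-\tfrac{1}{2}S\}$, where again $g^2\mathfrak{D}^2\geq c\,e^{\kappa S}$; the same splitting/Agmon argument shows that $(1-P_S)j_1^2(r^*/S)\,\chi(H_0)$ is compact. Hence $\chi(H_0)P_S\chi(H_0)\geqk\chi(H_0)j_1^2(r^*/S)\chi(H_0)$, and assembling all the pieces gives the estimate of Proposition~\ref{lemme:IV411daude} for any $\chi\in C^\infty_0(\mathbb{R})$ with $\lambda_0\in\supp\chi$, with $\mu=1$ and $S$ large enough. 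The main obstacle throughout is exactly this compactness: since $\mathfrak{D}$ is unbounded, the error and defect operators do not decay as $r^*\to+\infty$ uniformly in $\mathfrak{D}$, and it is the interplay of the $\kappa^{-1}\ln|\mathfrak{D}|$--shift, the bound $H_0^2\geqk g^2\mathfrak{D}^2$, and approximation by compact operators (cutting $|\mathfrak{D}|$) --- all carried out in~\cite{thdaude} --- that resolves it.
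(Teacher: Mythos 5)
You should first note that the paper itself does not prove this proposition: it is quoted from~\cite{thdaude}, the point of Sections~\ref{analytic_framework}--\ref{mourre} being precisely that the decomposition $\mathscr{H}=\bigoplus_{k,n}\mathscr{H}_{k,n}$, the gap $|\lambda_k|\geq 1$ and Lemma~\ref{lemme:r+bounded} let Daud\'e's proof carry over unchanged. Your blockwise computation of $i[H_0,A_+(S)]=\partial_{r^*}R_+(r^*,\mathfrak{D})+2i\,g(r^*)R_+(r^*,\mathfrak{D})\mathfrak{D}\Gamma^1$ is correct and consistent with that framework; the gap is in the step where you dispose of the error and defect terms, which is exactly where the real work of~\cite{thdaude} lies.

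Two things go wrong there. First, the bound $g^2\mathfrak{D}^2\geq c\,e^{\kappa S}$ that you invoke on the defect region $\{-\tfrac{3}{2}S\leq r^*\leq \kappa^{-1}\ln|\mathfrak{D}|-\tfrac{S}{2}\}$ is false on the part of that region where $r^*$ is negative of order $S$ and $|\mathfrak{D}|$ is of order $1$: the constraint only gives $|\mathfrak{D}|\geq e^{\kappa(r^*+S/2)}$, which is $\approx 1$ when $r^*\approx -S$, while $g(r^*)\approx C/|r^*|\approx 1/S$ there (cf.~\eqref{eq:asymp_double}), so $g^2\mathfrak{D}^2$ can be of order $S^{-2}$. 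Second, and more seriously (this also affects the $(j_+j_+')$--term and $gR_+\mathfrak{D}\Gamma^1$), the mechanism ``cut at $|\mathfrak{D}|\leq e^{\kappa R}$, apply Corollary~\ref{critere_compact}, use the forbidden-region estimate on the complement, let $R\to\infty$'' does not yield compactness. On the large-$|\mathfrak{D}|$ part of the relevant supports one has $r^*=\kappa^{-1}\ln|\mathfrak{D}|+O(S)$, so $g(r^*)|\mathfrak{D}|\approx e^{-\kappa(r^*-\kappa^{-1}\ln|\mathfrak{D}|)}$ is bounded below only by $c\,e^{\kappa S/2}$, \emph{uniformly} in $|\mathfrak{D}|$ but not improving as $|\mathfrak{D}|\to\infty$; consequently the norms of the tail pieces composed with $\chi(H_0)$ (estimated via Lemma~\ref{lemme:equiv_quadratic_form}) are bounded by $Ce^{-\kappa S/2}$ but do not tend to $0$ as $R\to\infty$, so these operators are not norm-limits of compacts. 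What your argument actually produces, for fixed $S$, is ``compact $+\,O(e^{-\kappa S/2})$'', and a small but non-compact remainder cannot simply be absorbed: the estimate needed is $\geqk \mu\,\chi(H_0)j_1^2(\tfrac{r^*}{S})\chi(H_0)$, and an error of the form $-\varepsilon(S)\chi(H_0)^2$ is not dominated by $\mu\,\chi(H_0)j_1^2(\tfrac{r^*}{S})\chi(H_0)$ modulo compacts, since $j_1^2\leq 1$. For the same reason your final claim that the inequality holds for \emph{every} $\chi$ with $\lambda_0\in\supp\chi$ and with $\mu=1$ is unjustified; the formulation ``there exist $\chi$ and $\mu$, for $S$ large enough'' is not cosmetic. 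Closing this step --- either by proving genuine compactness of these remainders by a finer argument, or by reorganising the estimate so that the small non-compact remainders are absorbed --- is the missing content, and it is what~\cite{thdaude} supplies and what this paper relies on by citation.
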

\begin{rem}
It is interesting to remark the difference in the formulation of Propositions~\ref{prop:mourre_estimate_DH}~ and~\ref{lemme:IV411daude}. Only the latter truly restricts the size of the neighbourhood on which we have a Mourre estimate, Proposition~\ref{prop:mourre_estimate_DH} on the other hand, simply forbids a Mourre estimate on a neighbourhood of $0$.
\end{rem}
Combining the two previous results leads to:
\begin{prop}
\label{prop:mourreH0}
Let $\lambda_0 \in \mathbb{R}^*$:
\begin{itemize}
\item If $\lambda_0 >0$, then one can find an interval $I \subset (0,+\infty)$ containing $\lambda_0$ and $\mu>0$ such that:
\begin{equation} \bm{1}_I(H_0)i[H_0,A_+(S) + A_-(S)]\bm{1}_I(H_0) \geqk \mu \bm{1}_I(H_0), \end{equation}
for large enough $S\in\mathbb{R}_+^*$.
\item If $\lambda_0<0$, then one can find an interval $I\subset (-\infty,0)$ containing $\lambda_0$ and $\mu>0$ such that:
\begin{equation}
\bm{1}_I(H_0)i[H_0,A_+(S) - A_-(S)]\bm{1}_I(H_0) \geqk \mu \bm{1}_I(H_0), 
\end{equation}
for large enough $S\in\mathbb{R}_+^*$.
\end{itemize}
\end{prop}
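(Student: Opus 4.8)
The plan is to obtain Proposition~\ref{prop:mourreH0} simply by adding the two one-sided estimates already in hand: Proposition~\ref{prop:mourre_estimate_DH} at the double horizon and Proposition~\ref{lemme:IV411daude} at the simple one. I will treat the case $\lambda_0>0$ in detail; the case $\lambda_0<0$ is identical after replacing $A_-(S)$ by $-A_-(S)$, using the remark following Proposition~\ref{prop:mourre_estimate_DH} together with the fact that Proposition~\ref{lemme:IV411daude} holds for every real $\lambda_0$.

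First I would arrange a single cut-off feeding both estimates. Proposition~\ref{lemme:IV411daude} supplies, for $S$ large enough, a function $\chi\in C^{\infty}_0(\mathbb{R})$, which we may and do take equal to $1$ on a neighbourhood $I\ni\lambda_0$, and a constant $\mu_+>0$ with $\chi(H_0)i[H_0,A_+(S)]\chi(H_0)\geqk\mu_+\chi(H_0)j_1^2(r^*/S)\chi(H_0)$. Since $\lambda_0>0$ I would further shrink the support of $\chi$, replacing it by $\phi\chi$ for a suitable $\phi\in C^{\infty}_0([\lambda_0/2,+\infty))$ equal to $1$ near $\lambda_0$, so that in addition $\supp\chi\subset[\lambda_0/2,+\infty)\subset(0,+\infty)$ while still $\chi\equiv1$ on a (possibly smaller) neighbourhood $I$ of $\lambda_0$; this is harmless because $\phi(H_0)$ is bounded and commutes with $H_0$, so $\phi(H_0)K\phi(H_0)$ stays compact for compact $K$ and the $A_+$-estimate survives the replacement. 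For this $\chi$, Proposition~\ref{prop:mourre_estimate_DH} with $\mu=\lambda_0/2$ gives $\chi(H_0)i[H_0,A_-(S)]\chi(H_0)\geqk(\lambda_0/2)\chi(H_0)j_-^2(r^*/S)\chi(H_0)$. Adding the two inequalities and using bilinearity of the commutator on the core $\mathscr{S}$ to write $i[H_0,A_+(S)+A_-(S)]=i[H_0,A_+(S)]+i[H_0,A_-(S)]$, I obtain $\chi(H_0)i[H_0,A_+(S)+A_-(S)]\chi(H_0)\geqk\nu\,\chi(H_0)\bigl(j_-^2(r^*/S)+j_1^2(r^*/S)\bigr)\chi(H_0)$, with $\nu=\min(\mu_+,\lambda_0/2)>0$.

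The one genuinely non-formal point is that $j_-$ and $j_1$ have disjoint supports, so $j_-^2+j_1^2$ vanishes near $r^*=-\frac{3}{2}S$ and is not bounded below by a positive constant. To close this hole I would choose $k\in C^{\infty}_0(\mathbb{R})$, supported in the compact $r^*$-region where $j_-^2(\cdot/S)+j_1^2(\cdot/S)$ is small, such that $j_-^2(t/S)+j_1^2(t/S)+k^2(t/S)\geq c$ on $\mathbb{R}$ for some $c>0$. Then $\chi(H_0)\bigl(j_-^2+j_1^2\bigr)(r^*/S)\chi(H_0)\geq c\,\chi(H_0)^2-\chi(H_0)k^2(r^*/S)\chi(H_0)$, and the last operator is compact by Corollary~\ref{critere_compact} since $k(\cdot/S),\chi\in C_{\infty}$. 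Combined with the previous step this gives $\chi(H_0)i[H_0,A_+(S)+A_-(S)]\chi(H_0)\geqk\mu\,\chi(H_0)^2$ with $\mu=c\nu>0$.

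Finally I would pass from the smooth localisation to the spectral projection: since $\chi\equiv1$ on $I$, one has $\bm{1}_I(H_0)=\chi(H_0)\bm{1}_I(H_0)=\bm{1}_I(H_0)\chi(H_0)$ and $\chi(H_0)^2\bm{1}_I(H_0)=\bm{1}_I(H_0)$, so multiplying the last inequality on both sides by the bounded operator $\bm{1}_I(H_0)$ — and using once more that compactness is stable under multiplication by bounded operators — yields $\bm{1}_I(H_0)i[H_0,A_+(S)+A_-(S)]\bm{1}_I(H_0)\geqk\mu\,\bm{1}_I(H_0)$, valid for all sufficiently large $S$. Everything here but the ``gap'' argument of the third paragraph is bookkeeping; that step, absorbing the non-overlap of the two spatial cut-offs into the compact error via Corollary~\ref{critere_compact}, is where any real care is needed.
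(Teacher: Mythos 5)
Your proposal is correct and is essentially the paper's (implicit) argument: the paper simply states that Propositions~\ref{prop:mourre_estimate_DH} and~\ref{lemme:IV411daude} combine, and the standard way to do so is exactly what you wrote — add the two one-sided estimates for a single cut-off $\chi$ equal to $1$ near $\lambda_0$ with small support in $(0,+\infty)$ (resp. $(-\infty,0)$ with $-A_-$), absorb the region where $j_-^2+j_1^2$ degenerates into the compact error via Corollary~\ref{critere_compact}, and then sandwich with $\bm{1}_I(H_0)$. No substantive difference from the paper's route.
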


\subsection{Mourre estimate for $H$}
Now that we have at our disposition a Mourre estimate for $H_0$, we can deduce from it Mourre estimates for any operator $H$ satisfying~\eqref{eq:formeH}. Their spectral theory is closely related to that of $H_0$ as illustrated by the following lemma.
\begin{lemme}
\label{lemme:weyl}
For any $\chi \in C^{\infty}_0(\mathbb{R})$, $(H_0-i)^{-1} -(H-i)^{-1}$ and  $\chi(H_0)-\chi(H)$ are compact. In particular, $H_0$ and $H$ have the same essential spectrum.(Weyl's Theorem).
\end{lemme}
\begin{proof}
One has for any $z\in \mathbb{C}\setminus \mathbb{R}$:
 \begin{align*} (H_0-z)^{-1} -(H-z)^{-1}&= (H-z)^{-1}(H-H_0)(H_0-z)^{-1},\\&=(H-z)^{-1}((h^2-1)H_0+\tilde{V})(H_0-z)^{-1},
 \end{align*}
for some matrix $\tilde{V}$ whose coefficients are in $C_\infty(\mathbb{R})$. Compactness of $(H_0-i)^{-1} -(H-i)^{-1}$ is, once more, a consequence of Corollary~\ref{critere_compact}. That of $\chi(H_0)-\chi(H)$ follows from this since the Helffer-Sjöstrand formula\footnote{see~\cite[Proposition 7.2]{Helffer:1987aa} } leads to:
\begin{equation}
\label{eq:helffer}
\chi(H)-\chi(H_0)=\frac{i}{2\pi} \int \partial_{\bar{z}}\tilde{\chi}(z)\left((H-z)^{-1}-(H_0-z)^{-1} \right)\textrm{d}z\wedge\textrm{d}\bar{z},
\end{equation}
the integral converges in norm so compactness of the integrand implies that of the integral.
\end{proof}
An immediate consequence of Lemma~\ref{lemme:weyl} is that for any $\chi \in C^{\infty}_0(\mathbb{R})$:
\begin{equation} \chi(H)[iH,A(S)]\chi(H)\eqk \chi(H_0)[iH,A(S)]\chi(H_0). \end{equation}
Now, writing $H=H_0 + (h^2-1)H_0 + h[H_0,h] + V$, let us consider: \[\chi(H_0)[(h^2-1)H_0+h[H_0,h]+ V,A_\pm(S)]\chi(H_0),\] we will in fact find that it is compact, so that:
\begin{equation} \chi(H)[iH,A]\chi(H)\eqk \chi(H_0)[iH_0,A]\chi(H_0). \end{equation}
We recall our main tool:
\begin{coro*}{Corollary~\ref{critere_compact}, Section~\ref{analytic_framework}}.
\center{\fbox{If $f,\chi \in C_{\infty}$ then $f(r^*)\chi(H_0)$ is compact.}}
\end{coro*}

\medskip

To simplify notations we drop the dependence on $S$ of the operator $A_-$. Consider first:
\begin{equation} [(h^2-1)H_0,A_\pm]=(h^2-1)[H_0,A_\pm]-[A_\pm,h^2-1]H_0.\end{equation}
$(h^2-1)\in \bm{S}^{1,1}$ so, by Corollary~\ref{critere_compact}, $(h^2-1)\chi(H_0)$ is compact. Therefore, so is: $\chi(H_0)(h^2-1)=((h^2-1)\chi(H_0))^{*}$. Since $[H_0,A_\pm] \in B(D(H_0),\mathscr{H})$, we conclude that  $\chi(H_0)(h^2-1)[H_0,A_\pm]\chi(H_0)$ is compact. Moreover: \[[A_-,h^2-1]=-iR_-(r^*)2hh' \in \bm{S}^{\infty,1},\] so $[A_-,h^2-1]\chi(H_0)$ is also compact. 

\noindent Next we consider the term:
\[[A_+,h^2-1]=\Gamma^1(R_+(r^*,\mathfrak{D})(h^2-1) - ((R_+(r^*,\mathfrak{D})(h^2-1))^*).\]
Note that: \[\begin{aligned}R_+(r^*,\mathfrak{D})(h^2-1)&=R_+(r^*,\mathfrak{D})\bra{r^*}^{-1}\bra{r^*}(h^2-1),\\&=R_+(r^*,\mathfrak{D})\bra{r^*}^{-1}j^2_1(\frac{r^*}{S})\bra{r^*}(h^2-1).\end{aligned}\]
The last equality is a consequence of the choice of support of $j_1$ and $j_+$: recall that $j_1(t)=1$ for $t\geq -1$ and $r^*\geq -S$ when $j_+(\frac{r^*-\kappa^{-1}\ln|\mathfrak{D}|}{S})\neq 0$ so $j^2(\frac{r^*}{S})=1$ whenever the term is non-zero.
$\bra{r^*}j^2_1(\frac{r^*}{S})(h^2-1)\chi(H_0)$ is therefore compact because $j^2_1(\frac{r^*}{S})(h^2-1)\in\bm{S}^{1,\infty}$.
\newline \noindent Additionally, Lemma~\ref{lemme:r+bounded} implies that $R_+(r^*,\mathfrak{D})\bra{r^*}^{-1}$ extends to a bounded operator on $\mathscr{H}$. The compactness of $\chi(H_0)[(h^2-1)H_0,A_{\pm}]\chi(H_0)$ follows.
The term:
\[[V,A_+]=[VR_+(r^*,\mathfrak{D})\Gamma^1 - \Gamma^1R_+(r^*,\mathfrak{D})V],\]
is treated identically: \[R_+(r^*,\mathfrak{D})V=R_+(r^*,\mathfrak{D})\bra{r^*}^{-1}\bra{r^*}j_1^2(\frac{r^*}{S})V,\] and $j_1^2(\frac{r^*}{S})V \in \bm{S}^{1,\infty}$ so $R_+(r^*,\mathfrak{D})V\chi(H_0)$ is compact.
Lastly using:
\[\begin{gathered}[V,A_-]=iR_-(r^*)V' \in \bm{S}^{\infty,1},\\ h[H_0,h]=-ih(\Gamma^1\partial_{r^*}h +\sqrt{\Delta_\theta}g(r^*)\Gamma^2\partial_\theta h), \\
\begin{aligned}[h[H_0,h],A_-]=&R_-(r^*)\big[h'(\Gamma^1h' +\Gamma^2\sqrt{\Delta_\theta}g(r^*)\partial_\theta h)+h\Gamma^1\partial^2_{r^*}h \\&+h\Gamma^2\sqrt{\Delta_\theta}g(r^*)\partial_{r^*}\partial_\theta h+h\Gamma^2\sqrt{\Delta_\theta}g'(r^*)\partial_\theta h\big] \in \bm{S}^{\infty,1},\end{aligned}
\\ [h[H_0,h],A_+]=-i\Gamma^1[h\partial_{r^*}h,R_+(r^*,\mathfrak{D})]-i[h\sqrt{\Delta_\theta}g(r^*)\Gamma^2\partial_\theta h,\Gamma^1R_+(r^*,\mathfrak{D})].
\end{gathered}\]
and similar arguments as before, we conclude that the remaining terms are also compact. Therefore, we have proved the following:
\begin{prop}
Let $H$ be an operator defined by~\eqref{eq:formeH}, then the conclusion of Proposition~\ref{prop:mourreH0} is true with $H$ in place of $H_0$.
\end{prop}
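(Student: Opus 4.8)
The plan is to deduce the statement from the Mourre estimate for $H_0$ already in hand (Proposition~\ref{prop:mourreH0}) together with the comparison identity just established, namely that for every $\chi\in C_0^\infty(\mathbb{R})$, every $S\ge 1$ and every $A\in\{A_\pm(S),\,A_+(S)\pm A_-(S)\}$ one has $\chi(H)i[H,A]\chi(H)\eqk\chi(H_0)i[H_0,A]\chi(H_0)$, and with the fact (Lemma~\ref{lemme:weyl}) that $\chi(H)-\chi(H_0)$ is compact. Everything substantial is therefore already proved: self-adjointness of $H$ on $D(H_0)$ and of the conjugate operators, the regularity and boundedness of the (iterated) commutators (Propositions~\ref{prop:technical_mourre1} and~\ref{prop:technical_mourre2}), the compactness of $\chi(H_0)[(h^2-1)H_0+h[H_0,h]+V,A_\pm]\chi(H_0)$, and the inequality for $H_0$ itself. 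The only genuinely new step is a routine change of localisation: Proposition~\ref{prop:mourreH0} is phrased with the sharp spectral cutoff $\bm{1}_I$, but one cannot pass from $\bm{1}_I(H_0)$ to $\bm{1}_I(H)$ directly because $\bm{1}_I$ is not norm-continuous in the operator, whereas a smooth $\chi$ can be freely transferred between $H_0$ and $H$ modulo compacts.

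Concretely, fix $\lambda_0\in\mathbb{R}^*$ and treat, say, $\lambda_0>0$ with $A=A_+(S)+A_-(S)$; the case $\lambda_0<0$ is identical with $A=A_+(S)-A_-(S)$. Proposition~\ref{prop:mourreH0} furnishes an open interval $I\subset(0,+\infty)$ with $\lambda_0\in I$, a constant $\mu>0$, and an $S_0$ such that $\bm{1}_I(H_0)i[H_0,A]\bm{1}_I(H_0)\geqk\mu\,\bm{1}_I(H_0)$ for all $S\ge S_0$. Fix such an $S$, choose a smaller open interval $J$ with $\lambda_0\in J$ and $\overline{J}\subset I$, and a cutoff $\chi\in C_0^\infty(\mathbb{R})$, $0\le\chi\le 1$, with $\chi\equiv 1$ on $\overline{J}$ and $\supp\chi\subset I$. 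Conjugating the $H_0$-estimate by $\chi(H_0)$ and using the exact identity $\chi(H_0)\bm{1}_I(H_0)=\chi(H_0)$ gives $\chi(H_0)i[H_0,A]\chi(H_0)\geqk\mu\,\chi(H_0)^2$. Applying the comparison identity and then replacing $\chi(H_0)^2$ by $\chi(H)^2$ — legitimate since $\chi(H)^2-\chi(H_0)^2=\chi(H)(\chi(H)-\chi(H_0))+(\chi(H)-\chi(H_0))\chi(H_0)$ is compact — yields $\chi(H)i[H,A]\chi(H)\geqk\mu\,\chi(H)^2$. Finally, conjugating by $\bm{1}_J(H)$ and using the exact identity $\bm{1}_J(H)\chi(H)=\bm{1}_J(H)$ produces $\bm{1}_J(H)i[H,A]\bm{1}_J(H)\geqk\mu\,\bm{1}_J(H)$ for all $S\ge S_0$, which is precisely the conclusion of Proposition~\ref{prop:mourreH0} with $H$ in place of $H_0$, on the interval $J\ni\lambda_0$ (contained in $(0,+\infty)$, respectively $(-\infty,0)$).

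There is no real obstacle here; the only points requiring a little care are that $\geqk$ is stable under conjugation by a bounded operator (conjugating a compact operator by bounded operators keeps it compact) and, more importantly, that the two conjugations — by $\chi(H_0)$ and by $\bm{1}_J(H)$ — be \emph{exact}, not merely valid modulo compacts. This is exactly why one insists that $\supp\chi\subset I$ and that $\chi\equiv 1$ on a neighbourhood of $\lambda_0$ containing $\overline{J}$: then $\chi\cdot\bm{1}_I=\chi$ and $\bm{1}_J\cdot\chi=\bm{1}_J$ as bounded Borel functions, hence as operators through the functional calculus.
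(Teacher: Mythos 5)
There is a genuine gap: you have assumed precisely the point that the paper's proof of this proposition establishes. The identity $\chi(H)i[H,A]\chi(H)\eqk\chi(H_0)i[H_0,A]\chi(H_0)$ is \emph{not} among the previously available results, and it does not follow from Lemma~\ref{lemme:weyl} together with Propositions~\ref{prop:technical_mourre1} and~\ref{prop:technical_mourre2}. Lemma~\ref{lemme:weyl} only lets you replace $\chi(H)$ by $\chi(H_0)$ on the outside, i.e.\ it gives $\chi(H)i[H,A]\chi(H)\eqk\chi(H_0)i[H,A]\chi(H_0)$; the substantive step is to replace $i[H,A]$ by $i[H_0,A]$ in the middle, that is, to show that
\[
\chi(H_0)\,\bigl[(h^2-1)H_0+h[H_0,h]+V,\;A_\pm(S)\bigr]\,\chi(H_0)
\]
is compact. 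This is exactly what the paper proves here, and it is not routine: it uses the decay classes $h^2-1,\partial_{r^*}h,\partial_\theta h\in\bm{S}^{1,1}$ and $V\in\bm{S}^{1,1}$, the bound $|R_+(r^*,\lambda_k)|\leq C\bra{r^*}$ of Lemma~\ref{lemme:r+bounded} (so that $R_+(r^*,\mathfrak{D})\bra{r^*}^{-1}$ is bounded), the support relation $R_+(r^*,\mathfrak{D})=j_1^2(\tfrac{r^*}{S})R_+(r^*,\mathfrak{D})$ coming from the choice of cut-offs (which makes $j_1^2(\tfrac{r^*}{S})(h^2-1)$ and $j_1^2(\tfrac{r^*}{S})V$ lie in $\bm{S}^{1,\infty}$), the explicit computations $[A_-,h^2-1]=-iR_-(r^*)2hh'$, $[V,A_-]=iR_-(r^*)V'$, $[h[H_0,h],A_\pm]$, and finally the compactness criterion of Corollary~\ref{critere_compact} applied after writing everything in the form (bounded)$\times f(r^*)\chi(H_0)$ with $f\in C_\infty$. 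None of this appears in your attempt; the sentence ``everything substantial is therefore already proved'' is where the argument breaks down, since the compactness of the localised commutator of the perturbation with $A_\pm(S)$ is the content of the proposition, not a prerequisite of it.

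The part you do carry out — transferring the estimate from sharp cut-offs $\bm{1}_I(H_0)$ to a smooth $\chi$, then to $\chi(H)$ modulo compacts, and back to $\bm{1}_J(H)$ with $\chi\equiv 1$ on $\overline{J}$ and $\supp\chi\subset I$ — is correct and is indeed left implicit in the paper, so it is a useful supplement. But as written the proposal only performs this routine localisation bookkeeping and cites, rather than proves, the comparison identity that the proposition actually requires.
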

\subsection{Propagation estimates and other consequences of the Mourre estimate}
\subsubsection{On the spectrum of $H_0$ and $H$}
\label{sec:conseq_mourre_spectre_H_0}
The first important consequence of the estimate above is that Theorem~\ref{thm:mourre} applies to $H$ and $H_0$, on any interval disjoint from $\{0\}$. Hence, $H$ and $H_0$ have no singular continuous spectrum and all eigenvalues, other than possibly $0$, are of finite multiplicity. In fact, $H_0$ has no eigenvalue, as the following classical ``Grönwall lemma'' argument shows. 
\begin{proof}[Proof that $H_0$ has no pure point spectrum]
We only need to seek eigenvalues for $H_0$ on each of the subspaces $\mathscr{H}_{k,n}$, which, we recall, can be identified with $[L^2(\mathbb{R})]^4$. Let $\lambda \in \mathbb{R}$ and suppose that $u\in [L^2(\mathbb{R})]^4$ satisfies: $$H^{k,n}_0u= \left(\lambda + \frac{ap}{r^2_++a^2}-\frac{ap}{r_e^2+a^2}\right) u,$$ then $u \in [H^1(\mathbb{R})]^4$ and $u$ vanishes at infinity. This is also true of the function $w: r^* \mapsto e^{-i\Gamma^1\lambda r^*}u(r^*)$. $w$ additionally satisfies:
\begin{equation*}\begin{aligned}w'(r^*)&=e^{-i\Gamma^1\lambda r^*}(-i\Gamma^1)(\lambda u(r^*) - \Gamma^1D_{r^*}u(r^*)),\\&=e^{-i\Gamma^1\lambda r^*}(-i\Gamma^1)I(r^*)e^{i\Gamma^1\lambda r^*}w(r^*),\end{aligned}\end{equation*}
where: $I(r^*)=\left(-\lambda_kg(r^*)\Gamma^2 + f(r^*)-\left(\frac{ap}{r^2_++a^2}-\frac{ap}{r_e^2+a^2}\right)\right)$. From this, we deduce:
\[||w(r^*)|| \leq \int_{r^*}^{+\infty} \left\lVert I(r^*)\right\rVert ||w(r^*)|| \textrm{d}r^*,\]
Because $|| I ||$ is integrable near $+\infty$, it follows from the integral form of Grönwall's lemma that $w=0$ and therefore $u=0$.
\end{proof}
Using the separability of the Dirac equation in Kerr-de Sitter, a modified version of this argument shows that the full Dirac operator has no eigenvalues, we refer to~\cite{Belgiorno:2009aa}. We summarise these conclusions in the following lemma:
\begin{lemme}
\label{lemme:csq_spectral_mourre}
Let $H$ be an operator defined by~\eqref{eq:formeH} then:
\begin{itemize}
\item $H$ has no singular continuous spectrum,
\item $\sigma_{\textrm{ess}}(H)=\mathbb{R}$,
\item $\sigma_{\textrm{pp}}(H)\subset \{0\}$ and if $0$ is an eigenvalue then it has infinite multiplicity.\footnote{$\sigma_{\textrm{pp}}(H)$, the \emph{pure-point} spectrum, is the set of all eigenvalues of $H$. It is not to be confused with the \textrm{discrete} spectrum, $\sigma_\textrm{disc}(H)=\mathbb{R}\setminus \sigma_{\textrm{ess}}(H)$, the set of all isolated eigenvalues with finite multiplicity.}
\end{itemize}
\end{lemme}

\subsubsection{Strict Mourre estimates}
\label{sec:strict_mourre}
Let $H\in C^1(A)$, $(H,A)$ is said to satisfy a \emph{strict} Mourre estimate on some interval $I\subset \bbR$, if it satisfies a Mourre estimate with vanishing compact error. 
This slightly stricter condition will be required shortly for the important conclusion of Theorem~\ref{thm:sigal}. Nevertheless, if $(H,A)$ satisfies a Mourre estimate on some open interval $I\subset \bbR$, then for any $\lambda \in I$ that is \emph{not} an eigenvalue of $H$, one can find a small neighbourhood $J=(-\varepsilon +\lambda, \lambda -\varepsilon)$ of $\lambda \in I$ such that it satisfies a strict Mourre estimate on $J$. To see this we give a simplified version of the argument in the proof of~\cite[Lemma~7.2.12]{Amrein:1996aa}: let, for any $n$ large enough such that $(-\frac{1}{n}+\lambda,\lambda+\frac{1}{n}) \subset I$, $E_n=\mathbb{E}((-\frac{1}{n}+\lambda,\lambda+\frac{1}{n}))$; where $\mathbb{E}$ is the spectral measure of $H$. Then: \[\slim_{n\to \infty} E_n=\mathbb{E}(\{\lambda\})=0,\]
as $\lambda$ is not an eigenvalue. It follows that for any compact operator $K$: \[\lim_{n\to \infty} E_nKE_n = 0.\]
Therefore, if $\varepsilon >0$, one can find $N$, such that for any $n\geq N$ :
\[ |(E_nKE_n|x)\leq \varepsilon ||x||^2, \]
so that for $n\geq N$:
\[ E_nKE_n \geq - \varepsilon \Rightarrow E_nKE_n \geq -\varepsilon E_n. \]
Hence, if $\bm{1}_I(H)i[H,A]\bm{1}_I(H)\geq \mu + K$, then:
\[ E_ni[H,A]E_n \geq (\mu -\varepsilon)E_n.\]
Consequently, on small enough intervals around any non-eigenvalue, one has a strict Mourre estimate for any $\nu \in (0,\mu)$.

In the case of $H$ and $H_0$, the only possible eigenvalue is $0$. All our estimates avoid this point, therefore they can all be upgraded to strict estimates on small enough intervals around any point of $\bbR^*$.

\subsubsection{Minimal velocity estimate}
One of the most powerful consequences of the hypotheses of Mourre theory, largely discussed and optimised in~\cite{Amrein:1996aa}, is that it leads to a (generalised) limiting absorption principle. In our case, thanks to Proposition~\ref{prop:technical_mourre2}, $H_0,H \in C^2(A)$, and we directly have access to an abstract propagation estimate due to Sigal-Soffer~\cite{Sigal:1988aa}:
\begin{thm}
\label{thm:sigal}
Let $(H,A)$ be a pair of self-adjoint operators on a Hilbert space $\mathscr{H}$. Suppose that $A$ is a conjugate operator for $H$ on $I\subset \mathbb{R}$ and that $H\in C^{1+\varepsilon}(A),(\varepsilon\in\mathbb{R}_+^*)$. Let $\mu \in \mathbb{R}_+^*$ be such that: 
$$\textbf{1}_I(H)i[H,A]\textbf{1}_I(H) \geq \mu \textbf{1}_{I}(H).$$
Then, for any $b,\chi \in C^{\infty}_{0}(\mathbb{R})$ such that $\supp \chi\subset I$ and $\supp\,b \subset (-\infty,\mu)$ one has:
\begin{equation}
\begin{aligned}\forall u \in \mathscr{H}, &\int_1^{+\infty} \left\lVert b\left(\frac{A}{t}\right)\chi(H)e^{-iHt}u\right\rVert^2 \frac{\textrm{d}t}{t} \leq C ||u||^2,
\\ &\slim_{t\to+\infty}b\left( \frac{A}{t}\right)\chi(H)e^{-iHt} =0.\end{aligned}
\end{equation}
\end{thm}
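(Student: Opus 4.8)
The statement is the abstract minimal‑velocity estimate of Sigal--Soffer, so the plan is to reproduce the \emph{positive commutator / propagation observable} argument; a detailed treatment is available in~\cite{Sigal:1988aa} and~\cite{Amrein:1996aa}. Write $\psi_t = e^{-iHt}u$ and, for a uniformly bounded family of self‑adjoint operators $\Phi(t)$, let $\mathbf{D}\Phi(t) = \partial_t\Phi(t) + i[H,\Phi(t)]$ denote the Heisenberg derivative, so that $\frac{d}{dt}(\psi_t,\Phi(t)\psi_t) = (\psi_t,\mathbf{D}\Phi(t)\psi_t)$. Since $\supp b$ is a compact subset of $(-\infty,\mu)$, the function $g'(x):=(\mu-x)^{-1}b(x)^2$, extended by $0$ on $[\mu,\infty)$, is smooth, non‑negative and compactly supported; set $g(x)=\int_{-\infty}^x g'(y)\,dy$, a bounded increasing function, and take as propagation observable $\Phi(t)=\chi(H)\,g(A/t)\,\chi(H)$, so that $\sup_{t\geq1}\|\Phi(t)\|\leq\|g\|_\infty$. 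The algebraic identity $(\mu-x)g'(x)=b(x)^2$ is what will ultimately produce the right‑hand side of the estimate.

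\textbf{The differential inequality.} Since $xg'(x)$ is bounded, $\partial_t g(A/t)=-\tfrac1t\,\tfrac{A}{t}\,g'(A/t)$ is a bounded operator. For the commutator term one expands $g(A/t)$ via the Helffer--Sjöstrand functional calculus~\cite{Helffer:1987aa} and uses $[H,(A-z)^{-1}]=-(A-z)^{-1}[H,A](A-z)^{-1}$, valid because $H\in C^1(A)$; combining this with $H\in C^{1+\varepsilon}(A)$ and the facts that $i[H,A]\in\mathcal B(D(H),\mathscr H)$ and $[[H,A],A]$ is bounded $D(H)\to D(H)^*$, one obtains, after localising by $\chi(H)$ and commuting $g'(A/t)^{1/2}$ through it (which costs only $O(t^{-2})$ by $C^1(A)$),
\[ \chi(H)\,i[H,g(A/t)]\,\chi(H) \geq \tfrac1t\, g'(A/t)^{1/2}\big(\chi(H)\,i[H,A]\,\chi(H)\big)g'(A/t)^{1/2} - C\,t^{-1-\varepsilon}. \]
As $\supp\chi\subset I$, the strict Mourre estimate yields $\chi(H)\,i[H,A]\,\chi(H)\geq\mu\,\chi(H)^2$; substituting and commuting once more,
\[ \mathbf{D}\Phi(t) \geq \tfrac1t\,\chi(H)\big(\mu-\tfrac{A}{t}\big)g'(A/t)\,\chi(H) - C\,t^{-1-\varepsilon} = \tfrac1t\,\chi(H)\,b(A/t)^2\,\chi(H) - C\,t^{-1-\varepsilon}. \]
Integrating this from $1$ to $T$, using unitarity of $e^{-iHt}$ and $\sup_t\|\Phi(t)\|<\infty$, and letting $T\to\infty$ (so that $\int_1^\infty t^{-1-\varepsilon}\,dt<\infty$) gives the first estimate.

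\textbf{The strong limit.} Set $f(t)=\|b(A/t)\chi(H)\psi_t\|^2=(\psi_t,\chi(H)b(A/t)^2\chi(H)\psi_t)$. A direct Heisenberg‑derivative computation --- using only $C^1(A)$, that $\chi(H)$ commutes with $H$, and that $x\,b(x)b'(x)$ is bounded --- shows $|f'(t)|\leq C/t$ (with $\|u\|=1$). If $f(t)\not\to0$ there are $\delta>0$ and $t_n\to\infty$ with $f(t_n)\geq\delta$; the bound on $f'$ then forces $f\geq\delta/2$ on intervals of fixed relative length $[t_n,(1+c_\delta)t_n]$, which, extracted to be pairwise disjoint, makes $\int_1^\infty f(t)\,\frac{dt}{t}=\infty$, contradicting the first part. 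Hence $f(t)\to0$, and since the family $b(A/t)\chi(H)e^{-iHt}$ is uniformly bounded this gives $\slim_{t\to+\infty}b(A/t)\chi(H)e^{-iHt}=0$.

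\textbf{Main obstacle.} The crux is the commutator expansion $i[H,g(A/t)]=\tfrac1t\,g'(A/t)^{1/2}\,i[H,A]\,g'(A/t)^{1/2}+R(t)$ with an \emph{integrable} remainder $\|\chi(H)R(t)\chi(H)\|\leq C\,t^{-1-\varepsilon}$: this is precisely where $H\in C^{1+\varepsilon}(A)$ (here one even has $C^2(A)$ by Proposition~\ref{prop:technical_mourre2}) is indispensable, since with only $C^1(A)$ the remainder is merely $O(1/t)$ and the time integral diverges. Carrying this out rigorously requires handling the unboundedness of $A$ --- tamed by the compact support of $g'$, so that $g'(A/t)$, $\tfrac{A}{t}g'(A/t)$ and $g'(A/t)^{1/2}$ are bounded functions of $A$ --- together with the domain bookkeeping for the iterated commutators and the auxiliary commutators $[\chi(H),g'(A/t)^{1/2}]$; all of this is routine once the almost‑analytic functional calculus is in place.
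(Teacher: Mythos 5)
Your argument is correct, but note that the paper itself contains no proof of this statement: Theorem~\ref{thm:sigal} is quoted as an abstract result of Sigal--Soffer~\cite{Sigal:1988aa} (see also~\cite{Amrein:1996aa}), and the paper's only contribution is to verify its hypotheses ($C^2(A)$ regularity via Proposition~\ref{prop:technical_mourre2}, and the strict Mourre estimate obtained as in Section~\ref{sec:strict_mourre}). What you have written is precisely the standard positive-commutator proof from that literature: the observable $\Phi(t)=\chi(H)g(A/t)\chi(H)$ with $(\mu-x)g'(x)=b(x)^2$, the Heisenberg-derivative lower bound $\mathbf{D}\Phi(t)\geq \tfrac1t\,\chi(H)b(A/t)^2\chi(H)-Ct^{-1-\varepsilon}$, integration to get the weighted $L^2$ bound, and the $|f'(t)|\leq C/t$ plus divergent-logarithm extraction argument for the strong limit --- all of these steps are sound, and you correctly localise the genuinely delicate point, namely the almost-analytic commutator expansion of $i[H,g(A/t)]$ with an integrable remainder, which is exactly where $C^{1+\varepsilon}(A)$ (rather than mere $C^1(A)$) enters. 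That expansion is asserted rather than carried out, so as written your text is a faithful reconstruction of the cited proof rather than a self-contained one; to make it complete one would have to write out the second-order Helffer--Sjöstrand expansion and the domain bookkeeping, as done in~\cite{Sigal:1988aa} and in~\cite[Chapter~7]{Amrein:1996aa}.
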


The importance of Theorem~\ref{thm:sigal} is more obvious when the conjugate operator can be replaced by simpler operators that help to understand the propagation of fields. In~\cite[Lemma~IV.4.13]{Daude:2010aa}, it is shown that in the case of the operators under consideration here, $A$ can be replaced with $|r^*|$, and we obtain:

\begin{prop}
\label{prop:mve}
Let $\chi \in C_{0}^{\infty}(\mathbb{R})$ such that $\supp \chi \cap \{0\} = \emptyset$, then for any $H$ defined by Equation~\eqref{eq:formeH}, there are $\varepsilon_\chi,C\in \mathbb{R}_+^*$ such that for any $\psi \in \mathscr{H}$:

\begin{equation} \int_1^{\infty} \left\lVert \bm{1}_{[0,\varepsilon_{\chi}]}\left(\frac{|r^*|}{t} \right)\chi(H)e^{-itH}\psi \right\rVert^2 \frac{dt}{t} \leq C ||\psi||^2. \end{equation}
Furthermore:
\begin{equation} 
\label{eq:slim_mve}
\slim_{t\to +\infty}\bm{1}_{[0,\varepsilon_\chi]}\left(\frac{|r^*|}{t}\right)\chi(H)e^{-itH}=0.
\end{equation}
\end{prop}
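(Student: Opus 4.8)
The plan is to derive Proposition~\ref{prop:mve} from the abstract propagation estimate of Theorem~\ref{thm:sigal}, applied with the (strict) Mourre estimates of Section~\ref{mourre}, and then to trade the conjugate operator for the multiplication operator $|r^*|$. First I would localise. Since $\supp\chi$ is a compact subset of $\bbR^*$, cover it by finitely many bounded open intervals $I_1,\dots,I_N$, each contained in $(0,+\infty)$ or in $(-\infty,0)$ and each containing a point $\lambda_j\neq0$; by the Proposition extending Proposition~\ref{prop:mourreH0} to operators of the form~\eqref{eq:formeH} there is, for a suitable $S_j\geq1$, a Mourre estimate on $I_j$ for the pair $(H,A_j)$ with $A_j=A_+(S_j)+A_-(S_j)$ if $I_j\subset(0,+\infty)$ and $A_j=A_+(S_j)-A_-(S_j)$ if $I_j\subset(-\infty,0)$. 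Shrinking $I_j$ about $\lambda_j$ and using that $0$ is the only possible eigenvalue of $H$ (Lemma~\ref{lemme:csq_spectral_mourre}), the discussion of Section~\ref{sec:strict_mourre} turns this into a \emph{strict} estimate $\bm{1}_{I_j}(H)i[H,A_j]\bm{1}_{I_j}(H)\geq\mu_j\bm{1}_{I_j}(H)$ with $\mu_j>0$. Choosing $\chi_j\in C_0^\infty(\bbR)$ with $\supp\chi_j\subset I_j$ and $\sum_j\chi_j\equiv1$ on $\supp\chi$, we have $\chi(H)=\sum_j(\chi_j\chi)(H)$, so it is enough to treat each $\chi_j\chi$, which we again call $\chi_j$. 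By Proposition~\ref{prop:technical_mourre2}, $H\in C^2(A_j)\subset C^{1+\varepsilon}(A_j)$, so $(H,A_j)$ satisfies all the hypotheses of Theorem~\ref{thm:sigal} on $I_j$; hence for every $b\in C_0^\infty(\bbR)$ with $\supp b\subset(-\infty,\mu_j)$ one gets $\int_1^\infty\|b(A_j/t)\chi_j(H)e^{-itH}\psi\|^2\,\frac{\textrm{d}t}{t}\leq C\|\psi\|^2$ and $\slim_{t\to+\infty}b(A_j/t)\chi_j(H)e^{-itH}=0$.

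The core of the matter is then to replace $A_j$ by $|r^*|$; this is \cite[Lemma~IV.4.13]{Daude:2010aa}, which carries over to the present setting once one notes that the spectral decomposition $\mathscr{H}=\bigoplus_{k,n}\mathscr{H}_{k,n}$ and the bounds $|\lambda_k|\geq1$ and $|R_+(r^*,\lambda_k)|\leq C\bra{r^*}$ (Lemma~\ref{lemme:r+bounded}) play exactly the role of their slow-Kerr counterparts. The underlying mechanism is that $A_j$ is comparable to $|r^*|$ on the range of $\chi_j(H)$: near the cosmological horizon, $A_+(S)\bra{r^*}^{-1}$ is a bounded operator by Lemma~\ref{lemme:r+bounded}, while near the double horizon the $A_+(S)$ part is supported in a fixed region $r^*\geq-S$ and $A_-(S)=\frac{1}{2}\{R_-(r^*),D_{r^*}\}$ with $|R_-(r^*)|\leq|r^*|$, where $D_{r^*}\chi_j(H)$ is bounded (Corollary~\ref{corollaire:DrDbounded}, Lemma~\ref{lemme:equiv_quadratic_form}, Lemma~\ref{lemme:domaine_H2}); consequently $\|\bm{1}_{[0,\varepsilon_\chi]}(|r^*|/t)\,A_j\chi_j(H)\psi\|\leq C(1+\varepsilon_\chi t)\|\psi\|$. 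Concretely, one fixes $\tilde b\in C_0^\infty(\bbR)$ with $\tilde b\equiv1$ on $[-1,1]$ and $\supp\tilde b\subset(-2,2)$, and chooses $\varepsilon_\chi>0$ small enough (depending on the $\mu_j$ and the constants above) so that, on each $I_j$,
\[ \bm{1}_{[0,\varepsilon_\chi]}\!\left(\frac{|r^*|}{t}\right)\chi_j(H)=\tilde b\!\left(\frac{A_j}{\nu_j t}\right)\chi_j(H)+R_j(t),\qquad 0<2\nu_j<\mu_j, \]
where, after commuting functions of $H$ past functions of $r^*$ via the Helffer--Sjöstrand formula as in the proof of Proposition~\ref{prop:mourre_estimate_DH}, the remainder $R_j(t)\chi_j(H)$ splits into terms that are either of the form $b(A_j/t)\chi_j(H)$ with $\supp b\subset(-\infty,\mu_j)$, or carry a factor $f(r^*)\chi_j(H)$ with $f\in C_\infty(\bbR)$ — hence compact by Corollary~\ref{critere_compact}, and, multiplied by $\bm{1}_{[0,\varepsilon_\chi]}(|r^*|/t)e^{-itH}$, $\int\,\textrm{d}t/t$-negligible by a second application of Theorem~\ref{thm:sigal} (the lower-order symbol terms $r^*g'(r^*)\mathfrak{D}$, $r^*f'(r^*)$, etc., being disposed of exactly as in the proof of Lemma~\ref{step1_mourrein}). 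Since $\tilde b(A_j/(\nu_j t))$ is of the form $b'(A_j/t)$ with $\supp b'\subset(-2\nu_j,2\nu_j)\subset(-\infty,\mu_j)$, the first term is controlled by the propagation estimate of the first paragraph on $t\geq1/\varepsilon_\chi$, while $\int_1^{1/\varepsilon_\chi}$ contributes a finite constant times $\|\psi\|^2$ since the integrand is bounded by $\|\chi_j\|_\infty^2\|\psi\|^2$; summing over $j$ yields both assertions of Proposition~\ref{prop:mve}.

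I expect the genuine difficulty to be precisely this last replacement step, i.e. making rigorous the heuristic that, along the evolution and on the spectral window $\chi_j(H)$, the conjugate operator $A_j$ behaves like $|r^*|$. One has to control the non-commutativity of $A_j$ with $\chi_j(H)$ and with $\bm{1}_{[0,\varepsilon_\chi]}(|r^*|/t)$ — handled, as throughout Section~\ref{mourre}, by Helffer--Sjöstrand together with the compactness criterion of Corollary~\ref{critere_compact} — as well as the errors produced by the cut-offs $j_\pm$, $j_1$ and by the logarithmic correction $\kappa^{-1}\ln|\mathfrak{D}|$ built into $A_+(S)$, absorbed using Lemma~\ref{lemme:r+bounded}. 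All of this is carried out in \cite[Lemma~IV.4.13]{Daude:2010aa}; the only new ingredients needed here are the decomposition into the subspaces $\mathscr{H}_{k,n}$ and the uniform bounds $|\lambda_k|\geq1$, $|R_+(r^*,\lambda_k)|\leq C\bra{r^*}$, which behave just as in the slow-Kerr case already used repeatedly in Section~\ref{mourre}.
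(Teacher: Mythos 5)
Your proposal is correct and follows essentially the same route as the paper: the paper also obtains Proposition~\ref{prop:mve} by combining the (strict) Mourre estimates of Section~\ref{mourre} with the abstract Sigal--Soffer propagation estimate of Theorem~\ref{thm:sigal}, and then invokes \cite[Lemma~IV.4.13]{Daude:2010aa} for the replacement of the conjugate operator by $|r^*|$, exactly as you do. Your additional sketch of why that replacement carries over (boundedness of $A_+(S)\bra{r^*}^{-1}$ via Lemma~\ref{lemme:r+bounded}, control of $A_-(S)$ through $|R_-(r^*)|\leq|r^*|$ and Corollary~\ref{corollaire:DrDbounded}, Helffer--Sj\"ostrand plus Corollary~\ref{critere_compact} for the commutator errors) is consistent with what the cited lemma does, so there is no gap relative to the paper's argument.
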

This ``minimal velocity estimate'' means that, given a certain energy interval, all fields with energy in that interval, must be outside of the ``cone'' $\{|r^*|<\varepsilon_\chi t \}$ at late times; it will be crucial to the construction of the wave operators.

\subsubsection{Maximal velocity estimate}
Independently of Mourre theory, one can show that we also have a natural ``maximal velocity estimate'', that is a consequence of the geometry:

\begin{prop}
Let $\delta \in (0,1),  b\in C^{\infty}_0(\mathbb{R})$ be such that $\supp \,  b \cap [-1-\delta,1+\delta]=\emptyset$, then there is some constant $C\in \mathbb{R}^*$ such that for any $u\in\mathscr{H}$:
\begin{equation}\label{eq:max_ve} \int_1^{+\infty} \left\lVert b(\frac{r^*}{t})e^{-itH}u\right\rVert^2 \frac{dt}{t} \leq C ||u||^2. \end{equation}
Furthermore, for any $b\in C^{\infty}(\mathbb{R})$ such that $b\equiv 0$ on $[-1-\delta,1+\delta]$ and $b=1$ for $|r|$ large, then:
\begin{equation} \slim_{t\to \infty} b(\frac{r^*}{t})e^{-itH}=0 \end{equation}
\end{prop}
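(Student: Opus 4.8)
The plan is to follow the standard Mourre-style propagation-estimate strategy (as in~\cite{Amrein:1996aa,thdaude,Daude:2010aa}), building a suitable "observable" whose Heisenberg derivative along the flow $e^{-itH}$ is essentially negative definite outside the effective light cone. First I would fix $\delta\in(0,1)$ and choose a smooth, non-increasing function $F\colon\mathbb{R}\to\mathbb{R}$ such that $F\equiv 0$ on $(-\infty,1+\tfrac{\delta}{2}]$, $F'<0$ on $(1+\tfrac{\delta}{2},1+\delta)$, and $F\equiv \mathrm{const}>0$ for $r$ large, with $b^2\le -F'$ on the relevant region (more precisely, I want $\mathrm{supp}\,b\subset(1+\tfrac{\delta}{2},+\infty)$ contained where $-F'\gtrsim b^2$; by symmetry one also handles $r^*\to-\infty$, so really take $F$ even in sign-adapted form or run the argument separately at each end). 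Define the time-dependent observable
\[
\Phi(t)=\chi(H)\,F\!\left(\frac{r^*}{t}\right)\chi(H),
\]
inserting an energy cut-off $\chi\in C_0^\infty(\mathbb{R})$ — this is harmless because $e^{-itH}$ commutes with $\chi(H)$ and one recovers the full estimate from density of $\bigcup_\chi\chi(H)\mathscr{H}$ together with uniform boundedness, the absence of issues at $0$ being irrelevant here since the maximal velocity bound is not a Mourre consequence.

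Next I would compute the Heisenberg derivative
\[
\mathbf{D}\Phi(t)=\frac{\partial\Phi}{\partial t}+i[H,\Phi(t)].
\]
The explicit $t$-derivative contributes $-\tfrac{r^*}{t^2}F'\!\left(\tfrac{r^*}{t}\right)$ sandwiched by $\chi(H)$, which is a \emph{good} (negative) term of size $\tfrac1t$ on the region $|r^*|\sim t$. The commutator term is the heart of the matter: writing $H=hH_0h+V$ with $H_0=\Gamma^1 D_{r^*}+g(r^*)\mathfrak D+f(r^*)$, the only part of $H$ carrying $\partial_{r^*}$ is $hH_0h$, and since $|h^2-1|\le 1-a^2l^2<1$, the "group velocity" $\Gamma^1 h^2$ has operator norm $\le 1+(1-a^2l^2)<2$, strictly less than $1+\delta$ once — wait, this is too crude; the point is rather that $g(r^*)\to 0$ at $\mp\infty$ so the symbol degenerates and the effective propagation speed associated with $\Gamma^1D_{r^*}$ is exactly $h^2\Gamma^1$, whose norm tends to $1$ at both horizons. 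Hence for $r^*$ in the support of $F'(\tfrac{r^*}{t})$, i.e.\ $r^*\gtrsim(1+\tfrac{\delta}{2})t\to+\infty$, one has $h^2\le 1+o(1)$ and the principal part of $i[H,F(\tfrac{r^*}{t})]$ equals $\tfrac1t F'(\tfrac{r^*}{t})\,\Gamma^1 h^2$ up to lower-order terms; since $F'\le 0$ and $\Gamma^1 h^2\le (1+\varepsilon)I$ there, while the good time-derivative term is $\ge -\tfrac{r^*}{t^2}F'(\tfrac{r^*}{t})\ge (1+\tfrac\delta2)\tfrac1t|F'(\tfrac{r^*}{t})|$ up to $o(1/t)$, the bad commutator contribution is dominated and
\[
\mathbf{D}\Phi(t)\le -\frac{c}{t}\,\chi(H)\,b^2\!\left(\frac{r^*}{t}\right)\chi(H)+R(t),\qquad \|R(t)\|\le \frac{C}{t^{1+\varepsilon}},
\]
where the remainder $R(t)$ collects: the second-derivative term $\tfrac1{t^2}F''$ (which is $O(t^{-2})$), the contributions of $\partial_{r^*}h$, $\partial_\theta h$ and $V$ (all in $\bm S^{1,1}$, hence short-range, giving integrable-in-$t$ contributions once localized to $r^*\sim t$), and commutators of $\chi(H)$ with $F(\tfrac{r^*}{t})$, handled via the Helffer–Sj\"ostrand formula as in Lemma~\ref{lemme:weyl} and the proof of Proposition~\ref{prop:mourre_estimate_DH}.

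Finally I would integrate: for $u\in\mathscr{H}$ with $u=\chi(H)u$ (dense, uniformly), set $\varphi(t)=(\Phi(t)e^{-itH}u,e^{-itH}u)$; then $\varphi$ is bounded (since $0\le F\le \mathrm{const}$), and
\[
\frac{d\varphi}{dt}=(\mathbf{D}\Phi(t)\,e^{-itH}u,e^{-itH}u)\le -\frac{c}{t}\left\|b\!\left(\tfrac{r^*}{t}\right)e^{-itH}u\right\|^2+\frac{C}{t^{1+\varepsilon}}\|u\|^2,
\]
so integrating from $1$ to $\infty$ and using boundedness of $\varphi$ gives $\int_1^\infty\|b(\tfrac{r^*}{t})e^{-itH}u\|^2\,\tfrac{dt}{t}\le C\|u\|^2$, which is~\eqref{eq:max_ve}; the strong-limit statement follows by a standard argument combining this integral bound with the fact that $t\mapsto\|b(\tfrac{r^*}{t})e^{-itH}u\|^2$ has integrable-in-$t$ derivative (again by a Heisenberg-derivative computation), so it cannot fail to tend to $0$, and one removes the cut-off $\chi(H)$ by density since $b(\tfrac{r^*}{t})$ is uniformly bounded and $\|b(\tfrac{r^*}{t})(1-\chi(H))e^{-itH}u\|$ is made small uniformly in $t$ by choosing $\chi$ appropriately.

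\textbf{Main obstacle.} The delicate point is controlling the principal symbol of $i[H,F(\tfrac{r^*}{t})]$ \emph{uniformly on the region} $r^*/t\in\mathrm{supp}\,F'$: one must verify that the effective propagation speed $\|h^2\Gamma^1\|$, or more precisely the relevant quadratic form, stays below $1+\delta$ there. This is where the asymptotic flatness of the metric at \emph{both} ends is used — $h^2\to 1$ as $r^*\to\pm\infty$ (from $|h^2-1|\le\Delta_r a^2\sin^2\theta/\sigma^2\to 0$, cf.\ the estimates following~\eqref{eq:defh}) — together with the decay $\partial_{r^*}h,\partial_\theta h\in\bm S^{2,2}$ ensuring the error terms generated when commuting $h$ past $D_{r^*}$ and past $F(\tfrac{r^*}{t})$ are genuinely short-range and thus integrable in $t$. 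The terms coming from $g(r^*)\mathfrak D$ require a brief separate argument: although $\mathfrak D$ is unbounded, $g(r^*)\mathfrak D\in B(D(H_0),\mathscr H)$ by Corollary~\ref{corollaire:DrDbounded}, and on the support of $F'(\tfrac{r^*}{t})$ the factor $g(r^*)=O(1/|r^*|)=O(1/t)$ supplies the needed extra decay. Assembling all of this into a clean $O(t^{-1-\varepsilon})$ remainder is the technical core; everything else is bookkeeping.
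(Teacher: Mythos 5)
Your strategy for the integral estimate \eqref{eq:max_ve} is essentially the one the paper relies on (it simply defers to Daudé, Proposition IV.4.4): a propagation observable $F(\frac{r^*}{t})$, a Heisenberg-derivative computation, and domination of the commutator term by the term $-\frac{r^*}{t^2}F'(\frac{r^*}{t})$ because on $\supp F'(\frac{r^*}{t})$ one has $|r^*|\geq (1+\frac{\delta}{2})t$, a region where $h^2\to 1$; you correctly identified that the crude bound $h^2\leq 2-a^2l^2$ is insufficient and that the decay of $h^2-1$ at the horizons is the key input. Note, however, that your "technical core" is largely empty: since $V$, $h$, $f$ and $g(r^*)\mathfrak{D}$ are all multiplication operators in $r^*$ (or act only in the angular variables), they commute with $F(\frac{r^*}{t})$, so $i[H,F(\frac{r^*}{t})]=\frac{1}{t}h^2\Gamma^1F'(\frac{r^*}{t})$ exactly — there are no remainders from the potentials, from $\partial_{r^*}h,\partial_\theta h$, nor any $F''/t^2$ term — and the energy cut-off $\chi(H)$ is superfluous (it commutes with $H$, hence produces no commutators either; moreover removing it for an estimate integrated over $[1,+\infty)$ would require a constant uniform in $\chi$, not merely smallness uniform in $t$). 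Your description of $F$ is also inconsistent (non-increasing, vanishing on the left and equal to a positive constant on the right); what is needed is $F$ bounded, non-decreasing, with $F'\gtrsim b^2$, which is possible precisely because $b\in C^\infty_0$; the interval $t\in[1,T_0]$, where the smallness of $h^2-1$ on $\{|r^*|\geq(1+\frac{\delta}{2})t\}$ is not yet available, is handled trivially since $\int_1^{T_0}\lVert b(\frac{r^*}{t})e^{-itH}u\rVert^2\frac{dt}{t}\leq \lVert b\rVert_\infty^2\ln T_0\lVert u\rVert^2$.

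The genuine gap is in the second statement, where $b\equiv 1$ near infinity. Your proposed deduction ("combine this integral bound with the integrability of the derivative") applies \eqref{eq:max_ve} to this $b$, but the bound is only available — and only true — for compactly supported $b$: a state initially localized at radius $\rho$ remains in the region $|r^*|\geq (1+\delta)t$ for all $t\lesssim \rho$ by finite propagation speed, so for $b\equiv 1$ near infinity the left-hand side of \eqref{eq:max_ve} can be of order $\log\rho\,\lVert u\rVert^2$, and no uniform constant exists. One must split $b=b_1+b_\infty$ with $b_1\in C^\infty_0$ supported in a compact velocity annulus — for which your argument works, combined with the standard lemma that $f\geq 0$, $\int_1^\infty f(t)\frac{dt}{t}<\infty$ and $f'(t)=O(\frac{1}{t})$ (note: $O(\frac{1}{t})$, not integrable, is what the Heisenberg computation gives) imply $f(t)\to 0$ — and $b_\infty$ supported in $\{|s|\geq R\}$ with $R>\sup h^2$, which requires a separate, elementary argument: either finite propagation speed of $e^{-itH}$ (speed bounded by $\sup h^2\leq 2-a^2l^2$) applied to compactly supported approximants of $u$, or a sign-definite observable $G(\frac{|r^*|}{t})$ with $\supp G'$ above the propagation speed, for which the Heisenberg derivative has a definite sign with no error term. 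Without this extra ingredient the strong-limit statement is not proved.
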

The proof is identical to that of~\cite[Proposition IV.4.4]{thdaude}.

%
%
%
%

\subsubsection{What of $t\to -\infty$?} 

Up to now, we have only discussed estimates in the far future, and have said nothing of the far past. After thorough inspection, one can convince onself that all the results here hold for $-H$ (the conjugate operator should also be replaced by its opposite), but, there is a faster way to see this. The Kerr-de Sitter metric~\eqref{metric} is invariant under the simultaneous substitutions:
\begin{equation*}
\left\{\begin{array}{lcr}
t&\to&-t\\
a &\to& -a
\end{array}\right..
\end{equation*}
This is intuitively reasonable because a time reversed black-hole will rotate in the opposite way. Consequently, all the results in the section have suitable analogs at $t\to -\infty$.

\section{Intermediate wave operators}\label{intermediatewaveoperators}
\subsection{Overall strategy}
In this section our goal is to show that, despite the long-range non-spherically symmetric potentials at the double horizon, it is still possible to reduce the scattering problem to a 1-dimensional one. To this end, we introduce the following operators:
\begin{align}
H_1&=H_0+h^{-1}V_C h^{-1},\\
H_e&=H_0+ g(r^*)\vartheta(\theta),
\end{align}
with:
\begin{align}
\vartheta(\theta)&=\frac{a^2\sin\theta}{\sqrt{\Delta_\theta}}\left(\frac{l^2r_e^2-1}{r_e^2+a^2} \right)\Gamma^3p +\rho_{e}m\Gamma^0 -\frac{a\sin\theta r_e}{2\rho^2_{e}}\sqrt{\Delta_\theta}\tilde{\gamma},\\
&\rho_{e}=r_e^2+a^2\cos^2\theta, \quad \tilde{\gamma}= \left(\begin{array}{cc}\sigma_x & 0 \\ 0 &\sigma_x\end{array} \right).
\end{align}
Finally, $V_C$ and $V_S$ are defined by equations~\eqref{eq:defvs} and~\eqref{eq:defvc}, their asymptotic behaviour is described in Lemma~\ref{lemme:behaviour_potentials}. 

Both $H_1$ and $H_e$ are of the prescribed form~\eqref{eq:formeH}, hence the theory presented in Section~\ref{mourre} applies to them.
We will show that we can compare the full operator $H\equiv H^p= hH_0h+V_S+V_C$ to simplified dynamics as so:

\[ H \underset{r^* \to \pm \infty}{\longrightarrow} \left\{\begin{array}{l} H_1 \underset{r^* \to +\infty}{\longrightarrow} H_0 \\  H_1 \underset{r^* \to -\infty}{\longrightarrow}H_e\end{array}\right.\]
\subsection{First comparison}
\label{section:compar1}
The first step is to compare $H$ to $H_1$. Here, there is no distinction between the behaviour at the different horizons because: \[\begin{aligned}H-H_1&=(h^2-1)H_0 +h[H_0,h]+V_S +(h^2-1)h^{-2}V_C \\&\equiv \underset{\in \bm{S}^{2,2}}{(h^2-1)}H_1 +V_S + h[H_0,h], \end{aligned} \] and $V_S+h[H_0,h]\equiv \tilde{V}_S$ is short-range. Proposition~\ref{prop:mve} is the key to prove:
\begin{prop}
\label{prop:comparaison1exist}
The generalised wave-operators:
\begin{equation}
\begin{aligned}
\Omega^1_{\pm}&=\slim_{t\to\pm \infty} e^{itH_1}e^{-itH}P_c(H), \\
\tilde{\Omega}^1_{\pm}&=\slim_{t\to \pm \infty} e^{itH}e^{-itH_1}P_c(H_1),
\end{aligned}
\end{equation}
exist, where, for any self-adjoint operator $B$, $P_c(B)$ denotes the projection onto the absolutely continuous subspace of $B$.
\end{prop}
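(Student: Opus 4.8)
The plan is to use Cook's method together with the minimal velocity estimate (Proposition~\ref{prop:mve}). Since $H$ and $H_1$ are both of the form~\eqref{eq:formeH}, Lemma~\ref{lemme:csq_spectral_mourre} tells us that $P_c(H)$ and $P_c(H_1)$ are just the complements of the (possibly trivial) eigenprojection at $0$. We treat $\Omega^1_\pm$ in detail; $\tilde\Omega^1_\pm$ is entirely symmetric (exchange the roles of $H$ and $H_1$, noting that $H - H_1$ and $H_1 - H$ are governed by the same coefficient estimates). Fix $u$ in the absolutely continuous subspace of $H$; by a density argument it suffices to take $u = \chi(H)u$ for some $\chi \in C_0^\infty(\mathbb{R})$ with $\supp\chi \cap \{0\} = \emptyset$, since such vectors are dense in $P_c(H)\mathscr{H}$ and $e^{itH_1}e^{-itH}$ is uniformly bounded. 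By Cook's criterion, existence of the strong limit follows once we show
\begin{equation*}
\int_1^{+\infty} \left\lVert (H_1 - H) e^{-itH}\chi(H) u \right\rVert \, \textrm{d}t < +\infty.
\end{equation*}

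The key algebraic input is the identity displayed just before the proposition: writing $H - H_1 = (h^2-1)H_1 + \tilde V_S$ with $\tilde V_S = V_S + h[H_0,h]$, where $(h^2-1) \in \bm{S}^{2,2}$ by Lemma~\ref{lemme:behaviour_potentials} and $\tilde V_S$ is short-range at both infinities (being $\bm{S}^{1,2}$, resp. short-range by~\eqref{eq:hprop2}). I would first dispose of the $(h^2-1)H_1$ piece: since $D(H) = D(H_1) = D(H_0)$ and $H_1 e^{-itH}\chi(H) = e^{-itH}\chi(H) H + e^{-itH}[H_1,\chi(H)]H\cdots$ — more simply, $H_1\chi(H)$ is bounded (as $\chi(H)$ maps into $D(H_0)$ and $H_1$ is $H_0$-bounded), so $\lVert (h^2-1)H_1 e^{-itH}\chi(H)u\rVert \le \lVert (h^2-1) \tilde\chi(H_0) \rVert \cdot \lVert \cdots \rVert$ after inserting a cutoff; the point is to bound it by $\lVert \langle r^*\rangle^{-2} e^{-itH}\chi(H) u\rVert$ up to bounded operators. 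For the short-range term $\tilde V_S$ one has the pointwise bound $\lVert\tilde V_S\rVert \le C\langle r^*\rangle^{-1-\varepsilon}$ for some $\varepsilon>0$, uniformly in $\theta,\varphi$. In both cases the matter reduces to integrating $\lVert \langle r^*\rangle^{-1-\varepsilon} e^{-itH}\chi(H)u\rVert$ in $t$.

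This is exactly where Proposition~\ref{prop:mve} enters. Split $\langle r^*\rangle^{-1-\varepsilon} = \langle r^*\rangle^{-1-\varepsilon}\bm 1_{[0,\varepsilon_\chi]}(|r^*|/t) + \langle r^*\rangle^{-1-\varepsilon}\bm 1_{(\varepsilon_\chi,\infty)}(|r^*|/t)$. On the second region $|r^*| \ge \varepsilon_\chi t$, so $\langle r^*\rangle^{-1-\varepsilon} \le C (\varepsilon_\chi t)^{-1-\varepsilon}$, which is integrable in $t$ over $[1,\infty)$; contributing a finite integral after using $\lVert e^{-itH}\chi(H)u\rVert \le \lVert u\rVert$. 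On the first region one writes $\langle r^*\rangle^{-1-\varepsilon} \le 1$ and estimates, via Cauchy–Schwarz in $t$ against the measure $\textrm{d}t/t$ on a dyadic scale or directly, using
\begin{equation*}
\int_1^{+\infty} \left\lVert \bm 1_{[0,\varepsilon_\chi]}\!\left(\tfrac{|r^*|}{t}\right)\chi(H)e^{-itH}u \right\rVert^2 \frac{\textrm{d}t}{t} \le C\lVert u\rVert^2;
\end{equation*}
the extra decay $\langle r^*\rangle^{-1-\varepsilon}$ (or $\langle r^*\rangle^{-2}$) inside the cone, together with $|r^*| \le \varepsilon_\chi t$ there, supplies enough powers of $t$ to pass from the $L^2(\textrm{d}t/t)$ bound to an $L^1(\textrm{d}t)$ bound by Cauchy–Schwarz. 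Assembling the two regions gives the Cook integrability, hence existence of $\Omega^1_\pm$. The limit automatically lands in, and the construction shows it is intertwined with, the absolutely continuous subspaces; running the same argument with $H$ and $H_1$ interchanged yields $\tilde\Omega^1_\pm$.

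\textbf{Main obstacle.} The delicate point is not the short-range term $\tilde V_S$, whose $\langle r^*\rangle^{-1-\varepsilon}$ decay would in fact already give Cook integrability against the trivial bound were the decay alone enough — it is the long-range-looking term $(h^2-1)H_1$. Although $h^2-1$ only decays like $\langle r^*\rangle^{-2}$ at the double horizon (Coulomb-type rate squared), it multiplies the \emph{unbounded} operator $H_1$, so one cannot simply bound it pointwise; one must commute $H_1$ through $e^{-itH}$ and exploit that $e^{-itH}\chi(H)u$ stays in a bounded set of $D(H_0)$, then recover a genuine $\langle r^*\rangle^{-2}$ decay and feed it into the minimal velocity estimate. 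Controlling this interplay — decay of the coefficient versus unboundedness of $H_1$, mediated by the propagation estimate — is the crux, and is precisely why the minimal velocity estimate (rather than mere integrability of the potential) is invoked.
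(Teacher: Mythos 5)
There is a genuine gap in the inside-the-cone step, and it is exactly the point where your strategy diverges (fatally) from what is needed. Inside the region $|r^*|\le \varepsilon_\chi t$ the weight $\bra{r^*}^{-1-\varepsilon}$ gives \emph{no} decay in $t$ (it is of order $1$ near $r^*=0$, which lies in the cone for all $t$); the spatial decay only helps \emph{outside} the cone, where $|r^*|\ge\varepsilon_\chi t$. So for the inner region you are left with $\int_1^{\infty}\lVert \bm 1_{[0,\varepsilon_\chi]}(|r^*|/t)\chi(H)e^{-itH}u\rVert\,\textrm{d}t$, and the minimal velocity estimate only controls the $L^2(\textrm{d}t/t)$ norm of this quantity. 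An $L^2(\textrm{d}t/t)$ bound does not imply $L^1(\textrm{d}t)$ integrability by Cauchy–Schwarz or otherwise: $\lVert F(t)\rVert\sim t^{-1/2}$ satisfies $\int\lVert F\rVert^2\,\textrm{d}t/t<\infty$ but $\int\lVert F\rVert\,\textrm{d}t=\infty$. Your claim that ``$|r^*|\le\varepsilon_\chi t$ supplies enough powers of $t$'' has the inequality pointing the wrong way, so the naive Cook integral $\int_1^\infty\lVert(H_1-H)e^{-itH}\chi(H)u\rVert\,\textrm{d}t$ cannot be closed this way. (Your ``main obstacle'' paragraph also misidentifies the difficulty: $(h^2-1)H_1$ is harmless once $H_1\chi(H)$ is bounded; the real issue is the cone region, for short-range and $(h^2-1)$ terms alike.)

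The paper's proof repairs precisely this by never asking the inside-cone term to be integrable. One inserts the cutoff into the \emph{identification operator}: with $j_0$ supported in $(-\varepsilon_\chi,\varepsilon_\chi)$, $j=1-j_0$, the piece $e^{itH_1}j_0(\frac{r^*}{t})\chi(H)e^{-itH}$ tends strongly to $0$ by the strong-limit form~\eqref{eq:slim_mve} of Proposition~\ref{prop:mve}, so it contributes nothing to the limit, and Cook's method is applied only to $e^{itH_1}j(\frac{r^*}{t})\chi(H)e^{-itH}$. On $\supp j$ one has $|r^*|\ge\varepsilon t$, so $j(\frac{r^*}{t})(h^2-1)=O(t^{-2})$ and $j(\frac{r^*}{t})\tilde V_S=O(t^{-2})$, and since $H_1\chi(H)$ is bounded this part of the derivative is Lebesgue integrable. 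The price of the cutoff is the extra term $\frac1t(\Gamma^1-\frac{r^*}{t})j'(\frac{r^*}{t})$, which is not absolutely integrable; it is handled by the weak form of Cook's argument: pair against a test vector $v$, insert $\tilde\chi(H_1)$ and $\tilde j$ up to $O(t^{-2})$ commutator errors, split $\frac1t=\frac{1}{\sqrt t}\cdot\frac{1}{\sqrt t}$, and apply Cauchy–Schwarz so that each factor is controlled by the $L^2(\textrm{d}t/t)$ minimal velocity estimate for $e^{-itH_1}$ acting on $v$ and for $e^{-itH}$ acting on $u$ respectively. This pairing of \emph{two} propagation estimates is the mechanism that replaces the $L^1$ integrability your proposal needs but cannot obtain; you would need to import both the cutoff-in-the-identification-operator device and this duality argument to complete the proof.
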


\begin{proof}
We show the existence of the first limit at $ t \to +\infty$ the other cases are similar. We begin by remarking that: \[\displaystyle \overline{\bigcup_{ \underset{\supp \chi \cap \{0\} = \emptyset }{\chi \in C^{\infty}_0(\mathbb{R})}} \chi(H)\mathscr{H}}=P_c(H)\mathscr{H},\] so it is sufficient to prove the existence of the limit:
\[\slim_{t\to +\infty} e^{itH_1}\chi(H)e^{-itH},\]
for every $\chi \in C^{\infty}_0(\mathbb{R}), \supp \chi \cap \{0\}=\emptyset$. Consider then such a $\chi$ and let $\varepsilon_\chi$ be defined by Proposition~\ref{prop:mve}. Choose $j_0 \in C^{\infty}_0(\mathbb{R})$ such that $\supp j_0 \subset (-\varepsilon_\chi,\varepsilon_\chi)$ and $j_0\equiv 1$ on a neighbourhood of $0$. Set $j=1-j_0$. \eqref{eq:slim_mve} implies that:
\[\slim_{t\to\infty} e^{itH_1}j_0(\frac{r^{*}}{t})e^{-itH}\chi(H)=0.\]
It remains to prove the existence of:
\[\slim_{t\to\infty} e^{itH_1}j(\frac{r^*}{t})\chi(H)e^{-itH}.\]
For this, we apply the methods of Cook and Kato\footnote{see for example~\cite{Derezinski:1997aa,Lax:2002aa}}, who remarked that the convergence, for every $u$ in a dense set of $\mathscr{H}$, of the integral: \[ \int_1^{+\infty} \frac{\textrm{d}}{\textrm{d}t} \left(e^{itH_1}j(\frac{r^*}{t})\chi(H)e^{-itH}u\right), \]
was a sufficient condition for the limit to exist. To prove the convergence of the integral, there are two model arguments that will both be illustrated on this simple example. 
To begin with, let $u\in D(H)=D(H_1)$ then $\frac{\textrm{d}}{\textrm{d}t} \left(e^{itH_1}j(\frac{r^*}{t})\chi(H)e^{-itH}\right)u$ equates to:
\begin{align*} &e^{itH_1}\left(iH_1j(\frac{r^*}{t}) -\frac{r^*}{t^2}j'(\frac{r^*}{t}) -j(\frac{r^*}{t})iH \right)\chi(H)e^{-itH}u\\
&=e^{itH_1}\left( ij(\frac{r^*}{t})(H_1-H) +\frac{1}{t}(\Gamma^1-\frac{r^*}{t})j'(\frac{r^*}{t}) \right)\chi(H)e^{-itH}. \end{align*}
The treatment of the first term, illustrates the first type of argument. Consider first: \[H_1 - H= (h^2-1)H_1 + \tilde{V}_S.\] On $\supp j$, one must have $ |r^*| \geq \varepsilon t$ for some $\varepsilon \in \mathbb{R}_+^*$, thus, $\frac{1}{|r^*|} \leq \frac{1}{\varepsilon t}$ on $\supp j$. Consequently, $j(\frac{r^*}{t})(h^2-1) = O(t^{-2})$ and $j(\frac{r^*}{t})\tilde{V}_S=O(t^{-2})$. Because $H_1\chi(H)$ is bounded, the term~: \[e^{itH_1}j(\frac{r^*}{t})(H_1-H)\chi(H)e^{-itH}u,\] is therefore integrable.

The final term, $e^{itH_1}\frac{1}{t}\left(\Gamma^1-\frac{r^*}{t}\right)j'(\frac{r^*}{t})\chi(H)e^{-itH}u$, that is not clearly integrable in the sense of Lebesgue, requires a different treatment, which will serve as illustration for the second type of argument we use. Lebesgue integrability is in fact sufficient, but not necessary; the key to Cook's argument is simply that for any $\varepsilon$ and any $t_1,t_2$ sufficiently large:
\[\left\lVert e^{it_2H_1}j(\frac{r^*}{t_2})\chi(H)e^{-it_2H}- e^{it_1H_1}j(\frac{r^*}{t_1})\chi(H)e^{-it_1H}\right\rVert < \varepsilon.\]
Moreover, by the Hahn-Banach theorem, there is $v\in \mathscr{H}, ||v|| \leq 1$ such that:
\begin{align*}||e^{it_2H_1}j(\frac{r^*}{t_2})\chi(H)e^{-it_2H}u- e^{it_1H_1}j(\frac{r^*}{t_1})\chi(H)e^{-it_1H}u||\hspace{1in}\\ =(v,e^{it_2H_1}j(\frac{r^*}{t_2})\chi(H)e^{-it_2H}u- e^{it_1H_1}j(\frac{r^*}{t_1})\chi(H)e^{-it_1H}u),\\=\int_{t_1}^{t_2} \left(v,\frac{\textrm{d}}{\textrm{d}t}\left(e^{itH_1}j(\frac{r^*}{t})\chi(H)e^{-itH}u \right)\right)\textrm{d}t. \end{align*}
So, one only needs to verify that for $t_1,t_2$ sufficiently large the integral: \[\int_{t_1}^{t_2} \left(v,\frac{\textrm{d}}{\textrm{d}t}\left(e^{itH_1}j(\frac{r^*}{t})\chi(H)e^{-itH}u \right)\right)\textrm{d}t,\] can be made arbitrarily small.
Choose now $\tilde{\chi} \in C^{\infty}_0(\mathbb{R})$ such that $\supp \tilde{\chi} \cap \{0\} = \emptyset$ and $\tilde{\chi}\chi=\chi$, $\tilde{j} \in C^{\infty}_0(\mathbb{R})$, that vanishes on a neighbourhood of zero and satisfies $\tilde{j}j'=j'$. Notice then that:
\begin{equation*} \begin{aligned}\frac{1}{t}(\Gamma^1-\frac{r^*}{t})j'(\frac{r^*}{t})\chi(H)=&\tilde{\chi}(H_1)\tilde{j}(\frac{r^*}{t})\frac{1}{t}(\Gamma^1-\frac{r^*}{t})\tilde{j}(\frac{r^*}{t})j'(\frac{r^*}{t})\chi(H) \\&+\frac{1}{t}(\Gamma^1-\frac{r^*}{t})\tilde{j}(\frac{r^*}{t})j'(\frac{r^*}{t})(\tilde{\chi}(H)-\tilde{\chi}(H_1))\chi(H) \\&+ \frac{1}{t}[(\Gamma^1-\frac{r^*}{t}),\tilde{\chi}(H_1)]j'(\frac{r^*}{t})\chi(H) \\ &+ \frac{1}{t}(\Gamma^1-\frac{r^*}{t})[j'(\frac{r^*}{t}),\tilde{\chi}(H_1)]\chi(H).\end{aligned} \end{equation*}
The last three terms are $O(t^{-2})$ so are integrable, this is not changed by multiplying to the left with $e^{itH_1}$ and to the right with $e^{-itH}$. Now, for any $v\in \mathscr{H}$, one certainly has:
\begin{align*} |(v,e^{itH_1}\frac{1}{t}\tilde{\chi}(H_1)\tilde{j}(\frac{r^*}{t})(\Gamma^1-\frac{r^*}{t})\tilde{j}(\frac{r^*}{t})j'(\frac{r^*}{t})\chi(H)e^{-itH}u)|\hspace{.6in}\\=\left|\left(\frac{1}{\sqrt{t}}\tilde{j}(\frac{r^*}{t})(\Gamma^1-\frac{r^*}{t})\tilde{j}(\frac{r^*}{t})\tilde{\chi}(H_1)e^{-itH_1}v, \frac{1}{\sqrt{t}}j'(\frac{r^*}{t})\chi(H)e^{-itH}u\right)\right|,\\ \leq K  \left\lVert\frac{1}{\sqrt{t}}\tilde{j}(\frac{r^*}{t})\tilde{\chi}(H_1)e^{-itH_1}v\right\rVert \left\lVert \frac{1}{\sqrt{t}}j'(\frac{r^*}{t})\chi(H)e^{-itH}u\right\rVert, \end{align*}
for some $K\in \mathbb{R}_+^*$. In the above we have used the fact that: \[\tilde{j}(\frac{r^*}{t})(\Gamma^1-\frac{r^*}{t})\in B(\mathscr{H}).\]
Applying the Cauchy-Schwarz inequality, we get the following estimate:
\begin{multline*}\int_{t_1}^{t_2} \left|(v,e^{itH_1}\frac{1}{t}\Gamma^1\tilde{\chi}(H_1)\tilde{j}(\frac{r^*}{t})j'(\frac{r^*}{t})\chi(H)e^{-itH}u)\right|\textrm{d}t \\  \leq K \left( \int_{t_1}^{t_2} \left\lVert\tilde{j}(\frac{r^*}{t})\tilde{\chi}(H_1)e^{-itH_1}v\right\rVert^2\frac{\textrm{d}t}{t} \right)^{\frac{1}{2}}\left(\int_{t_1}^{t_2}\left\lVert j'(\frac{r^*}{t})\chi(H)e^{-itH}u\right\rVert^2 \frac{\textrm{d}t}{t} \right)^{\frac{1}{2}}.\end{multline*}
However, it follows from Proposition~\ref{prop:mve} that there is some constant $C\in \mathbb{R}_+^*$ such that:
\begin{equation*}\begin{gathered} \left( \int_{t_1}^{t_2} \left\lVert\tilde{j}(\frac{r^*}{t})\tilde{\chi}(H_1)e^{-itH_1}v\right\rVert^2\frac{\textrm{d}t}{t} \right)^{\frac{1}{2}}\left(\int_{t_1}^{t_2}\left\lVert j'(\frac{r^*}{t})\chi(H)e^{-itH}u\right\rVert^2 \frac{\textrm{d}t}{t} \right)^{\frac{1}{2}} \hspace{1in}\\ \leq C||v||\left(\int_{t_1}^{t_2}\left\lVert j'(\frac{r^*}{t})\chi(H)e^{-itH}u\right\rVert^2 \frac{\textrm{d}t}{t} \right)^{\frac{1}{2}}, \\ \leq C\left(\int_{t_1}^{t_2}\left\lVert j'(\frac{r^*}{t})\chi(H)e^{-itH}u\right\rVert^2 \frac{\textrm{d}t}{t} \right)^{\frac{1}{2}}.\end{gathered}\end{equation*}
In the last inequality we have specialised to the case where $||v|| \leq 1$. This quantity can be made arbitrarily small, for large enough $t_1,t_2$, again by Proposition~\ref{prop:mve}. The existence of the limit then follows.
\end{proof}

\subsection{Second comparison}
\label{sec:second_comp}
Our aim now is to show that asymptotically the dynamics of $H_1$ can again be simplified. However, the comparisons we will make in this section depend on the asymptotic region we consider. 
We will separate incoming and outgoing states using cut-off functions, $c_\pm$, that are assumed to satisfy: $c_\pm \in C^{\infty}(\mathbb{R})$, $c_\pm\equiv 1$ in a neighbourhood of $\pm \infty$ and that vanish in a neighbourhood of $\mp \infty$. We then seek to show that the following limits exist:
\begin{equation}
\label{eq:second_comp}
\begin{aligned}
\Omega^2_{\pm,\mathscr{H}_{r_+} }=\slim_{t\to \pm \infty}~& e^{iH_0t}c_+(r^*)e^{-iH_1t}P_c(H_1),\\
\tilde{\Omega}^2_{\pm,\mathscr{H}_{r_+}}=\slim_{t\to \pm \infty}~& e^{iH_1t}c_+(r^*)e^{-iH_0t},\\
\Omega^2_{\pm,\mathscr{H}_{r_e}}=\slim_{t \to \pm \infty}~&e^{iH_et}c_-(r^*)e^{-iH_1t}P_c(H_1),\\
\tilde{\Omega}^2_{\pm,\mathscr{H}_{r_e}}=\slim_{t \to \pm \infty}~&e^{iH_1t}c_-(r^*)e^{-iH_et}P_c(H_e).
\end{aligned}
\end{equation}
This appears to introduce a certain arbitrariness into the construction, the following lemma shows that this is not the case:
\begin{lemme}
If the limits~\eqref{eq:second_comp} exist, then they are independent of the choice of cut-off functions $c_\pm$.
\end{lemme}
\begin{proof}
The main point is that two such functions can differ on at most a compact set, i.e. their difference is an element of $C^\infty_0(\mathbb{R})$. So let us prove that if $c\in C^{\infty}_0(\mathbb{R})$, then, for instance: 
\[\slim_{t\to + \infty} e^{iH_0t}c(r^*)e^{-iH_1t}P_c(H_1) = 0,\]
the other cases will be similar. As before, by density, we only need to prove that:
\[\slim_{t\to + \infty} e^{iH_0t}c(r^*)\chi(H_1)e^{-iH_1t}= 0,\]
for any $\chi \in C^{\infty}_0(\mathbb{R}), \supp \chi \cap \{0\} = \emptyset$.  

Let $\chi$ be as so and let $M\in \mathbb{R}_+^*$ be such that $\supp c \subset [-M,M]$. Choose $j_0 \in C^{\infty}_0(\mathbb{R})$ with support contained in $(-\varepsilon_\chi,\varepsilon_\chi)$ such that, say, $j_0(s)=1$ for any $s\in [-\frac{\varepsilon_\chi}{2},\frac{\varepsilon_\chi}{2}]$. Then, for any $t\geq 1$, $j_0(\frac{r^*}{t})=1$ for any $|r^*|\leq \frac{\varepsilon_\chi}{2} t$. Hence, for $t \geq \frac{2M}{\varepsilon_\chi} $, \[c(r^*)=c(r^*)j_0(\frac{r^*}{t}),\textrm{ for any $r^* \in \mathbb{R}.$}\]
It follows that: \[\begin{aligned} \slim_{t\to +\infty} e^{iH_0t}c(r^*)\chi(H_1)e^{-iH_1t}=& \slim_{t\to+\infty} e^{iH_0t}c(r^*)j_0(\frac{r^*}{t})\chi(H_1)e^{-iH_1t}, \end{aligned}\] which vanishes by Proposition~\ref{prop:mve}.
\end{proof}

We now argue that the limits~\eqref{eq:second_comp} exist, with emphasis on:
\begin{equation}\label{eq:comp2prove} \slim_{t \to + \infty} e^{iH_e t} c_-(r^*)e^{-iH_1t}P_c(H_1), \end{equation}
the other cases being similar. 

\begin{lemme}
$H_1 -H_e$ is short-range near the double horizon.
\end{lemme}
\begin{proof}
Note that:
\begin{equation} h^{-2}V_C= g\underbrace{\left(\frac{\Xi}{\sin \theta}\left(\frac{\rho^2}{\sigma} - \frac{\sqrt{\Delta_\theta}}{\Xi} \right)\Gamma^3p +\rho m \Gamma^0 - \frac{a\sin\theta r}{2\rho^2}\sqrt{\Delta_\theta}\tilde{\gamma}\right)}_{\Theta(r,\theta)}, \end{equation}
and $\Theta(r_e,\theta)=\vartheta(\theta)$. Thus, $\Theta(r,\theta) - \vartheta(\theta) = \underset{r\to r_e}{o}(r-r_e)=\underset{r^*\to -\infty}{o}({r^*}^{-1})$, which leads to:
\[h^{-2}V_C - g\vartheta(\theta)= \underset{r^* \to \infty}{O} (\frac{1}{{r^*}^2}).\]
\end{proof}
\begin{proof}[Proof of the existence of~\eqref{eq:comp2prove}]
As before, we only need to prove the existence of: \[ \displaystyle \slim_{t\to +\infty}e^{iH_et} c_-(r^*)\chi(H_1)e^{-iH_1t},\]
for any $\chi \in C^{\infty}_0(\mathbb{R})$ with $\supp \chi \cap \{0\} =\emptyset$. Let $\chi$ be as so, and $j_0,j$ be as in the proof of Proposition~\ref{prop:comparaison1exist}, then: \[ \slim_{t \to +\infty}e^{iH_et}c_-(r^*)j_0(\frac{r^*}{t})\chi(H_1)e^{-iH_1t}=0,\] and we must prove the existence of $\displaystyle \slim_{t \to +\infty}e^{iH_et}c_-(r^*)j(\frac{r^*}{t})\chi(H_1)e^{-iH_1t}$.
To simplify notations, set $M(t)=e^{iH_et}c_-(r^*)j(\frac{r^*}{t})\chi(H_1)e^{-iH_1t}$, its derivative, $M'(t)$, is given by:
\begin{equation*}e^{iH_et}\left( iH_ec_-(r^*)j(\frac{r^*}{t}) -\frac{r^*}{t^2}c_-(r^*)j'(\frac{r^*}{t}) -c_-(r^*)j(\frac{r^*}{t})iH_1 \right)\chi(H_1)e^{-iH_1t}.\end{equation*}
The term between parentheses is:
\begin{equation*}\begin{gathered} c_-(r^*)j(\frac{r^*}{t})i(H_e-H_1) +\Gamma^1(c_-(r^*)j(\frac{r^*}{t}))' -\frac{r^*}{t^2}c_-(r^*)j'(\frac{r^*}{t})\\ = c_-(r^*)j(\frac{r^*}{t})i(H_e-H_1) +\Gamma^1(c'_-(r^*)j(\frac{r^*}{t})) +\frac{1}{t}c_-(r^*)(\Gamma^1-\frac{r^*}{t})j'(\frac{r^*}{t}). \end{gathered}\end{equation*}
The only new term compared with the proof of Proposition~\ref{prop:comparaison1exist} is: \[\Gamma^1(c'_-(r^*)j(\frac{r^*}{t})),\] however this vanishes when $t$ is sufficiently large because $c'$ has compact support and $j$ vanishes on a neighbourhood of $0$. Moreover, since $H_e-H_1$ is short-range near the double horizon and $c_-$ vanishes on a neighbourhood of $+\infty$,  the first two terms are $O(t^{-2})$ and hence integrable. The last term is treated as at the end of the proof of Proposition~\ref{prop:comparaison1exist}.
\end{proof}
\subsection{The operator $H_e$}\label{sec:Hx}
The expression of $H_e$ suggests that we seek to understand the precise spectral theory of the operator, defined on the sphere  by:
\begin{equation} \mathfrak{D}_e=\mathfrak{D} + \vartheta(\theta).\end{equation}
In particular, we would like to show that there is a Hilbert space decomposition of $L^2(S^2)\otimes\mathbb{C}^4$ which enables us to decompose the full Hilbert space $\mathscr{H}$ into an orthogonal sum of stable subspaces, that can be used to study $H_e$.  
Since $\vartheta(\theta)$ is a bounded operator it is an immediate consequence of the Kato-Rellich perturbation theorem that $\mathfrak{D}_e$ has compact resolvent. However, we require a slightly more thorough understanding of the structure of the spectral subspaces and in particular how $\Gamma^1$ acts on them. 
\subsubsection{Dimension of spectral subspaces} 
Decompose $L^2(S^2)\otimes \mathbb{C}^4$ in the usual manner by diagonalising $D_\varphi$ with anti-periodic boundary conditions, and consider the restriction $\mathfrak{D}_e^n$ of $\mathfrak{D}_e$ to the subspace with eigenvalue $n\in \mathbb{Z}+\frac{1}{2}$. In the following $E_\lambda$ will denote the spectral subspace of $\lambda\in \mathbb{R}$ for this restricted operator.

 An element $f$ in this subspace is an eigenvector with eigenvalue $\lambda \in \mathbb{R}$ of $\mathfrak{D}_e^n$ if and only if it is a solution to the first order ordinary differential equation:
 \begin{multline}\label{eq:valprop} \sqrt{\Delta_\theta}\Gamma^2 D_\theta f -\frac{i}{2}\left(\frac{\Delta'_\theta}{2\sqrt{\Delta_\theta}}+\cot \theta\right)\Gamma^2f-ar_e\sin\theta\frac{\sqrt{\Delta_\theta}}{2\rho^2_e}\tilde{\gamma}f  \\+\left(\frac{\sqrt{\Delta_\theta}}{\sin \theta}n +\frac{a^2\sin\theta}{\sqrt{\Delta_\theta}}\frac{l^2r_e^2-1}{r_e^2+a^2}p  \right)\Gamma^3f  + \rho_em\Gamma^0f -\lambda f =0.\end{multline}
Note that since $\Gamma^1$ anti-commutes with $\Gamma^0,\Gamma^3,\Gamma^2$ and $\tilde{\gamma}$, if $f$ is a solution to~\eqref{eq:valprop} then $\Gamma^1f$ is a solution to the analogous equation for $-\lambda$, in fact,  $\Gamma^1$ is an isometry between $ E_\lambda$ and $E_{-\lambda}$. 
The study of~\eqref{eq:valprop} is slightly easier after the substitution $z=\cos \theta$, after which we obtain:
\begin{equation} \label{eq:angular}a_1(z)\Gamma^2D_z +a_2(z)\Gamma^2f+a_3(z)\Gamma^3f +a_5(z)\tilde{\gamma}f +a_0(z)\Gamma^0f-\lambda f=0,\end{equation}
where:
\begin{equation}
\label{eq:angular_coeff}
\begin{gathered}
a_0(z)=\rho_e m, \quad a_1(z)=-\sqrt{\Delta_\theta}\sqrt{1-z^2},\\
a_2(z)= -\frac{i}{2}\left(-a^2l^2\frac{z\sqrt{1-z^2}}{\sqrt{\Delta_\theta}}+\frac{z}{\sqrt{1-z^2}}\right),\\
a_3(z)= \left(\frac{\sqrt{\Delta_\theta}}{\sqrt{1-z^2}}n +\frac{a^2\sqrt{1-z^2}}{\sqrt{\Delta_\theta}}\frac{l^2r_e^2-1}{r_e^2+a^2}p  \right),\\
a_5(z)=-a\sqrt{1-z^2}\frac{\sqrt{\Delta_\theta}}{2\rho^2_e}.
\end{gathered}
\end{equation}
Save the expressions $\sqrt{1-z^2}, \frac{1}{\sqrt{1-z^2}}$, all other functions appearing in the coefficients~\eqref{eq:angular_coeff} of the equation can be extended to analytic functions on a disc centered in $0$ and with radius $1+\varepsilon$ for some $\varepsilon >0$, the reason for this is that the parameters satisfy: $|al|<2-\sqrt{3}<1$ and $r_e>|a|$. This suggests that~\eqref{eq:angular} extends naturally to a differential equation expressed on an open subset of the 1-dimension complex manifold $S$: \[S=\{ (z,w) \in \mathbb{C}^2, z\in B(0,1+\varepsilon), z^2+w^2=1\},\] where $z$ is used as local coordinate - the implicit function theorem implies that this can be done in a neighbourhood of any point in $S$ save $(1,0),(-1,0)$. The functions $z,w$ are globally defined and holomorphic on $S$ and~\eqref{eq:angular} can be rewritten:
\begin{multline} \label{eq:angular2} -\sqrt{\Delta_\theta}w\Gamma^2 D_z f -\frac{i}{2}\left(-a^2l^2\frac{zw}{\sqrt{\Delta_\theta}}+\frac{z}{w}\right)\Gamma^2f \\+\left(\frac{\sqrt{\Delta_\theta}}{w}n +\frac{a^2w}{\sqrt{\Delta_\theta}}\frac{l^2r_e^2-1}{r_e^2+a^2}p  \right)\Gamma^3f  -aw\frac{\sqrt{\Delta_\theta}}{2\rho^2_e}\tilde{\gamma}f + \rho_em\Gamma^0f -\lambda f =0. \end{multline}
By the Cauchy-Lipschitz theorem the set of solutions to Equation~\eqref{eq:angular2} on $S\setminus \{ (1,0),(-1,0)\}$ is a 4-dimensional vector space. The solutions to~\eqref{eq:angular} will be the restrictions to $]-1,1[$, (i.e. $z\in]-1,1[, w>0$) of those of~\eqref{eq:angular2}. Amongst these, we must pick out those in $L^2]-1,1[$. Since $\mathfrak{D}_e$ has compact resolvent we already know that they exist only for a countable number of values of $\lambda$. We will not seek the exact condition for this, but, a simple analysis of the behaviour of the solutions near a point where $w=0$ will enable us to see that the subspace of $L^2]-1,1[$ solutions is at most of dimension $2$. To this end, we switch to local coordinates defined around such a point, say, $(-1,0)$. In fact, again using the Implicit Function Theorem, one can choose $w$ as local coordinate on a neighbourhood of $(-1,0)$, the equation then becomes:
 \begin{multline} \label{eq:angular3} \sqrt{\Delta_\theta}z\Gamma^2 D_w f -\frac{i}{2}\left(-a^2l^2\frac{zw}{\sqrt{\Delta_\theta}}+\frac{z}{w}\right)\Gamma^2f\\ +\left(\frac{\sqrt{\Delta_\theta}}{w}n +\frac{a^2w}{\sqrt{\Delta_\theta}}\frac{l^2r_e^2-1}{r_e^2+a^2}p  \right)\Gamma^3f  -aw\frac{\sqrt{\Delta_\theta}}{2\rho^2_e}\tilde{\gamma}f + \rho_em\Gamma^0f -\lambda f =0. \end{multline}
 \eqref{eq:angular3} has a singular-regular point at $w=0$\footnote{see~\cite{Ince:1956aa}}, hence, one can apply the Frobenius method, i.e. there are solutions of the form $f(w)= w^{\alpha}  \sum_k a_k w^k$. Plugging this anstaz into~\eqref{eq:angular3} we find that $a_0$ must be in the null space of the map:
 \begin{equation} M(\alpha)= i(\alpha +\frac{1}{2})\Gamma^2 + n\Gamma^3. \end{equation}
 The kernel is non-trivial only if $\alpha$ satisfies:
 \begin{equation} \label{eq:indicial} (\alpha-n+\frac{1}{2})^2(\alpha+n+\frac{1}{2})^2=0. \end{equation}
  For each solution to~\eqref{eq:indicial}, the kernel of $M(\alpha)$ is of dimension 2, 
%
 and so one can generate two linearly independent solutions for each $\alpha$\footnote{Note that, since the roots of~\eqref{eq:indicial} differ by a positive integer, the anstaz will need to be modified to include possible logarithmic terms in the solution when $\alpha=-|n|-\frac{1}{2}$}. Only $\alpha=|n| -\frac{1}{2}$ can yield square integrable solutions to~\eqref{eq:angular}, thus it follows that:
 \begin{lemme} \label{lemme:dimensionvp} In the notations of this paragraph, if $n\in \mathbb{Z}+\frac{1}{2}$ and $\lambda\in \sigma(\mathfrak{D}_e^n)$, then $\dim E_{\lambda} \leq 2$.
 \end{lemme}

We now complete the proof of Lemma~\ref{lemme:spectreD} ; the eigenequation $\tilde{S}\psi_{k,n}=\lambda_k\psi_{k,n}$ is the special case of~\eqref{eq:angular}, where $r_e=p=m=0$. In this case, the equation has another symmetry that amounts to saying that $\Gamma^2$ and $\Gamma^3$ anti-commute with the matrix $P=\left(\begin{array}{cc}0 & I_2 \\ I_2 & 0 \end{array}\right)$. Hence, $P$, like $\Gamma^1$, is an isometry of $E_\lambda$ onto $E_{-\lambda}$, however for any $u \in \mathbb{C}^4\setminus\{0\}$, $Pu$ and $\Gamma^1u$ are linearly independent, so that we must have equality in Lemma~\ref{lemme:dimensionvp}. The form of the solutions follows from the block diagonal form of the equations.

\subsubsection{A reduction of $H_e$}
Denote now: \begin{itemize} \item $\sigma(\mathfrak{D}_e)\cup\{0\}=(\mu_k)_{k\in \mathbb{Z}}$, enumerated such that $\mu_{-k}=-\mu_{k}$, for each $k\in\mathbb{Z}$.  \item For each $k\in \mathbb{Z}$, $J(k)$ the set of integers $q\in \mathbb{Z}$ such that $\mu_k$ is an eigenvalue for $\mathfrak{D}_e^{q+\frac{1}{2}}$; note that also $J(k)=J(-k)$. \item If $k\in \mathbb{Z},q\in J(k)$, $E_{k,q}$ the spectral subspace of the eigenvalue $\mu_k$ of $\mathfrak{D}_e^{q+\frac{1}{2}}$. By convention, if $0\not\in \sigma(\mathfrak{D}_e)$, we set $J(0)=\{0\}$ and $E_{0,0}=\{0\}$. \item For each $k\in \mathbb{N}^*$ and fixed $q\in J(k)$, $\tilde{E}_{k,q}= L^2(\mathbb{R})\otimes(E_{k,q} \overset{\perp}{\oplus} E_{-k,q})$.
\item $\tilde{E}_{0,q}=L^2(\mathbb{R})\otimes E_{0,q}, q \in J(0)$ \end{itemize}
The subspaces $\tilde{E}_{k,q}$ are, by construction, stable under the action of $H_e$ and: \[\mathscr{H}={\bigoplus_{k \in \mathbb{N}, q\in J(k)} \tilde{E}_{k,q}}.\] 

Now, let $k \in \mathbb{N}^*, q \in J(k)$, if $(e_i)_{i\in \llbracket1,\dim{E_{k,q}}\rrbracket}$ is an orthonormal basis for $E_{k,q}$, then $(\Gamma^1e_i)_{i\in \llbracket1,\dim{E_{k,q}}\rrbracket}$ is an orthonormal basis of $E_{-k,q}$ and so, since $E_{k,q}$ and $E_{-k,q}$ are orthogonal, one can concatenate these two bases to obtain an orthonormal basis $E_{k,q}\oplus E_{-k,q}$. This enables us to identify, isometrically, $\tilde{E}_{k,q}$ with $[L^2(\mathbb{R})]^{2\dim E_{k,q}}$ via the natural isomorphism:
\[((u_i)_{i\in \llbracket1,\dim{E_{k,q}}\rrbracket},(v_i)_{i\in \llbracket1,\dim{E_{k,q}}\rrbracket}) \longmapsto \sum_{i=1}^{\dim{E_{k,q}}} (u_i +v_i\Gamma^1)e_i.\]
Through this isomorphism, the restriction, $H_e^{q,n}$ of $H_e$ to $\tilde{E}_{k,q}$ corresponds to the following operator:
\[\Gamma D_r^* + \mu_k g(r^*) \tilde{\Gamma}  + f(r^*),\]
where $\Gamma=\left(\begin{array}{cc} 0 & I_{\dim E_{k,q}} \\ I_{\dim E_{k,q}} & 0 \end{array} \right), \tilde{\Gamma}=\left(\begin{array}{cc} I_{\dim E_{k,q}} &0  \\ 0 & -I_{\dim E_{k,q}}  \end{array} \right)$ and satisfy the important property that $\{\Gamma,\tilde{\Gamma}\}=0$. It is easily seen to be unitarily equivalent to:
\begin{equation} \begin{aligned} \Gamma^1D_r^* -\mu_kg(r^*)\Gamma^2 + f(r^*) &\quad \textrm{if $\dim E_{k,q}=2$}, \\
-\sigma_z D_r^* +\mu_k g(r^*)\sigma_x +f(r^*) &\quad  \textrm{if $\dim E_{k,q}=1$}.\end{aligned} \end{equation}
If $0\in \sigma(\mathfrak{D}_e)$ then, $\dim E_{0,q}\in\{1,2\}$, for any $q\in J(0)$ and through the natural identification described above is of the form $\Gamma D_{r^*}+f(r^*)$ where $\Gamma$ here is just some unitary matrix.
This is in all points analogous to~\eqref{eq:decompH0}, and we will now be able to complete the scattering theory in a unified fashion. It also follows that $H_e$ has no eigenvalues by the same Grönwall lemma argument that was used for $H_0$ in Section~\ref{sec:conseq_mourre_spectre_H_0}. In short we have:
\begin{lemme}
$\sigma(H_e)=\sigma_{ac}(H_e)$, consequently, $P_c(H_e)=\textrm{Id}$.
\end{lemme}

\subsection{The spherically symmetric operators}
\label{sec:sphericalsymmop}
The final step required in order to obtain the full scattering theory is to compare $H_e$ and $H_0$ to their natural asymptotic profiles, $\Gamma^1D_{r^*} + c_\pm$ at $r^* \to\pm \infty$ respectively;  $c_+ = \frac{ap}{r_+^2+a^2}-\frac{ap}{r_e^2+a^2}$ and $c_-=0$.

In the previous paragraph, we established that the Hilbert space $\mathscr{H}$ could be decomposed into an orthogonal sum of stable subspaces on which $H_e$ reduces to a spherically symmetric operator ; this was also shown to be the case of $H_0$, in Section~\ref{comparison_operator}. Consequently, in order to construct wave operators, we only need to work on one of these subspaces. Additionally, the similarities between the reduced forms of $H_e$ and $H_0$ imply that we, in fact, only need to know how to construct the wave operators for\footnote{We choose to discuss the case where $\dim E_{k,q}=2$, but the reasoning is independent of this choice.}:
\begin{equation} \label{eq:op_sym_spherique} \mathfrak{h}=\Gamma^1 D_{r^*} -\mu g(r^*)\Gamma^2 + f(r^*), \end{equation} on $[L^2(\mathbb{R})]^4,$ and under the assumption that we have minimal/maximal velocity estimates. This is manifestly the case for our operators because the estimates are stable under restriction to a stable subspace.

The important point is that the operator $\mathfrak{h}$ in~\eqref{eq:op_sym_spherique} is formally similar to the restriction to a spherical harmonic of the (charged) Dirac operator of the Reissner-Nordström black hole given in~\cite[Equation~3.6]{Daude:2010aa}. The extreme black hole horizon ($r^*\to \infty$) can be assimilated with spacelike infinity and the symbols $f$, $g$ have the same asymptotic behaviour at both infinities as the corresponding ones in~\cite[Equation~3.6]{Daude:2010aa}. It follows that we can apply the results of~\cite{Daude:2010aa} to our case. We note that, in fact, our operator is simpler than the one studied in~\cite{thdaude,Daude:2010aa} since there are no surviving mass terms.

\medskip

Precisely, using~\cite[Propositions~5.6 and 5.7]{Daude:2010aa} we find that~:
\begin{prop}
\label{prop:microlocal}
Let $\chi \in C^{\infty}_0(\mathbb{R})$ be such that $\supp \chi \cap \{0\}=\emptyset$ and choose $0<\theta_1<\theta_2$, then there is a constant $C>0$ such that for any $u\in [L^2(\mathbb{R})]^4$:
\begin{equation}\label{eq:ml_ve} \int_1^{+\infty} \left\lVert \bm{1}_{[\theta_1,\theta_2]}(\frac{|r^*|}{t})(\Gamma^1-\frac{r^*}{t})\chi(\mathfrak{h})e^{-it\mathfrak{h}}u \right\rVert^2 \frac{dt}{t}\leq C ||u||^2.\end{equation}
Furthermore~:
\begin{equation}
\slim_{t\to +\infty} \bm{1}_{[\theta_1,\theta_2]}(\frac{|r^*|}{t})(\Gamma^1-\frac{r^*}{t})\chi(\mathfrak{h})e^{-it\mathfrak{h}}=0. 
\end{equation} 
Analogous results can be established at $t \to-\infty$, but one must replace $\Gamma^1$ with $-\Gamma^1$.
\end{prop}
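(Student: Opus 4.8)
The plan is to obtain Proposition~\ref{prop:microlocal} by reducing it to the microlocal propagation estimates of~\cite{Daude:2010aa}, once it has been checked that $\mathfrak{h}$ falls within the scope of the framework developed there. The starting point is the structural dictionary between~\eqref{eq:op_sym_spherique} and~\cite[Eq.~3.6]{Daude:2010aa}: $\mathfrak{h}$ has exactly the form of the restriction to a spin-weighted harmonic of the (charged) Dirac operator on the exterior region of a Reissner--Nordström black hole, with the two asymptotic ends played by the cosmological horizon $r^*\to+\infty$, which behaves like a Killing horizon (there $g\in\bm{S}^{1,1}$ decays exponentially and $f$ converges exponentially fast to the constant $c_+$), and the double horizon $r^*\to-\infty$, which plays the role of spacelike infinity (there $g\in\bm{S}^{1,1}$ decays like $1/r^*$ and $f\in\bm{S}^{0,1}$ is of Coulomb type). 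Since, moreover, no mass term survives --- compare the remark at the end of Section~\ref{sec:sphericalsymmop} --- the present situation is a sub-case of, and strictly simpler than, the one treated in~\cite{Daude:2010aa}.

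Next I would verify that the analytic inputs feeding the propagation arguments of~\cite{Daude:2010aa} are available for $\mathfrak{h}$, namely a minimal and a maximal velocity estimate. Both follow from the estimates already established in this paper. Proposition~\ref{prop:mve} and the maximal velocity estimate preceding it hold for every operator of the class~\eqref{eq:formeH}, in particular for $H_0$ and $H_e$; the spatial cut-offs $\bm{1}_{[0,\varepsilon_\chi]}(|r^*|/t)$ and $b(r^*/t)$ act only on the $L^2(\mathbb{R})$-factor and therefore commute with the orthogonal decompositions $\mathscr{H}=\bigoplus_{k,n}\mathscr{H}_{k,n}$ and $\mathscr{H}=\bigoplus_{k,q}\tilde{E}_{k,q}$; and on each summand the restriction of $H_0$ (resp.\ $H_e$) is unitarily equivalent to an operator of the form~\eqref{eq:op_sym_spherique}. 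Hence these estimates restrict to give minimal and maximal velocity estimates for $\mathfrak{h}$ on $[L^2(\mathbb{R})]^4$.

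With these two estimates in hand, the proofs of~\cite[Propositions~5.6 and~5.7]{Daude:2010aa} transfer verbatim. The scheme is the usual propagation-observable argument: one takes $\Phi(t)=\chi(\mathfrak{h})b(\frac{|r^*|}{t})(\Gamma^1-\frac{r^*}{t})^2 b(\frac{|r^*|}{t})\chi(\mathfrak{h})$ with $b\in C^\infty_0(\mathbb{R})$ supported in $(\theta_1,\theta_2)$, computes the Heisenberg derivative $\partial_t\Phi+i[\mathfrak{h},\Phi]$, and, using $\{\Gamma^1,\Gamma^2\}=0$ together with the decay properties of $g$ and $f$ recorded above, shows that it is bounded above by $-\frac{c}{t}\chi(\mathfrak{h})b(\frac{|r^*|}{t})^2(\Gamma^1-\frac{r^*}{t})^2 b(\frac{|r^*|}{t})^2\chi(\mathfrak{h})+\Psi(t)$ for some $c>0$, where $\Psi(t)$ collects terms that, once localised by the minimal and maximal velocity estimates, are integrable against $\frac{\textrm{d}t}{t}$. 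Integrating from $1$ to $T$ and letting $T\to+\infty$ yields~\eqref{eq:ml_ve}, and the strong limit is then a standard consequence of~\eqref{eq:ml_ve} together with the boundedness of $\Phi(t)$ and of its Heisenberg derivative (a Cook-type argument). The statement at $t\to-\infty$ follows from the time-reversal symmetry invoked in Section~\ref{mourre}: $(t,a)\mapsto(-t,-a)$ preserves the metric and, at the level of $\mathfrak{h}$, is implemented by a unitary reversing the sign of $\Gamma^1$, so that the $-\infty$ estimate is the $+\infty$ one with $\Gamma^1$ replaced by $-\Gamma^1$.

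The step I expect to require the most care is not conceptual but lies in making the ``formal similarity'' with~\cite{Daude:2010aa} rigorous, that is, in checking that the asymptotic behaviour of $f$ and $g$ at \emph{both} ends simultaneously is no worse than what is demanded in~\cite{Daude:2010aa} at a horizon and at spacelike infinity respectively. This has essentially been arranged by the symbol-space bookkeeping of Section~\ref{analytic_framework}: $g\in\bm{S}^{1,1}$ and $f\in\bm{S}^{0,1}$ are precisely the decay classes corresponding, on one side, to exponential decay at a Killing horizon and, on the other, to Coulomb-type decay at spacelike infinity, so that once this identification is made the cited proofs apply without modification.
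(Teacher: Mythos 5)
Your proposal follows essentially the same route as the paper: after identifying $\mathfrak{h}$ with the harmonic-restricted charged Dirac operator of \cite[Eq.~3.6]{Daude:2010aa} (double horizon playing the role of spacelike infinity) and noting that the minimal/maximal velocity estimates survive restriction to the stable subspaces, the paper simply invokes \cite[Propositions~5.6 and 5.7]{Daude:2010aa}, exactly as you do. Your sketch of the propagation-observable argument matches the paper's Remark~\ref{rem:replace_appendixD}; the only nuance you gloss over is that the non-commutativity of $\Gamma^1$ and $\Gamma^2$ produces a troublesome long-range commutator term which the paper (following~\cite{Daude:2010aa}) controls via locally scalar operators rather than by the anticommutation relation alone.
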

\begin{rem}
\label{rem:replace_appendixD}
For the specific treatment of our operators, due to the lack of mass terms, it is possible to simplify the proofs in~\cite{Daude:2010aa}, avoiding in particular the use of pseudo-differential operators. Indeed, the classical velocity operator is, in our case, simply $\Gamma^1$. The main argument of the proof, i.e. estimating the Heisenberg derivative of a well-chosen propagation observable $\phi$ is identical. The only difficulty encountered is a troublesome long-range term that appears due to the matrix-valued coefficients of the operator: the matrices $\Gamma^1$ and $\Gamma^2$ do not commute. Nevertheless, rather large spectral subspaces of $\mathfrak{h}_0$ sit in one of the spectral spaces of $\Gamma^1$, and, restricted to these subspaces, the commutator is zero. Using the notion of locally scalar operators introduced in~\cite{Mantoui:2001aa}, one can exploit this as in~\cite{Daude:2010aa} to prove that the term is no obstruction.
Furthermore, one can adapt the same arguments to show directly the following lemma:
\begin{lemm} The following limits exist:
\label{lemme:existence_strong_cinf_lim_gamma}
\[\slim_{t\to\pm \infty} e^{it\mathfrak{h}}\Gamma^1e^{-it\mathfrak{h}}.\]
\end{lemm}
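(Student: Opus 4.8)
The plan is to prove existence of the strong limit at $t\to+\infty$; the case $t\to-\infty$ follows by the substitution $\Gamma^1\to-\Gamma^1$, and the companion operators arising when $\dim E_{k,q}=1$ are treated identically. Since $\mathfrak h$ has no eigenvalues (the Grönwall argument of Section~\ref{sec:conseq_mourre_spectre_H_0} applies verbatim), $\overline{\bigcup_\chi\chi(\mathfrak h)[L^2(\mathbb R)]^4}=[L^2(\mathbb R)]^4$ where $\chi$ runs over $C^\infty_0(\mathbb R)$ with $\supp\chi\cap\{0\}=\emptyset$; as $\|e^{it\mathfrak h}\Gamma^1e^{-it\mathfrak h}\|=1$ uniformly in $t$, it suffices to show that $\slim_{t\to+\infty}e^{it\mathfrak h}\Gamma^1\chi(\mathfrak h)e^{-it\mathfrak h}$ exists for each such $\chi$. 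A short computation, using $\Gamma^1\Gamma^2=-\Gamma^2\Gamma^1$ and the fact that $\Gamma^1$ commutes with $D_{r^*}$ and with the scalar function $f$, gives the Heisenberg derivative
\[\frac{\mathrm d}{\mathrm dt}\bigl(e^{it\mathfrak h}\Gamma^1\chi(\mathfrak h)e^{-it\mathfrak h}\bigr)=-2i\mu\,e^{it\mathfrak h}g(r^*)\Gamma^2\Gamma^1\chi(\mathfrak h)e^{-it\mathfrak h}.\]
Near the cosmological horizon this is harmless since $g\in\bm{S}^{1,1}$ is short-range there; the real difficulty is at the double horizon, where $g$ is only of Coulomb type, $g(r^*)=O(|r^*|^{-1})=O(t^{-1})$ on the dynamically relevant cone, so that a naive application of Cook's method fails.

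The route I would follow passes through the asymptotic position observable. First I would establish that, for every bounded $b\in C^\infty(\mathbb R)$ with $b(x)=x$ on $[\varepsilon_\chi,1+\delta]\cup[-1-\delta,-\varepsilon_\chi]$ ($\varepsilon_\chi$ as in Proposition~\ref{prop:mve}, $\delta>0$ small), the limit $\slim_{t\to+\infty}e^{it\mathfrak h}b(r^*/t)\chi(\mathfrak h)e^{-it\mathfrak h}$ exists. This is again Cook's method: since $b(r^*/t)$ is scalar, the $g$- and $f$-terms drop out of the commutator and $D^{\mathfrak h}\bigl(b(r^*/t)\bigr)=\frac1t(\Gamma^1-\frac{r^*}{t})b'(r^*/t)$; the support of $b'$ meets a cone $\{\theta_1\le|r^*|/t\le\theta_2\}$ (with $0<\theta_1<\varepsilon_\chi$, $\theta_2>1+\delta$), a neighbourhood of $0$, and a neighbourhood of $\pm\infty$, so this term is controlled respectively by the microlocal velocity estimate (Proposition~\ref{prop:microlocal}), the minimal velocity estimate (Proposition~\ref{prop:mve}), and the maximal velocity estimate, each fed into the weak Cook argument (Hahn--Banach plus the Cauchy criterion for the scalar integrals, with two propagation estimates combined by Cauchy--Schwarz) exactly as in the proof of Proposition~\ref{prop:comparaison1exist}. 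Secondly, combining Proposition~\ref{prop:microlocal} with the minimal and maximal velocity estimates, one gets $(\Gamma^1-\frac{r^*}{t})\chi(\mathfrak h)e^{-it\mathfrak h}u\to 0$ for $u$ in a dense set, whence $e^{it\mathfrak h}\Gamma^1\chi(\mathfrak h)e^{-it\mathfrak h}u=e^{it\mathfrak h}b(r^*/t)\chi(\mathfrak h)e^{-it\mathfrak h}u+o(1)$, which converges by the first step; one then extends to all of $[L^2(\mathbb R)]^4$ by density.

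The main obstacle is the second step, and more fundamentally the microlocal estimate it relies on: since $\Gamma^1$ and $\Gamma^2$ do not commute, the error terms produced when commuting $\chi(\mathfrak h)$ and the cut-offs $b(r^*/t)$ past $g(r^*)\Gamma^2\Gamma^1$ are a priori long-range rather than short-range at the double horizon, exactly where $g$ already fails to be integrable. As recalled in Remark~\ref{rem:replace_appendixD}, this is handled as in \cite{Daude:2010aa} by treating $g(r^*)\Gamma^2\Gamma^1$ as a locally scalar operator in the sense of \cite{Mantoui:2001aa}: a large spectral subspace of $\mathfrak h_0=\Gamma^1D_{r^*}$ sits inside a spectral subspace of $\Gamma^1$ on which the obstructing commutator vanishes, and the evolved field concentrates, microlocally, precisely on that subspace, so the dangerous contribution is in fact negligible. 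Granting Proposition~\ref{prop:microlocal}, this difficulty is already absorbed, and what remains — propagating weighted $L^2$-norms to control $r^*/t$ on a dense set, and assembling the velocity estimates into the two Cook arguments above — is routine. An alternative, more self-contained proof consists in running the propagation-observable argument of Proposition~\ref{prop:microlocal} directly with $\Gamma^1$ in place of $r^*/t$, obtaining the statement without the intermediate asymptotic position observable.
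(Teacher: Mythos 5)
Your density reduction and the computation of the Heisenberg derivative, $-2i\mu\,e^{it\mathfrak h}g(r^*)\Gamma^2\Gamma^1\chi(\mathfrak h)e^{-it\mathfrak h}$, are correct, and you have located the difficulty correctly (Coulomb-type decay of $g$ at the double horizon, non-commutation of $\Gamma^1$ and $\Gamma^2$). But the route you actually develop has a genuine gap at its first step, the existence of $\slim_{t\to+\infty}e^{it\mathfrak h}b(\frac{r^*}{t})\chi(\mathfrak h)e^{-it\mathfrak h}$. The Heisenberg derivative $\frac1t(\Gamma^1-\frac{r^*}{t})b'(\frac{r^*}{t})$ cannot be handled ``exactly as in the proof of Proposition~\ref{prop:comparaison1exist}'': there the cut-off derivative $j'$ was supported in $\{|x|\le\varepsilon_\chi\}$, \emph{below} the minimal velocity, so after inserting $\tilde\chi$ the $v$-side factor $\tilde\jmath(\frac{r^*}{t})\tilde\chi(\cdot)e^{-itH_1}v$ satisfies the uniform $L^2(\frac{\mathrm dt}{t})$ bound of Proposition~\ref{prop:mve}, and it is this uniform bound that closes the Cauchy--Schwarz/weak-Cook argument. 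In your setting $b$ must equal $x$ on $[\varepsilon_\chi,1+\delta]$, hence $b'\equiv1$ on the classically allowed annulus containing the asymptotic speeds $\pm1$; no uniform propagation estimate holds for a bare spatial cutoff supported on that annulus (already for $\Gamma^1D_{r^*}$, or for an outgoing state of $\mathfrak h$, $\|\tilde\jmath(\frac{r^*}{t})\tilde\chi(\mathfrak h)e^{-it\mathfrak h}v\|$ stays bounded below, so $\int_1^{\infty}\|\cdot\|^2\frac{\mathrm dt}{t}$ diverges logarithmically), and the microlocal estimate~\eqref{eq:ml_ve} supplies only one factor $(\Gamma^1-\frac{r^*}{t})$, which can be spent on only one side of the pairing. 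So the factorization fails precisely on the region where the whole difficulty lives. In effect you are using the existence of the asymptotic velocity to prove the lemma, whereas in the paper the logic runs the other way: the lemma (plus Corollary~\ref{lemme:asymptotic_velocity}) is what yields the asymptotic velocity, so your main route is circular unless the $b(\frac{r^*}{t})$-limit is proved by some independent, and necessarily less routine, argument.

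A secondary gap: your second step, $(\Gamma^1-\frac{r^*}{t})\chi(\mathfrak h)e^{-it\mathfrak h}u\to0$, does not follow from the stated estimates on the outer region $\{|r^*|\ge(1+\delta)t\}$, where $\frac{r^*}{t}$ is unbounded and the maximal velocity estimate is only formulated for bounded cutoffs; one needs a weighted maximal velocity estimate or moment bounds there. The paper's proof is the one you relegate to your final sentence (Remark~\ref{rem:replace_appendixD}): apply Cook's method directly to $e^{it\mathfrak h}\Gamma^1\chi(\mathfrak h)e^{-it\mathfrak h}$, i.e.\ run the propagation-observable argument of Proposition~\ref{prop:microlocal} with $\Gamma^1$ itself, and neutralize the Coulomb-type term $-2\mu g\Gamma^2\Gamma^1$ by exploiting that $\Gamma^2\Gamma^1$ anti-commutes with $\Gamma^1$, hence is off-diagonal with respect to the spectral decomposition of $\Gamma^1$ (local scalarity in the sense of~\cite{Mantoui:2001aa}), combined with the microlocal localization $\Gamma^1\approx\frac{r^*}{t}$ on the propagation cone, as in~\cite{Daude:2010aa}; that is the step your proposal would need to carry out in detail rather than mention in passing.
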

\end{rem}
Proposition~\ref{prop:microlocal} is known as a microlocal velocity estimate. It completes the asymptotic information about the operator $\frac{r^*}{t}$ - which is itself to be thought of as an approximate velocity operator - provided by minimal and maximal velocity estimates. For instance, combining the three, we show that~:
\begin{corollaire}
\label{lemme:asymptotic_velocity}
For any $J\in C_{\infty}(\mathbb{R})$:
\begin{equation}
\slim_{t\to+\infty} e^{it\mathfrak{h}}(J(\frac{r^*}{t})-J(\Gamma^1))e^{-it\mathfrak{h}}=0,
\end{equation}
\end{corollaire}
\begin{proof}
First, by density, it is sufficient to consider $J\in C^{\infty}_0(\mathbb{R})$. For such $J$,  the Helffer-Sjöstrand formula can be used to show that the following holds for any $j_0 \in C^{\infty}_0(\mathbb{R})$:
\[\begin{aligned} (J(\frac{r^*}{t})-&J(\Gamma^1))j_0(\frac{r^*}{t})\\&=\frac{i}{2\pi}\int \partial_{\bar{z}}\tilde{J}(z)(\Gamma^1-z)^{-1}\left(\frac{r^*}{t}-z\right)^{-1}(\Gamma^1-\frac{r^*}{t})j_0(\frac{r^*}{t})\textrm{d}z\wedge\textrm{d}\bar{z}\\&=B(t)(\Gamma^1-\frac{r^*}{t})j_0(\frac{r^*}{t}). \end{aligned}\]
The $B(t)$ are uniformly bounded in $t$. 
By a further density argument we only need to prove that for any $\chi \in C^{\infty}_0(\mathbb{R})$, $0 \not\in \supp \chi$:
\[\slim_{t\to+\infty} e^{it\mathfrak{h}}(J(\frac{r^*}{t})-J(\Gamma^1))\chi(\mathfrak{h})e^{-it\mathfrak{h}}=0.\]
Fix $\chi$ and introduce a smooth partition of unity, $j_1,j_2,j_3$ subordinate to the open cover:
\[U_1=\{|x| < \varepsilon_\chi - \frac{\delta}{2}\}, U_2= \{ |x|>1+\frac{\delta}{2}\}, U_3=\{  \varepsilon_\chi-\delta<|x| < 1+\delta \},\] where $\varepsilon_\chi$ is given by Proposition~\ref{prop:mve} and $\delta \in (0,2\varepsilon_\chi)$.
Then:
\[e^{it\mathfrak{h}}(J(\frac{r^*}{t})-J(\Gamma^1))\chi(\mathfrak{h})e^{-it\mathfrak{h}}=\sum_i e^{it\mathfrak{h}}B(t)(\Gamma^1-\frac{r^*}{t})j_i(\frac{r^*}{t})\chi(\mathfrak{h})e^{-it\mathfrak{h}}\]
The result now follows from the minimal, maximal and microlocal velocity estimates. 
\end{proof}
\subsubsection{Asymptotic velocity operators and wave operators for the spherically symmetric operators}
\label{sec:asymptotic_velocity_operators}
Corollary~\ref{lemme:asymptotic_velocity} can now be used to show the existence of asymptotic velocity operators which are defined as the limits\footnote{See the appendices of~\cite{Derezinski:1997aa}}~:
\[P^\pm=\underset{t\to \pm\infty}{\textrm{s\,--\,$C_\infty$\,--\,lim~}} e^{it\mathfrak{h}}\frac{r^*}{t}e^{-it\mathfrak{h}}.\]
According to~(Lemma~\ref{lemme:existence_strong_cinf_lim_gamma}), we know that the limits~: $\displaystyle \slim_{t\to \pm\infty} e^{it\mathfrak{h}}(\pm\Gamma^1)e^{-it\mathfrak{h}}$ exist and, consequently:
\begin{equation} \begin{gathered} P^\pm = \slim_{t\to\pm \infty} e^{it\mathfrak{h}}(\pm\Gamma^1)e^{-it\mathfrak{h}}, \\ \sigma(P^\pm)=\{-1,1\}.
\end{gathered}\end{equation}

The final stage of the construction is to prove the existence of the (modified) operators in the spherically symmetric case. Here, the operators $P^\pm$ can be used to distinguish between the incoming and outgoing states instead of cut-off functions. The simplicity of their spectrum means in particular that:
\[\mathscr{H} = \mathscr{H}_{\textrm{in}} \oplus \mathscr{H}_{\textrm{out}},\]
where: $\mathscr{H}_{\textrm{in}}=\bm{1}_{\{-1\}}(P^{\pm})$, $\mathscr{H}_{\textrm{out}}=\bm{1}_{\{1\}}(P^{\pm})$. 

At the simple horizon, the asymptotic dynamics is given by $\mathfrak{h}_1=\Gamma^1D_{r^*} + c_0$. The difference between this and the operator $\mathfrak{h}$ is short range when $r^* \to +\infty$. Hence, the existence of the wave operators on $\mathscr{H}_{\textrm{out}}$ can be shown in exactly the same manner as that of~\eqref{eq:comp2prove}. 

At the double horizon, it is necessary to modify slightly the comparison dynamics in order to take into account the long range potentials, as in~\cite{Daude:2010aa}, we choose to use the Dollard~\cite{Dollard:1966aa} modification; in particular, the existence of the modified wave operator is contained in the results presented in~\cite[Sections~VII.B~(Theorem~7.2),~VII.C]{Daude:2010aa}.

We briefly recall the main idea of the Dollard modification. We seek to compare $\mathfrak{h}= \Gamma^1D_{r^*} -\mu g(r^*)\Gamma^2 +f(r^*)$ to $\mathfrak{h}_0=\Gamma^1D_{r^*}$ on $\mathscr{H}_{\textrm{in}}$. Several remarks are in order: both the potentials are long-range near the double horizon and $\{\Gamma^2,\mathfrak{h}_0\}=0$. This anti-commutation property means that the corresponding term can be thought of as an ``artifical'' long-range term; it is no obstruction to the existence of wave operators. This is perhaps best understood by looking at $\mathfrak{h}^2$~:
\begin{multline*} \mathfrak{h}^2=D_{r^*}^2+\mu^2g(r^*)^2+f(r^*)^2+\Gamma_1\{D_{r^*},f(r^*)\}-2\mu f(r^*)g(r^*)\Gamma^2\\-\mu\underbrace{\{\Gamma^1,\Gamma^2\}}_{=0}g(r^*)D_{r^*}+i\mu g'(r^*)\Gamma^1\Gamma^2.\end{multline*}
We observe that there are no surviving long-range times containing $g$.

The main idea of the Dollard modification can be explained as follows: if the potential $f(r^*)$ commuted with $\mathfrak{h}_0$, one could expect on a purely formal level that:
\[\begin{aligned} e^{i\mathfrak{h}t}e^{-if(r^*)t}e^{-i\mathfrak{h}_0t}&=e^{i(\mathfrak{h}_0-\mu g(r^*)\Gamma^2)t}e^{+if(r^*)t}e^{-if(r^*)t}e^{-i\mathfrak{h}_0t}\\&=e^{i(\mathfrak{h}_0-\mu g(r^*)\Gamma^2)t}e^{-i\mathfrak{h}_0t}.\end{aligned}\]
Hence, modifying the asymptotic dynamics with $e^{itf(r^*)}$ would enable us to construct a wave operator.
Now, of course $f$ does not commute with $\mathfrak{h}_0$, but, Proposition~\ref{prop:microlocal} and Corollary~\ref{lemme:asymptotic_velocity} suggest that, in some sense, $r^* \approx \Gamma^1t$ when $t\to +\infty$, therefore it could be a good idea to attempt to approximate $f(r^*)$ with $f(\Gamma^1t)$, which \emph{does} commute with $\mathfrak{h}_0$! We are therefore lead to try the above reasoning with the dynamics $U(t,t_0)$ generated by $f(t\Gamma^1)$. In fact, the comparison only interests us for $r^*<0$, so we will consider the dynamics generated by $\tilde{f}(t\Gamma^1)=j(t\Gamma^1)f(t\Gamma^1)$ where $j \in C^{\infty}(\mathbb{R})$ is a smooth cut-off function satisfying $j(s)=0$ if $s>1$ and $j(s)=1$ if $s<\frac{1}{2}$.
Since $t\mapsto \tilde{f}(t\Gamma^1)=V(t)$ is uniformly bounded in $t$, $U(t,t_0)$ of this time-dependent operator is given by the Dyson series, or, time-ordered exponential:
\[\begin{aligned} U(t,t_0)&= \sum_{n=0}^{+\infty} \frac{(-i)^n}{n!} \int_{[t_0,t]^n} T(V(t_1)V(t_2)\dots V(t_n))\textrm{d}t_n\dots\textrm{d}t_1\\&=T\exp\left((-i)\int_{t_0}^t V(s)\textrm{d}s\right).\end{aligned}\]
In the above, the operator $T$ denotes time ordering of the operators which is defined as:
\[T(V(t_1)\dots V(t_n))=\sum_{\sigma \in \mathfrak{S}_n}\bm{1}(t_{\sigma(1)}>t_{\sigma(2)}>\dots>t_{\sigma(n)})V(t_{\sigma(1)})\dots V(t_{\sigma(n)}).\]
The uniform-boundedness of the operators $V(t)$ implies that this expansion converges in norm. 
Set $U(t)=U(t,0)$, then according to~\cite[Section~7.2]{Daude:2010aa}~:
\begin{prop}
\label{prop:wave_operators}
The following limits exist:
\begin{equation}
\begin{aligned}
&\slim_{t\to\pm\infty} e^{it\mathfrak{h}}e^{-it\mathfrak{h}_1}\bm{1}_{\{1\}}(\pm \Gamma^1),\\
&\slim_{t\to\pm\infty} e^{it\mathfrak{h}_1}e^{-it\mathfrak{h}}\bm{1}_{\{1\}}(P^{\pm}),\\
&\slim_{t\to\pm\infty} e^{it\mathfrak{h}}U(t)e^{-it\mathfrak{h}_0}\bm{1}_{\{-1\}}(\pm\Gamma^1),\\
&\slim_{t\to\pm\infty} e^{it\mathfrak{h}_0}U(t)^*e^{it\mathfrak{h}}\bm{1}_{\{-1\}}(P^{\pm}).
\end{aligned}
\end{equation}
\end{prop}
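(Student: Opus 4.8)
The plan is to establish all four limits by Cook's method, following closely the pattern of the proof of Proposition~\ref{prop:comparaison1exist} and of the $H_e$-discussion of Section~\ref{sec:Hx}; the only genuinely new ingredient is the long-range analysis at the double horizon, for which I would transcribe~\cite[Sections~VII.B~(Theorem~7.2),~VII.C]{Daude:2010aa} to our operator $\mathfrak{h}$. Throughout I would treat only the limits at $t\to+\infty$, the case $t\to-\infty$ being obtained by the substitution $\Gamma^1\mapsto-\Gamma^1$ (this is why the two signs appear in the spectral projections), and, by the usual density argument, I would insert from the outset an energy cut-off $\chi(\mathfrak{h})$ — resp.\ $\chi(\mathfrak{h}_0)$, $\chi(\mathfrak{h}_1)$ — with $\supp\chi\cap\{0\}=\emptyset$, so that the minimal, maximal and microlocal velocity estimates (Proposition~\ref{prop:mve}, the maximal velocity estimate, Proposition~\ref{prop:microlocal}) and the asymptotic velocity operators $P^{\pm}=\slim e^{it\mathfrak{h}}(\pm\Gamma^1)e^{-it\mathfrak{h}}$ are available.

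For the first two limits (simple horizon) the key point is that $\mathfrak{h}-\mathfrak{h}_1=-\mu g(r^*)\Gamma^2+(f(r^*)-c_+)$ is short-range near $r^*\to+\infty$, while $\mathfrak{h}_1$ commutes with $\Gamma^1$ and acts as $D_{r^*}+c_+$ on $\bm{1}_{\{1\}}(\Gamma^1)$; hence for $u$ in a suitable dense set $e^{-it\mathfrak{h}_1}\bm{1}_{\{1\}}(\Gamma^1)u$ is essentially supported in $\{r^*\gtrsim\varepsilon t\}$, where $\|\mathfrak{h}-\mathfrak{h}_1\|=O(t^{-\alpha})$ for some $\alpha>1$. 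Differentiating $e^{it\mathfrak{h}}e^{-it\mathfrak{h}_1}\bm{1}_{\{1\}}(\Gamma^1)$ produces only the term $e^{it\mathfrak{h}}\,i(\mathfrak{h}-\mathfrak{h}_1)e^{-it\mathfrak{h}_1}\bm{1}_{\{1\}}(\Gamma^1)$ — there is no $j'$-type term since the comparison dynamics carries no spatial cut-off — and the $O(t^{-\alpha})$ bound makes the Cook integral converge absolutely. The reversed limit $\slim e^{it\mathfrak{h}_1}e^{-it\mathfrak{h}}\bm{1}_{\{1\}}(P^{+})$ I would handle symmetrically: by Lemma~\ref{lemme:existence_strong_cinf_lim_gamma} and the discussion of asymptotic velocity operators, $e^{-it\mathfrak{h}}\bm{1}_{\{1\}}(P^{+})\chi(\mathfrak{h})u$ is $o(1)$-close to a state localised in $\{r^*\geq\varepsilon t\}$, so the same short-range bound applies, the residual $j'$/Cauchy--Schwarz terms being controlled exactly as in the last part of the proof of Proposition~\ref{prop:comparaison1exist} with the help of Proposition~\ref{prop:microlocal}.

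The substance of the argument is the two Dollard-modified limits at the double horizon. The first observation I would record is that for each fixed $t$ the matrix $\tilde{f}(t\Gamma^1)$ is constant, hence commutes with $\mathfrak{h}_0=\Gamma^1 D_{r^*}$, so $[U(t),\mathfrak{h}_0]=0$ and $\dot U(t)=-i\tilde{f}(t\Gamma^1)U(t)$, whence on $D(\mathfrak{h})$
\[\frac{d}{dt}\Big(e^{it\mathfrak{h}}U(t)e^{-it\mathfrak{h}_0}\bm{1}_{\{-1\}}(\Gamma^1)\Big)=e^{it\mathfrak{h}}\,i\big(\mathfrak{h}-\mathfrak{h}_0-\tilde{f}(t\Gamma^1)\big)U(t)e^{-it\mathfrak{h}_0}\bm{1}_{\{-1\}}(\Gamma^1),\]
with $\mathfrak{h}-\mathfrak{h}_0=-\mu g(r^*)\Gamma^2+f(r^*)$. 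I would then split this into the \emph{artificial} long-range term $-\mu g(r^*)\Gamma^2$ and the \emph{true} long-range term $f(r^*)-\tilde{f}(t\Gamma^1)$. For the latter, on $\bm{1}_{\{-1\}}(\Gamma^1)$ and for $t$ large one has $\tilde{f}(t\Gamma^1)=j(-t)f(-t)=f(\Gamma^1 t)$ (the cut-off $j$ is harmless there); Proposition~\ref{prop:microlocal} and Corollary~\ref{lemme:asymptotic_velocity} give $r^*\approx\Gamma^1 t$ on the relevant part of the support of $\chi(\mathfrak{h})e^{-it\mathfrak{h}}$, and since $f$ is of Coulomb type $\|f(r^*)-f(\Gamma^1 t)\|=O\big(|r^*-\Gamma^1 t|\,t^{-2}\big)$ there; a Cauchy--Schwarz argument against $\int_1^{+\infty}\|\bm{1}_{[\theta_1,\theta_2]}(|r^*|/t)(\Gamma^1-r^*/t)\chi(\mathfrak{h})e^{-it\mathfrak{h}}u\|^2\,dt/t<\infty$ then yields convergence of the Cook integral — this is precisely the second type of Cook argument already used in the proof of Proposition~\ref{prop:comparaison1exist}. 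The reversed limit $\slim e^{it\mathfrak{h}_0}U(t)^{*}e^{-it\mathfrak{h}}\bm{1}_{\{-1\}}(P^{+})$ I would treat the same way, now using that $\bm{1}_{\{-1\}}(P^{+})$-states propagate to $-\infty$.

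The hard part will be the artificial long-range term $\mu g(r^*)\Gamma^2$: although $g$ decays only like $|r^*|^{-1}$ near the double horizon, $\Gamma^2$ does not commute with the free velocity operator $\Gamma^1$, so a naive Cook estimate diverges logarithmically. The way around this, following Remark~\ref{rem:replace_appendixD} and~\cite{Daude:2010aa}, is to exploit that this term does not survive in $\mathfrak{h}^2$ (because $\{\Gamma^1,\Gamma^2\}=0$) and that on large spectral subspaces of $\mathfrak{h}$ the matrices $\Gamma^1$ and $\Gamma^2$ effectively commute; making this rigorous requires the locally-scalar-operator machinery of~\cite{Mantoui:2001aa} applied as in~\cite{Daude:2010aa}, after which Proposition~\ref{prop:microlocal} applies and shows that this term too contributes an $L^1(dt/t)$ quantity after Cauchy--Schwarz. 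Everything else — the existence of the dense domains, the uniform boundedness of $U(t)$ and the identification of $\tilde{f}$ on $\bm{1}_{\{-1\}}(\Gamma^1)$ — is routine, and the whole proposition is, in the end, a transcription of~\cite[Sections~VII.B,~VII.C]{Daude:2010aa} to $\mathfrak{h}$, whose coefficients $f$ and $g$ have, at both ends, the same asymptotics as the ones there.
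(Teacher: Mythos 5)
Your overall architecture (energy cut-offs away from $0$, replacement of $\bm{1}_{\{\pm1\}}(\Gamma^1)$ by $J(\Gamma^1)$ and then by $J(\frac{r^*}{t})$ via Corollary~\ref{lemme:asymptotic_velocity}, Cook's method, the short-range argument at the simple horizon, and the treatment of the genuine Coulomb term $f(r^*)-\tilde f(t\Gamma^1)$ by factoring out $(\Gamma^1-\frac{r^*}{t})$ and invoking the microlocal estimate~\eqref{eq:ml_ve}) is exactly the paper's (Appendix~\ref{app:dollard} together with the remark that the simple-horizon limits are proved as for~\eqref{eq:comp2prove}). The gap is in the one step you yourself single out as the hard one: the ``artificial'' long-range term $-\mu g(r^*)\Gamma^2$ in the Dollard comparison. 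The mechanism you propose for it --- reduce via the locally scalar operator machinery of~\cite{Mantoui:2001aa} so that $\Gamma^1$ and $\Gamma^2$ ``effectively commute'', then apply Proposition~\ref{prop:microlocal} and Cauchy--Schwarz --- does not close. The microlocal velocity estimate only controls expressions carrying a factor $(\Gamma^1-\frac{r^*}{t})$, i.e.\ terms that vanish on the propagation set $r^*\approx\Gamma^1 t$; the term $J(\frac{r^*}{t})g(r^*)\Gamma^2$ carries no such factor and is of size exactly $O(1/t)$ there, so after any Cauchy--Schwarz you are still left with a $\int\frac{\textrm{d}t}{t}$ divergence. In the paper the locally scalar machinery appears only inside the proofs of Proposition~\ref{prop:microlocal} and Lemma~\ref{lemme:existence_strong_cinf_lim_gamma} (Remark~\ref{rem:replace_appendixD}); it is not what absorbs this term in the Cook integral, and citing~\cite{Daude:2010aa} wholesale does not supply the missing argument either unless you actually import his treatment of the analogous anti-commuting term.

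What the paper does instead (Appendix~\ref{app:dollard}) is an oscillation/integration-by-parts argument that uses the anti-commutation $\{\Gamma^2,\mathfrak{h}_0\}=0$ in a different way: since $\Gamma^2U(t)=\tilde U(t)\Gamma^2$ with $\tilde U(t)$ unitary, the troublesome term can be written with the factor $e^{it\mathfrak{h}_0}\Gamma^2e^{-it\mathfrak{h}_0}\chi(\mathfrak{h}_0)=\Gamma^2e^{-2it\mathfrak{h}_0}\chi(\mathfrak{h}_0)$, whose time primitive $E(t)=\Gamma^2\int_0^te^{-2is\mathfrak{h}_0}\chi(\mathfrak{h}_0)\,\textrm{d}s$ is uniformly bounded precisely because $\chi$ vanishes near the stationary energy $0$ (bounded functional calculus, $\sup_\lambda\bigl|\int_0^te^{-2is\lambda}\chi(\lambda)\,\textrm{d}s\bigr|<\infty$). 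One then integrates the Cook integral by parts in $t$: the boundary term is $O(|g(r^*)|J(\frac{r^*}{t}))=O(1/t)$ and the new integrand, having gained a derivative in $t$, is $O(t^{-2})$, hence integrable. This is the idea your sketch is missing; the observation that the term ``does not survive in $\mathfrak{h}^2$'' is the right heuristic, but it has to be implemented through this stationary-phase-type argument (or an equivalent one), not through the microlocal velocity estimate. The rest of your proposal (including the inverse modified wave operator, handled symmetrically, and the minor looseness about the absence of $j'$-terms for the free outgoing dynamics) is consistent with the paper.
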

Once more, we note that the proof in~\cite[Section~7.2]{Daude:2010aa} is complicated by the presence of a mass term absent from our operators. The reader will find 
a simplified proof in Appendix~\ref{app:dollard}.

\section{The full scattering theory}\label{fullscattering}
In the previous two sections, the original scattering problem was progressively reduced to a one-dimensional problem via two intermediate comparisons. We discussed the proof of the existence of a number of strong limits that are to be identified with intermediate waves operators. In this section, we assemble these results into the scattering theory we set out to construct; the whole construction was broken up into three comparisons as illustrated in Figure~\ref{fig:tree1}.
\begin{figure}[h]
\begin{forest}
forked edges,
for tree={
grow'=0,
draw,
align=c,
rounded corners,
}
[$H$
    [$H_1$
    	[$H_e$
	[Asymptotic profiles]
	]
    	[$H_0$
	[Asymptotic profiles]
	]
    ]
] 
\end{forest}
\caption{Successive comparisons \label{fig:tree1}}
\end{figure}
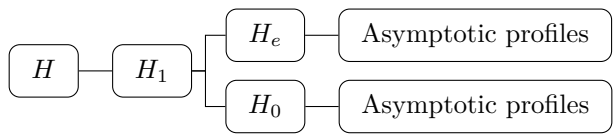
\subsection{Comparison I}
The difference between $H_1$ and $H$ being a short-range potential at both infinities, there was no obstruction to the existence of the classical wave operators (Proposition~\ref{prop:comparaison1exist}):
\begin{equation}
\begin{aligned}
\Omega^1_{\pm}&=\slim_{t\to \pm \infty} e^{itH_1}e^{-itH}P_c(H), \\
\tilde{\Omega}^1_{\pm}&=\slim_{t\to\pm\infty}e^{itH}e^{-itH_1}P_c(H_1).
\end{aligned}
\end{equation} 
The properties of these operators are well known\footnote{see, for example, \cite[Chapter~37]{Lax:2002aa}}, they satisfy:
\begin{equation}
\label{eq:propertieswo1}
\begin{gathered}
\tilde{\Omega}_{\pm}^1={\Omega_{\pm}^1}^*, \, \, \underset{\textrm{Intertwining relation}}{\Omega^1_\pm H = H_1 \Omega^1_\pm}, \\{\Omega_{\pm}^1}^*{\Omega_{\pm}^1}=P_c(H), \,\, {\Omega_{\pm}^1}{\Omega_{\pm}^1}^*=P_c(H_1),
\end{gathered}
\end{equation}
as such they are isometries between the absolutely continuous subspaces of $H$ and $H_1$; the intertwining relation shows that $H$ and $H_1$ are unitarily equivalent.
\subsection{Comparison II}
The second comparison was established in Section~\ref{sec:second_comp} and required to distinguish between states scattering to the double horizon $\mathscr{H}_{r_e}$ and those scattering to the simple horizon $\mathscr{H}_{r_+}$. This distinction was accomplished using smooth cut-off functions $c_\pm$, vanishing on a neighbourhood of $\mp \infty$ and equal to $1$ on a neighbourhood of $\pm \infty$; we will denote by $\mathscr{C}_\pm$ the subset of smooth functions with these properties. We have shown the existence of the limits, for $c_\pm \in \mathscr{C}_\pm$:
\begin{equation}
\label{eq:waveop2}
\begin{aligned}
\Omega^2_{\pm,\mathscr{H}_{r_+} }=\slim_{t\to \pm \infty}~& e^{iH_0t}c_+(r^*)e^{-iH_1t}P_c(H_1),\\
\tilde{\Omega}^2_{\pm,\mathscr{H}_{r_+}}=\slim_{t\to \pm \infty}~& e^{iH_1t}c_+(r^*)e^{-iH_0t},\\
\Omega^2_{\pm,\mathscr{H}_{r_e}}=\slim_{t \to \pm \infty}~& e^{iH_et}c_-(r^*)e^{-iH_1t}P_c(H_1),\\
\tilde{\Omega}^2_{\pm,\mathscr{H}_{r_e}}=\slim_{t \to \pm \infty}~& e^{iH_1t}c_-(r^*)e^{-iH_et}.
\end{aligned}
\end{equation}
The limits are independent of the choice of $c_\pm$; recall also that both $H_e$ and $H_0$ only have absolutely continuous spectrum. \cite[Proposition~4]{Reed:1979aa} shows that the ranges of both $\tilde{\Omega}^2_{\pm,\mathscr{H}_{r_e}}$ and $\tilde{\Omega}^2_{\pm,\mathscr{H}_{r_+}}$ are subsets of the absolutely continuous subspace of $H_1$, it follows then that:
\begin{equation} 
\tilde{\Omega}^2_{\pm, \mathscr{H}_{r_e}}={\Omega^2}^*_{\pm,\mathscr{H}_{r_e}}, \quad \tilde{\Omega}^2_{\pm, \mathscr{H}_{r_+}}={\Omega^2}^*_{\pm,\mathscr{H}_{r_+}}.
\end{equation}
One also has the intertwining relations on the absolutely continuous subspace of $H_1$: 
\begin{align}\label{eq:intertwineH0H1} H_0\Omega^2_{\pm,\mathscr{H}_{r_+}}=\Omega^2_{\pm,\mathscr{H}_{r_+}}H_1,\\
 \label{eq:intertwineHxH1} H_e\Omega^2_{\pm,\mathscr{H}_{r_e}}=\Omega^2_{\pm,\mathscr{H}_{r_e}}H_1. \end{align}
Together, Equations~\eqref{eq:propertieswo1},~\eqref{eq:intertwineH0H1} and~\eqref{eq:intertwineHxH1} give:
\begin{equation}H_0\Omega^2_{\pm,\mathscr{H}_{r_+}}\Omega^1_\pm=\Omega^2_{\pm,\mathscr{H}_{r_+}}\Omega^1_\pm H, \quad H_e\Omega^2_{\pm,\mathscr{H}_{r_e}}\Omega^1_\pm=\Omega^2_{\pm,\mathscr{H}_{r_e}}\Omega^1_\pm H. \end{equation}
Now, since the limits are independent of the choice of $c_\pm\in \mathscr{C}_\pm$, one can always choose $c_\pm$ such that $c_+^2 + c_-^2 = 1$, consequently:
\begin{equation} 
\label{eq:propertieswo2}\begin{gathered} {\Omega^2}^*_{\pm,\mathscr{H}_{r_e}}\Omega^2_{\pm,\mathscr{H}_{r_e}} + {\Omega^2}^*_{\pm,\mathscr{H}_{r_+}}\Omega^2_{\pm,\mathscr{H}_{r_+}}= P_c(H_1),\\
\end{gathered}
 \end{equation}
One could have also chosen $c_\pm$ such that their supports were disjoint, therefore, we must also have:
\begin{equation}
\Omega^2_{\pm,\mathscr{H}_{r_e}}{\Omega^2}^*_{\pm,\mathscr{H}_{r_+}}=\Omega^2_{\pm,\mathscr{H}_{r_+}}{\Omega^2}^*_{\pm,\mathscr{H}_{r_e}}=0.
\end{equation}
In other words, relation~\eqref{eq:propertieswo2} is an orthogonal sum decomposition of the absolutely continuous subspace of $H_1$ and the operators~\eqref{eq:waveop2} are partial isometries.
We therefore have a decomposition of $P_c(H_1)$ into incoming and outgoing states. In what follows, to simplify notations, we consider only the direct wave operators, analogous statements can be formulated for the reverse ones. Define:
\[X_{\textrm{in}}^{H_1}=(\ker \Omega^2_{+,\mathscr{H}_{r_e}})^{\perp}, \quad X_{\textrm{out}}^{H_1}=(\ker \Omega^2_{+,\mathscr{H}_{r_+}})^{\perp}.\]
In virtue of Equation~\eqref{eq:propertieswo2}, these subspaces have nice characterisations, indeed: $X_{\textrm{in}}^{H_1}$ is exactly $\ker \Omega^2_{+,\mathscr{H}_{r_+}}\cap P_c(H_1)\mathscr{H}$ and $\phi \in \ker \Omega^2_{+,\mathscr{H}_{r_+}}\cap P_c(H_1)\mathscr{H}$, if and only if :
\[\lim_{t\to +\infty} ||c_+(r^*)e^{-itH_1}\phi||=0,\] for any $c_+ \in \mathscr{C}_+$. In other words, the states in $X^{H_1}_{\textrm{in}}$ are exactly those whose energy is concentrated on $\mathbb{R}_-$ at late times. Similarly, $\phi \in X_{\textrm{out}}^{H_1}$ if and only if:
\[\lim_{t\to +\infty} ||c_-(r^*)e^{-itH_1}\phi||=0,\] for any $c_-\in \mathscr{C}_-$.
An important point is that $\Omega^2_{+,\mathscr{H}_{r_e}}$ maps $X_{\textrm{in}}^{H_1}$ onto a similar subspace for $H_e$ (and similarly at $\mathscr{H}_{r_+}$ for $H_0$). If $\psi$ is in the range of $\Omega^2_{+,\mathscr{H}_{r_e}}$, then there is $\phi \in X_{\textrm{in}}^{H_1}$ such that:
\[\lim_{t\to +\infty} ||e^{-itH_e}\psi - c_-(r^*)e^{-itH_1}\phi||=0,\]
for any $c_- \in \mathscr{C}_-$. Hence for any $c_+ \in \mathscr{C}_+$, one can choose $c_- \in \mathscr{C}_-$ with support disjoint from that of $c_+$ so that:
\[\begin{aligned}0&=\lim_{t\to +\infty} ||c_+(r^*)e^{-itH_e}\psi - c_+(r^*)c_-(r^*)e^{-itH_1}\phi||,\\&=\lim_{t\to +\infty} ||c_+(r^*)e^{-itH_e}\psi\rVert.\end{aligned}\]
Conversely, all such states are mapped into $X_\textrm{in}^{H_1}$ by ${\Omega^2}^*_{+,\mathscr{H}_{r_e}}$.

Incoming and outgoing subspaces for $H_e$ and $H_0$ were originally defined using the asymptotic velocity operators constructed in Section~\ref{sec:asymptotic_velocity_operators}. These operators were constructed on each of the stable subspaces of the respective orthogonal sum decompositions associated to each of the operators, they are:
\[P^+_e = \slim_{t\to+\infty}e^{itH_e}\Gamma^1e^{-itH_e},\quad P^{+}_0=\slim_{t\to+\infty}e^{itH_0}\Gamma^1e^{-itH_0},\]
and satisfy for any $J\in C_{\infty}(\mathbb{R})$:
\begin{equation}\label{eq:global_asymp_vel_op} \begin{gathered}J(P^+_e)= \slim_{t\to+\infty}e^{itH_e}J(\frac{r^*}{t})e^{-itH_e}, \\ J(P^+_0)=\slim_{t\to+\infty}e^{itH_0}J(\frac{r^*}{t})e^{-itH_0}. \end{gathered}\end{equation}
In terms of these operators, $X^{H_e}_\textrm{in}= \textrm{Ran} \bm{1}_{\mathbb{R}_-}(P^+_e)=\textrm{Ran}\bm{1}_{\{-1\}}(P^+_e)$. Using~\eqref{eq:global_asymp_vel_op}, one can show that $X^{H_e}_\textrm{in}$ as defined above coincides exactly with the image 
$\Omega^2_{+,\mathscr{H}_{r_e}}X^{H_1}_\textrm{in}$, for instance, for any $\phi \in \mathscr{H}$,
\[\bm{1}_{\{-1\}}(P^+_e)\phi =J(P^+_e)\phi = \lim_{t\to+\infty} e^{itH_e}J(\frac{r^*}{t})e^{-itH_e}\phi,\] for any $J\in C^{\infty}_0(\mathbb{R})$ such that $\supp J \subset (-\infty,0), J(-1)=1$.
Hence, for any $c_+ \in \mathscr{C}_+$:
\[\lim_{t \to +\infty} c_+(r^*)e^{-itH_e}\bm{1}_{\{-1\}}(P^+_e)\phi= \lim_{t\to+\infty} c_+(r^*)J(\frac{r^*}{t})e^{-itH_e}\phi=0.\]
The other inclusion is proved in a similar manner, one can show for example that:
 \begin{equation} \lim_{t\to +\infty} c_+(r^*)e^{-itH_e}\phi=0, \textrm{for any $c_+ \in \mathscr{C}_+$} \Rightarrow \phi \in \textrm{Ran}\bm{1}_{\{1\}}(P^+_e)^{\perp}. \end{equation} Indeed, let $\phi$ satisfy the condition and let $\psi \in \textrm{Ran}\bm{1}_{\{1\}}(P^+_e)$. A similar argument to the one above shows that for any $c_- \in \mathscr{C}_-$:
\[\lim_{t\to+\infty} c_-(r^*)e^{-itH_e}\psi=0.\] Choose now $c_+\in \mathscr{C}_+,c_-\in\mathscr{C}_-$ such that $c_+ + c_-=1$, then for $t\in \mathbb{R}$:
\begin{equation} 
\begin{aligned}
(\phi,\psi)&=(e^{-itH_e}\phi,e^{-itH_e}\psi),\\&=(c_+(r^*)e^{-itH_e}\phi,e^{-itH_e}\psi) + (e^{-itH_e}\phi,c_-(r^*)e^{-itH_e}\psi).
\end{aligned}
\end{equation}
By the Cauchy-Schwarz inequality, it follows that for any $t\in \mathbb{R}$:
\[ |(\psi,\phi)| \leq ||\psi|| ||c_+(r^*)e^{-itH_e}\phi|| + ||\phi|| ||c_-(r^*)e^{-itH_e}\psi||,\]
The right-hand side approaches $0$ as $t\to +\infty$ so that:
\[|(\psi,\phi)|=0.\]
We can therefore define a global wave operator from the absolutely continuous subspace of $H_1$ onto the \emph{external} direct sum $\textrm{Ran} \bm{1}_{\{-1\}}(P^+_e) \oplus \textrm{Ran} \bm{1}_{\{1\}}(P^+_0)$.
\begin{multline}
\begin{array}{rcl}\Omega^2_+~:~ X_\textrm{in}^{H_1}\oplus X_\textrm{out}^{H_1} &\longrightarrow & \textrm{Ran} \bm{1}_{\{-1\}}(P^+_e) \oplus \textrm{Ran} \bm{1}_{\{1\}}(P^+_0)\\ \\ (\phi_1,\phi_2) &\longmapsto& (\Omega^2_{+, \mathscr{H}_{r_e}}\phi_1, \Omega^2_{+,\mathscr{H}_{r_+}}\phi_2).\end{array}
\end{multline}

\subsection{Comparison III}
Although the results in Section~\ref{sec:sphericalsymmop} can be used to construct a scattering theory for $H_e$ and $H_0$ on the whole Hilbert space, the previous discussion shows that, for our needs, it only relevant to do this on 
$\textrm{Ran} \bm{1}_{\{-1\}}(P^+_e)$ for $H_e$ and on $\textrm{Ran} \bm{1}_{\{1\}}(P^+_0)$ for $H_0$.
The asymptotic profiles are given by:
\begin{equation}
\begin{gathered}
H_{-\infty}= \Gamma^1D_{r^*},\\
H_{+\infty}=\Gamma^1D_{r^*}+\left(\frac{a}{r_+^2+a^2}-\frac{a}{r_e^2+a^2} \right)p.
\end{gathered}
\end{equation}
The outgoing and incoming states are identical for both of these operators and given by:
\[\mathscr{H}^+= \textrm{Ran}\bm{1}_{\{1\}}(\Gamma^1),\quad \mathscr{H}^- = \textrm{Ran}\bm{1}_{\{-1\}}(\Gamma^1).\]
Due to the stability of the subspace under $\Gamma^1, H_e,H_{\pm\infty}$, the results in Section~\ref{sec:sphericalsymmop} prove that the following strong limits exist:
\begin{equation*} \begin{array}{l}\displaystyle \Omega^3_{+,\mathscr{H}_{r_+}}=\slim_{t\to+\infty} e^{itH_{+\infty}}e^{-itH_0}\bm{1}_{\mathbb{R}_+}(P^{+}_0),
\\ \displaystyle \Omega^3_{+,\mathscr{H}_{r_e}}=\slim_{t\to+\infty} e^{itH_{-\infty}}\left(T\exp\left(-i\int_0^t \tilde{f}(\Gamma^1s)\textrm{d}s\right)\right)^*e^{-itH_e}\bm{1}_{\mathbb{R}_-}(P^{+}_e), 
\vspace{.1in}
\\ \displaystyle \tilde{\Omega}^3_{+,\mathscr{H}_{r_+}}=\slim_{t\to+\infty} e^{itH_0}e^{-itH_{+\infty}}\bm{1}_{\mathbb{R}_+}(\Gamma^1),\\ \displaystyle \tilde{\Omega}^3_{+,\mathscr{H}_{r_e}}=\slim_{t\to+\infty} e^{itH_e}T\exp\left(-i\int_0^t \tilde{f}(\Gamma^1s)\textrm{d}s\right)e^{-itH_{-\infty}}\bm{1}_{\mathbb{R}_-}(\Gamma^1),
  \end{array}\end{equation*}
One also has~:~ $\tilde{\Omega}^3_{+,\mathscr{H}_{r_+}}=  {\Omega^3}^*_{+,\mathscr{H}_{r_+}}$ and similarly for $\mathscr{H}_{r_e}$, this gives rise to a unitary map:
\begin{equation*}
 \begin{array}{ccc} \Omega^3_+~:~\textrm{Ran} \bm{1}_{\{-1\}}(P^+_e) \oplus \textrm{Ran} \bm{1}_{\{1\}}(P^+_0) &\longrightarrow & \mathscr{H}^-\oplus\mathscr{H}^+ =\mathscr{H}\vspace{.1in} \\ (\phi_1,\phi_2) &\longmapsto& (\Omega^3_{+, \mathscr{H}_{r_e}}\phi_1, \Omega^3_{+,\mathscr{H}_{r_+}}\phi_2).\end{array}
\end{equation*}
Finally, composition of $\Omega^1_+,\Omega^2_+, \Omega^3_+$ yields a unitary map $W_+$ between $P_c(H) = X_\textrm{in}^{H}\oplus X_\textrm{out}^{H}$ and $\mathscr{H}$, where:
\[X_\textrm{in}^{H}= (\ker \Omega^2_{+,\mathscr{H}_{r_e}}\Omega^1_+)^\perp, \quad X_\textrm{out}^{H}= (\ker \Omega^2_{+,\mathscr{H}_{r_+}}\Omega^1_+)^\perp,\]
given by:
\begin{equation*}
\begin{array}{rcl} W_+~:~X_\textrm{in}^{H} \oplus X_\textrm{out}^{H} &\longrightarrow & \mathscr{H}^-\oplus\mathscr{H}^+ =\mathscr{H} \vspace{.1in}\\ \phi_1+\phi_2 &\longmapsto& \Omega^3_{+, \mathscr{H}_{r_e}}\Omega^2_{+,\mathscr{H}_{r_e}}\Omega^1_+\phi_1 + \Omega^3_{+,\mathscr{H}_{r_+}}\Omega^2_{+,\mathscr{H}_{r_+}}\Omega^1_+\phi_2.\end{array}
\end{equation*}

\begin{rem}
\label{rem:asymptotic_velocity_full_dynamics}
A simple application of the above result is to define the asymptotic velocity operator for the full dynamics. It is defined by the limits for $J\in C_\infty(\mathbb{R})$,
\[J(P^+)=\slim_{t\to+\infty} e^{iHt}J(\frac{r^*}{t})e^{-iHt}=W_+^*J(\Gamma^1)W_+,\]
Using the results discussed in Section~\ref{sec:asymptotic_velocity_operators}, it follows that~: 
$P^+=W_+^*\Gamma^1W_+.$
\end{rem}
\subsection{Scattering for the Dirac operator}
We now return to the notations we adopted prior to Section~\ref{intermediatewaveoperators}, where we dropped the explicit dependence of our operator $H^p$ for notational convenience. We recall from Section~\ref{phi-invariance} that $H^p$ coïncides with the full Dirac operator on each of the subspaces associated with the eigenvalue $p \in \mathbb{Z}+\frac{1}{2}$ of $D_\phi$. The global wave operators obtained in the previous section, although defined on all of $\mathscr{H}$, also depend on the parameter $p$. However the $p$-eigenspace is stable so that to obtain the scattering theory for the Dirac operator one only need to reassemble each of the harmonics. Since the Dirac operator has no pure point spectrum\footnote{see again \cite{Belgiorno:2009aa}}, there is no need to project onto the absolutely continuous subspace. Therefore, we state our final theorem~:
\begin{theo}
\label{th:final_result}
For any \(\displaystyle \phi=\!\!\sum_{p\in \mathbb{Z}+\frac{1}{2}}\phi_p(r^*,\theta)e^{ip\phi} \in \mathscr{H}\) set~:
\begin{equation} \mathscr{P}^+\phi =\!\! \sum_{p\in\mathbb{Z}+\frac{1}{2}}P^{+}_p\phi_p e^{ip\phi},  \end{equation} 
then $\mathscr{P}_+$ is a bounded symmetric operator with spectrum $\{-1,1\}$, and for any $J\in C_\infty(\mathbb{R}),$
\[J(\mathscr{P}_+)= \slim_{t\to+\infty} e^{iHt}J(\frac{r^*}{t})e^{-iHt}. \] Moreover, defining~:
\[ X_\textrm{in} = \textrm{Ran~}\bm{1}_{\{-1\}}(\mathscr{P}^+), \quad X_\textrm{out}=\textrm{Ran~}\bm{1}_{\{1\}}(\mathscr{P}^+),\]
then, $\mathscr{H}= X_\textrm{in} \oplus X_\textrm{out}$ and the operator~:
\begin{equation}\mathfrak{W}^+\phi =\!\! \sum_{p\in\mathbb{Z}+\frac{1}{2}}W_{+}^p\phi_p e^{ip\phi}, \end{equation}
is a unitary operator such that~:
\[\mathfrak{W}_+X_\textrm{in}=\mathscr{H}_- , \quad \mathfrak{W}_+ X_\textrm{out}=\mathscr{H}_+,\] and for the full Dirac operator $H$ defined by Equation~\eqref{def_H}~:
\[H_{-\infty}\mathfrak{W}_+\bm{1}_{\{-1\}}(\mathscr{P}^+)+H_{+\infty}\mathfrak{W}_+\bm{1}_{\{1\}}(\mathscr{P}^+) = \mathfrak{W}_+H. \]
\end{theo}

\section{Conclusion}

In this paper we have proposed an analytical construction for a scattering theory for particules in a region situated between a double and simple horizon of an extreme Kerr-de Sitter blackhole. The presence of the simple horizon alone simplified the problem considerably, being an obstruction to the existence of pure-point spectrum, and the existence of a conjugate operator in the sense of Mourre theory ruled out the possibility for any singular continuous spectrum.  The setting was therefore ideal for an analytic scattering theory.

We found that, from an analytical point of view, the double horizon region was analogous to that of spacelike infinity in Kerr-Newmann spacetime. The theory is in fact slightly easier because the mass terms do not persist at the horizons, meaning that things appear to boil down to the massless case. As in this case, the reasoning hinges on the ability to obtain a minimal velocity estimate.

The main difference and novelty is that the double horizon exacerbates the effects of the rotation of the black hole by complicating the structure of the angular operator; the mass also plays a lesser role here. However, this did not prove to be an essential difficulty for the analytic methods used in this paper, which is another illustration of their robustness. 

The methods used here do however have the clear disadvantage of not being very geometrical. In some sense, the study of the effects of the double horizon is reduced to the distinction between long and short-range potentials; it would be considerably more satisfying to seek a proof of the results in this paper with a clearer geometrical meaning. 


\newpage 

\appendix
\section{Expression of the Dirac equation}
\label{app:dirac_equation}
In this appendix, we will sketch the calculations that lead to the operator $H$ studied in the main text. We recall that the massive Dirac equation for a spin-$\frac{1}{2}$ Dirac spinor $(\phi_A,\chi^{A'})$ is~:
 \begin{equation}
 \label{eq:dirac}
 \left\{ \begin{array}{c} \nabla^{AA'} \phi_A = \mu \chi^{A'} \\ \nabla_{AA'} \chi^{A'} = -\mu \phi_A \ \end{array}, \quad\mu=\frac{m}{\sqrt{2}}\right. .
 \end{equation}
 To convert this into a system of four scalar equations we will use the local spin-connection forms ${\alpha^{\bm{A}}_{\,\,\bm{B}}}_a$ of a local normalised spin frame $(\varepsilon^A_{\bm{A}})_{\bm{A} \in \{0,1\}}$ defined by:
 \[{\alpha^{\bm{A}}_{\,\,\bm{B}}}_a= \varepsilon^{\bm{A}}_{B}\nabla_a \varepsilon^{B}_{\bm{B}}.\]
 Given any orthonormal frame $g^a_{\bm{a}}$ and a normalised spin frame $\varepsilon^A_{\bm{A}}$ such that the vector fields:
\begin{align} \label{eq:np_spin} l^a=\varepsilon^A_0\varepsilon^{A'}_{0'};  \quad n^a=\varepsilon^A_1 \varepsilon^{A'}_{1'}; \quad m^a =\varepsilon^A_0\varepsilon^{A'}_{1'}; \end{align}
of the Newman-Penrose tetrad $(l^a,n^a,m^a,\bar{m}^a)$ satisfy:
\begin{equation}
\label{eq:np_spin2}
\left\{ \begin{array}{c} l^a=\frac{g^a_0+g^a_1}{\sqrt{2}},  \\ n^a = \frac{g^a_0 - g^a_1}{\sqrt{2}}, \\ m^a= \frac{g^a_2 + ig^a_3}{\sqrt{2}}, \end{array}\right.\end{equation}
then the spin connection forms are given in terms of the local connection forms $\omega^{\bm{i}}_{\,\,\bm{j}}$ in the basis $g^a_{\bm{a}}$ by:
\begin{equation}\begin{gathered}
\alpha^{0}_{\,\, 0} = \frac{\omega^{0}_{\,\, 1} + i \omega^{2}_{\,\, 3}}{2}, \quad \alpha^{1}_{\,\, 0} = \frac{\omega^{2}_{\,\, 0} + \omega^{2}_{\,\, 1}}{2} + i \frac{\omega^{3}_{\,\, 0} + \omega^{3}_{\,\, 1}}{2}, \\
\alpha^{0}_{\,\, 1} = \frac{\omega^{2}_{\,\, 0} - \omega^{2}_{\,\, 1}}{2} - i \frac{\omega^{3}_{\,\, 0} - \omega^{3}_{\,\, 1}}{2}. \end{gathered}\end{equation} 
A spin connection is a $\mathfrak{sl}(2,\mathbb{C})$-valued one-form, so necessarily: \[\alpha^{1}_{\,\, 1} = -\alpha^{0}_{\,\,0}.\] In terms of the covariant derivative, this is equivalent to the requirement that $\nabla_a \varepsilon_{AB} = 0$. 
The forms ${\alpha^{\bm{A'}}_{\,\, \bm{B'}}}_a= \varepsilon^{\bm{A'}}_{B'}(\nabla_a \varepsilon^{B'}_{\bm{B'}})$ satisfy:
\begin{equation} \label{eq:complex_spin_connection} {\alpha^{\bm{A'}}_{\,\, \bm{B'}}}_a= {\overline{\alpha^{\bm{A}}_{\,\,\bm{B}}}}_a \end{equation}
\begin{rem}
It should be remarked that our conventions differ slightly from those in \cite{Penrose:1984aa}, namely, we identify $\mathbb{R}^4$ to $H(2,\mathbb{C})$ via the isomorphism:
\[ \varphi : \begin{array}{ccc} \mathbb{R}^4 &\longrightarrow& H(2,\mathbb{C})\\ \left(\begin{array}{c} x_0 \\ x_1 \\ x_2 \\ x_3 \end{array}\right) &\longmapsto& \left(\begin{array}{cc} x_0+x_1 & x_2-ix_3 \\ x_2+ix_3 & x_0-x_1 \end{array}\right) \end{array}\]
\end{rem} 
\begin{rem} Consider the Lie group morphism $ \Lambda :SL(2,\mathbb{C}) \rightarrow SO_+(1,3)$ defined by associating to any $A\in SL(2,\mathbb{C})$ the matrix $\Lambda(A)$ of the linear map $u$ defined by $u(x) = \varphi^{-1}(A \varphi(x)A^*), x \in \mathbb{R}^4$ expressed in the canonical basis of $\mathbb{R}^4$. Then, viewing $\bm{\omega}=(\omega^{\bm{i}}_{\,\, \bm{j}} )_{\bm{i},\bm{j} \in \llbracket 0,3 \rrbracket}$ and $\bm{\alpha}=(\alpha^{\bm{A}}_{\,\, \bm{B}})_{\bm{A},\bm{B} \in \{0,1\}} $ as matrix valued one-forms, it follows that for any $(p,v) \in TM$:
\[ \bm{\alpha}_p(v) = \Lambda^{-1}_{*} (\bm{\omega}_p(v)),\] where $\Lambda_{*}$ is the Lie algebra isomorphism induced by $\Lambda$.
\end{rem}
Once a choice of spin-frame has been made, equation~\eqref{eq:dirac} can be written as four scalar equations in terms of the components $\phi^{\bm{A}},\chi_{\bm{A}'}$ of the spinor fields. For instance, the equation:
\[ \nabla_{AA'}\phi^A = - \mu \chi_{A'},\]
becomes,
\begin{equation*}
\nabla_{\bm{AA'}}\phi^{\bm{A}} + \phi^{\bm{A}}{\alpha^{\bm{B}}_{\,\, \bm{A}}}_{\,CC'}\varepsilon^C_{\bm{B}}\varepsilon^{C'}_{\bm{A'}} = -\mu\chi_{\bm{A'}}.
\end{equation*}
For $\bm{A}=0'$, this translates to :
\[ l^a\nabla_a \phi^0 + \bar{m}^a\nabla_a\phi^1 +\phi^0\left( {\alpha^0_{\,\, 0}}_a l^a + {\alpha^1_{\,\,0}}_a \bar{m}^a \right) + \phi^1\left( {\alpha^0_{\,\,1}}_al^a + {\alpha^1_{\,\, 1}}_a \bar{m}^a \right) = - \mu \chi_{0'},\]
or, equivalently:
\[ l^a\nabla_a \phi_1 -\bar{m}^a\nabla_a\phi_0 +\phi_1\left( {\alpha^0_{\,\, 0}}_a l^a + {\alpha^1_{\,\,0}}_a \bar{m}^a \right) - \phi_0\left( {\alpha^0_{\,\,1}}_al^a + {\alpha^1_{\,\, 1}}_a \bar{m}^a \right) = \mu \chi^{1'}. \]
Overall, we obtain the following system of equations for the components:
\begin{equation}
\label{eq:dirac2}
\left\{\begin{array}{llll}   
l^a \nabla_a \chi^{0'} + m^a \nabla_a \chi^{1'} + \chi^{0'}\overline{F} + \chi^{1'}\overline{G}= - \mu \phi_0, \\
\bar{m}^a \nabla_a \chi^{0'} + n^a\nabla_a \chi^{1'} + \chi^{0'}\overline{G_1}+\chi^{1'} \overline{F_1} = - \mu \phi_1,\\
m^a\nabla_a\phi_1 - n^a \nabla_a\phi_0 + \phi_1G_1 - \phi_0 F_1= - \mu \chi^{0'},\\
l^a\nabla_a \phi_1 -\bar{m}^a\nabla_a\phi_0 +\phi_1 F - \phi_0G = \mu \chi^{1'},
 \end{array} \right.
\end{equation}
where we have defined:
\begin{align*} F&= {\alpha^0_{\,\, 0}}_al^a + { \alpha^1_{\,\, 0}}_a \bar{m}^a, \\ G&={\alpha^0_{\,\,1}}_a l^a + {\alpha^1_{\,\, 1}}_a \bar{m}^a, \\ F_1 &= {\alpha^0_{\,\, 1}}_a m^a + {\alpha^1_{\,\, 1}}_a n^a, \\ G_1&={\alpha^0_{\,\, 0}}_a m^a + {\alpha^1_{\,\, 0}}_a n^a, \end{align*}
and used the fact that, by~\eqref{eq:complex_spin_connection}, for any complex vector fields $u^a, v^a$:
\[{\alpha^{\bm{A'}}_{\,\, \bm{B'}}}_a \bar{u}^a + {\alpha^{\bm{C'}}_{\,\, \bm{D'}}}_a\bar{v^a} = \overline{{\alpha^{\bm{A}}_{\,\, \bm{B}}}_a u^a + {\alpha^{\bm{C}}_{\,\, \bm{D}}}_av^a }.\]

\subsection{Dirac equation in the ``Boyer-Lindquist'' frame}
We will first use the results in~\cite{Borthwick:2018aa} to write the Dirac equation in a normalised spin-frame linked by equations~\eqref{eq:np_spin} and~\eqref{eq:np_spin2} to the orthonormal frame~:
 \begin{equation}
\begin{aligned} g_0^a\frac{\partial}{\partial x^a}&= \frac{\Xi}{\rho \sqrt{\Delta_r}}\left((r^2+a^2)\partial_t+ a\partial_\varphi \right), \\ g_1^a\frac{\partial}{\partial x^a}&=\frac{\sqrt{\Delta_r}}{\rho} \partial_r, \,\, g_2^a\frac{\partial}{\partial x^a} = \frac{\sqrt{\Delta_\theta}}{\rho}\partial_\theta, \\ g_3^a \frac{\partial}{\partial x^a}&= \frac{\Xi}{\sin \theta \sqrt{\Delta_\theta} \rho}\left( \partial_\varphi + a \sin^2\theta \partial_t \right). \end{aligned}\end{equation}
In such a frame, the expressions for $F,G,F_1,G_1$ are given by: 
\begin{equation*}\begin{aligned} &F= \frac{1}{2\sqrt{2}\sqrt{\Delta_r} \rho^3}\left( \frac{\Delta_r'}{2}\rho^2 + \Delta_r\tilde{r} \right),\\ 
&G= \frac{1}{2\sqrt{2} \sqrt{\Delta_\theta} \sin\theta \rho^3} \left( ia\Delta_\theta\sin^2\theta\tilde{r}+\cos\theta \rho^2(1+a^2l^2\cos(2\theta)) \right),\\ &F_1=-F,\\ &G_1=G,\end{aligned}\end{equation*}
where $\Delta_r' = \frac{\partial \Delta_r}{\partial r}$ and $\tilde{r}=(r+ia\cos\theta)$. 

\noindent In matrix form, with $\bm{\psi} = {}^t \left(\phi_0,\phi_1,\chi^{0'},\chi^{1'}\right)$,~\eqref{eq:dirac2} is then:
\[ i (\gamma^\mu\partial_\mu + V)\bm{\psi} = m \bm{\psi}. \]
In the above: 
\[ V = \sqrt{2}\left(\begin{array}{cccc} 0 & 0 & i\bar{F} & i \bar{G} \\ 
0 &0 & i\bar{G} & -i \bar{F} \\ iF & iG & 0 & 0 \\ iG & -iF & 0 &0 \end{array} \right), \]
\begin{align*} \gamma^t =\frac{\Xi(r^2+a^2)}{\sqrt{\Delta_r\rho^2}}\left(\begin{array}{cc} 0 & iI_2 \\ -iI_2 & 0 \end{array}\right) -i\frac{a\sin\theta\Xi}{\sqrt{\Delta_\theta\rho^2}}\left(\begin{array}{cc} 0 & \sigma_y \\ \sigma_y & 0  \end{array}\right), &\\ \gamma^r=i\sqrt{\frac{\Delta_r}{\rho^2}}\left(\begin{array}{cc} 0 & \sigma_z \\ \sigma_z & 0 \end{array}\right), \quad \gamma^\theta = i \sqrt{\frac{\Delta_\theta}{\rho^2}} \left(\begin{array}{cc} 0 & \sigma_x \\ \sigma_x & 0 \end{array}\right),\hspace{.1in}\\ \gamma^\varphi=\frac{a\Xi}{\sqrt{\Delta_r\rho^2}}\left(\begin{array}{cc} 0 &iI_2 \\ -iI_2 & 0  \end{array}\right)-i\frac{\Xi}{\sqrt{\Delta_\theta\rho^2}\sin\theta}\left(\begin{array}{cc} 0 & \sigma_y \\ \sigma_y & 0  \end{array}\right).\end{align*}
The $\gamma^\mu$ are the so-called ``gamma matrices'' that satisfy the Clifford algebra anti-commutation relations~: \[\{\gamma^\mu, \gamma^\nu \} = 2g^{\mu\nu}\textrm{Id}_4.\] $\sigma_x,\sigma_y,\sigma_z$ are the Pauli matrices,
\begin{align*}\sigma_x=\left(\begin{array}{cc}0 & 1 \\ 1 & 0 \end{array}\right),\quad \sigma_y=\left(\begin{array}{cc} 0 & -i \\ i & 0 \end{array}\right),\quad \sigma_z=\left(\begin{array}{cc} 1 & 0 \\ 0 & -1 \end{array}\right)= -i\sigma_x\sigma_y. \end{align*}

\subsubsection{Change of spin-frame}
In the main text, we chose instead to work in a normalised spin-frame defined by Equations~\eqref{eq:np_spin} and~\eqref{eq:np_spin2} from the orthonormal frame~:
\[\begin{gathered}{g'}_0^a= \frac{\nabla^a t}{\sqrt{|\nabla^a t\nabla_a t|}},\,\, {g'}_{1}^a \frac{\partial}{\partial x^a}= \frac{1}{\sqrt{-g_{rr}}}\partial_{r},\\ {g'}_{2}^a\frac{\partial}{\partial x^a}=\frac{1}{\sqrt{-g_{\theta\theta}}}\partial_\theta, \,\,{g'}_{3}^a\frac{\partial}{\partial x^a}=\frac{1}{\sqrt{-g_{\varphi\varphi}}}\partial_\varphi \end{gathered}\]
The matrix $P$ of the Lorentz transformation $L_a^b$ that sends $g_{\bm{a}}^a$ to ${g'}_{\bm{a}}^a$ is given by:
\begin{equation} P = M_{g'^a_{\bm{a}}, g^b_{\bm{b}}}(Id)= \left(\begin{array}{cccc} \frac{\sqrt{\Delta_\theta}(r^2+a^2)}{\sigma} & 0 & 0 & -\frac{a\sin{\theta}\sqrt{\Delta_r}}{\sigma} \\ 0 & 1 &0 &0 \\ 0 &0 &1 &0 \\ -\frac{a\sin{\theta}\sqrt{\Delta_r}}{\sigma} & 0 & 0 &\frac{\sqrt{\Delta_\theta}(r^2+a^2)}{\sigma}  \end{array}\right),\end{equation}
where we have defined $\sigma^2 = \Delta_\theta(r^2+a^2)^2 - \Delta_r a^2\sin^2\theta$. 
Up to sign, the spin transformation $S \in SL(2;\mathbb{C})$ that corresponds to $P$ is:
\begin{equation}
S = \left( \begin{array}{cc} \sqrt{\frac{\sigma_+}{2\sigma}} & \frac{ia\sin\theta\sqrt{\Delta_r}}{\sqrt{2\sigma\sigma_+}} \\ -\frac{ia\sin\theta\sqrt{\Delta_r}}{\sqrt{2\sigma\sigma_+}} & \sqrt{\frac{\sigma_+}{2\sigma}} \end{array} \right),
\end{equation}
in the above formula $\sigma_+ = \sigma + \sqrt{\Delta_\theta}(r^2+a^2)$. It is useful to note that $\sigma_+$ satisfies:
\[\sigma_+^2-a^2\sin^2\theta \Delta_r =2\sigma\sigma_+. \]
The appropriate change of basis matrix in ${\mathbb{S}_{A}}_p\oplus \mathbb{S}^{A'}_p$ at each point $p$ of block II is given by:
\begin{equation}
\tilde{P} = \left( \begin{array}{cc} {}^tS^{-1} & 0 \\ 0 & \bar{S}  \end{array}\right) = \sqrt{\frac{\sigma_+}{2\sigma}}I_4 + \frac{a\sin\theta\sqrt{\Delta_r}}{\sqrt{2\sigma\sigma_+}}\left(\begin{array}{cc} -\sigma_y & 0 \\ 0 & \sigma_y \end{array}\right).
\end{equation}
The equation satisfied by $\bm{\psi'}=\tilde{P}^{-1} \bm{\psi}$ is hence:
\begin{equation} i\tilde{P}^{-1}(\gamma^\mu\partial_\mu + V)\tilde{P} \bm{\psi'} = m\bm{\psi'}. \end{equation}
The left-hand side is:
\[i\tilde{P}^{-1}(\gamma^\mu\partial_\mu + V)\tilde{P} = i\left(\tilde{\gamma}^\mu\partial_\mu + \tilde{V} + \tilde{P}^{-1}\gamma^\mu\frac{\partial \tilde{P}}{\partial x^\mu}\right),\]
where: 
\begin{equation}
\begin{gathered}
 \gamma^r=\tilde{\gamma}^r, \quad \gamma^\theta=\tilde{\gamma}^\theta, \quad \tilde{V} = V, \quad \tilde{\gamma^t} = \frac{\Xi\sigma}{\sqrt{\Delta_r\Delta_\theta}\rho}\left(\begin{array}{cc} 0 & iI_2 \\ -iI_2 & 0 \end{array} \right) ,\\ \tilde{\gamma}^\varphi = \frac{a\Xi q^2\rho}{\sigma\sqrt{\Delta_r\Delta_\theta}}\left(\begin{array}{cc} 0 & iI_2 \\ -iI_2 & 0 \end{array}\right) -i\frac{\Xi\rho}{\sigma\sin\theta}\left(\begin{array}{cc} 0 & \sigma_y \\ \sigma_y & 0\end{array}\right), \\ \tilde{P}^{-1}\gamma^r\frac{\partial \tilde{P}}{\partial r} = \frac{\sqrt{\Delta_r}}{\rho}f_r\left( \begin{array}{cc} 0 & -\sigma_x \\ \sigma_x & 0 \end{array} \right), \\ \tilde{P}^{-1}\gamma^\theta\frac{\partial \tilde{P}}{\partial \theta}=\frac{\sqrt{\Delta_\theta}}{\rho}f_\theta \left(\begin{array}{cc} 0 & \sigma_z \\ -\sigma_z & 0 \end{array} \right). \end{gathered}\end{equation} 
In the above formulae, we have introduced the following notations:
 \begin{equation*}\begin{gathered}q^2 = (\Delta_\theta(r^2+a^2)-\Delta_r)\rho^{-2}, \\ f_r=\frac{a\sin\theta\sqrt{\Delta_\theta}}{2\sigma^2\sqrt{\Delta_r}}\left(-\frac{\Delta'_r}{2}(r^2+a^2)+2r\Delta_r \right), f_\theta =-\frac{a\sqrt{\Delta_r}(r^2+a^2)\cos\theta \Xi}{2\sigma^2\sqrt{\Delta_\theta}}.\end{gathered}\end{equation*}
\noindent The final step is to replace the spinor with the spinor density. After trivialising the density bundle, this amounts to the replacement~:
\begin{equation}\label{eq:spinor_density} \Phi=\underbrace{\left(\frac{\Delta_r \rho^2\sigma^2}{\Delta_\theta(r^2+a^2)^2\Xi^4} \right)^{\frac{1}{4}} }_{\alpha(r,\theta)^{-1}}\bm{\psi'}. \end{equation} $\Phi$ satisfies almost the same equation as $\bm{\psi'}$ except for two additional terms:
\[ i\gamma^{1}\partial_r(\ln \alpha(r,\theta)) \Phi + i\gamma^2 \partial_\theta(\ln \alpha(r,\theta)) \Phi   \]
\noindent Overall the equation becomes:
\begin{equation}\label{eq:dirac3} i\tilde{\gamma}^0\partial_t \Phi + i\tilde{\gamma}^1\partial_r \Phi + i\tilde{\gamma}^2\partial_\theta \Phi +i\tilde{\gamma}^3\partial_\varphi \Phi + iV_1\Phi = m\Phi, \end{equation}
\noindent with:
\begin{equation}
V_1 = \left(\begin{array}{cccc} 0 & 0 & i\bar{\tilde{F}} & i \bar{\tilde{G}} \\ 
0 &0 & i\bar{\tilde{G}} & -i \bar{\tilde{F}} \\ i\tilde{F} & i\tilde{G} & 0 & 0 \\ i\tilde{G} & -i\tilde{F} & 0 &0 \end{array} \right),
\end{equation}
\begin{equation}
\begin{aligned}
\tilde{F}&= \sqrt{2}F + i \frac{\sqrt{\Delta_\theta}}{\rho}f_\theta + \frac{\sqrt{\Delta_r}}{\rho}\partial_r \ln \alpha(r,\theta), \\
\tilde{G}&= \sqrt{2}G - i \frac{\sqrt{\Delta_r}}{\rho}f_r + \frac{\sqrt{\Delta_\theta}}{\rho}\partial_\theta \ln \alpha(r,\theta).
\end{aligned}
\end{equation}
%
\noindent Rewriting \eqref{eq:dirac3} as an evolution equation, and introducing $\mathfrak{D}_{S^2}$, the Dirac operator on the 2-sphere, we obtain the following form of the Dirac equation:
\begin{multline}  i\partial_t \Phi + i\frac{\Delta_r\sqrt{\Delta_\theta}}{\Xi\sigma}\Gamma^1\partial_r\Phi - \frac{\sqrt{\Delta_r}\Delta_\theta}{\Xi\sigma}\mathfrak{D}_{S^2}\Phi +\frac{ia q^2\rho^2}{\sigma^2}\partial_\varphi\Phi \\ +\frac{i\sqrt{\Delta_r\Delta_\theta}}{\sigma\sin\theta}\left(\frac{\rho^2}{\sigma}-\frac{\sqrt{\Delta_\theta}}{\Xi} \right)\Gamma^3 \partial_\varphi\Phi +\frac{i\sqrt{\Delta_r\Delta_\theta}\rho}{\sigma\Xi} \tilde{V_1}\Phi = \frac{\sqrt{\Delta_r\Delta_\theta}}{\Xi\sigma}\rho m\Gamma^0\Phi \end{multline}
\noindent This leads to the operator $H$ given in the main text.

\section{Proofs of the technical results of Section~\ref{analytic_framework}}
\subsection{Proof of Lemma~\ref{lemme:asymptotic_behaviour_symbol}}
\label{app:asymp_behav}
\begin{proof}
Remark first that, from equations~\eqref{eq:asymp_cosmo} and \eqref{eq:asymp_double}, since $r_e$ is a double root of the polynomial $\Delta_r$, we have:
\begin{equation}\label{eq:asympdeltar} \begin{gathered} \Delta_r=\underset{r^*\to-\infty}{O}\left(\frac{1}{{r^*}^2} \right), \quad
\Delta_r=\underset{r^*\to+\infty}{O}\left(e^{-2 \kappa r^*}\right), \\  \Delta'_r=\underset{r^*\to-\infty}{O}\left(\frac{1}{r^*}\right). \end{gathered} \end{equation}
Hence:
\begin{equation} \begin{aligned} &\alpha(r^*)=\underset{r^*\to-\infty}{O}\left(\frac{1}{{r^*}^2} \right), &&
\alpha(r^*)=\underset{r^*\to+\infty}{O}\left(e^{-2 \kappa r^*}\right), \\
& \partial_r \alpha(r^*)=\underset{r^*\to-\infty}{O}\left(\frac{1}{r^*}\right), &&
 \partial_r \alpha(r^*)=\underset{r^*\to+\infty}{O}(1).
 \end{aligned}
\end{equation}
For any $n\geq 2$, it is easy to see that $\partial^n_r \alpha(r^*) = O(1).$
Now, $\partial_{r^*}\alpha(r^*)=\alpha(r^*)\partial_r \alpha(r^*) $, so we have the correct behaviour at infinity after the first derivative. We claim that for $n\geq 1$:
\begin{equation}\label{eq:formederivalpha} \partial^n_{r^*}\alpha(r^*) = \sum_{k=1}^{n} f_k(r^*)(\partial_r\alpha(r^*))^{\beta_k}(\alpha(r^*))^k ,\end{equation}
where $\alpha_k \in \mathbb{N}$, $f_k \in \Pi$ and $ \beta_k + 2k \geq n+2$ for each $k \in \llbracket 1,n \rrbracket$.
This is obvious for $n=1$ and if such a relationship is true for some $n\geq 1$, after differentiation one has:
\begin{equation}\begin{aligned} \partial^{n+1}_{r^*}\alpha(r^*) = &\sum_{k=1}^{n} \partial_r f_k(r^*)(\partial_r\alpha(r^*))^{\beta_k}(\alpha(r^*))^{k+1} \\&+\beta_kf_k(r^*)\partial^2_r\alpha(r^*)(\partial_r\alpha(r^*))^{\alpha_k-1}(\alpha(r^*))^{k+1}
\\ &+\sum_{k=1}^n f_k(r^*)(\partial_r\alpha(r^*))^{\beta_k +1}(\alpha(r^*))^k. \end{aligned}\end{equation}
Therefore, $\partial^{n+1}_{r^*}\alpha(r^*)$ satisfies~\eqref{eq:formederivalpha}, with:
\begin{equation*} \begin{gathered} \tilde{\beta}_{n+1}= \max(0,\beta_n-1), \\ \tilde{f}_{n+1}=\partial_r f_n(\partial_r\alpha)^{\beta_n-\tilde{\beta}_{n+1}} + \beta_nf_n\partial^2_r\alpha,\\ \tilde{f}_1 = f_1=1, \,\, \tilde{\beta}_1=\beta_1+1=n+1.
\end{gathered} \end{equation*}
For $k \in \llbracket 2, n \rrbracket$,
\begin{equation*} \begin{gathered}
\\ \tilde{\beta}_k=\min(\beta_{k}+1,\max(0,\beta_{k-1}-1)),
\\ \begin{aligned}\tilde{f}_k=\partial_rf_{k-1}(\partial_r\alpha)^{\beta_{k-1}-\tilde{\beta}_k}+\beta_{k-1}f_{k-1}\partial^2_r\alpha(\partial_r \alpha(r^*))^{\beta_{k-1}-1-\tilde{\beta}_k}\\+f_k(\partial_r\alpha)^{\beta_k+1-\tilde{\beta}_k}.\end{aligned}
\end{gathered} \end{equation*}
The $\tilde{f}_k$ clearly satisfy the required hypothesis; if $\tilde{\beta}_k \neq 0$, then, either $\tilde{\beta_k}=\beta_{k}+1$ or $\tilde{\beta}_k=\beta_{k-1}-1$. In the first case, then:
\[\tilde{\beta}_k +2k \geq n+4,\]
in the second case:
\[\tilde{\beta}_k + 2k \geq n+2 +2 -1 = n+3.\]
If $\tilde{\beta}_k=0$, then necessarily this implies $\beta_{k-1} \leq 1$. By hypothesis, $\beta_{k-1}$ satisfies: $\beta_{k-1} + 2k  \geq n+4 $,  so, $2k \geq n+3$, and the hypothesis is equally satisfied. Hence, the result follows by induction.
The asymptotics can now be read from~\eqref{eq:formederivalpha}, each term in the sum is $O(\alpha)=O(e^{-2\kappa r^*})$ at $r^* \to +\infty$ and every term in the sum is $O({r^*}^{-(n+2)})$ at $r^*\to -\infty$.
\end{proof}

\subsection{Fàa di Bruno formula}
\label{app:fdbformula}
Let $f,g\in C^{\infty}(\mathbb{R})$, then for any $n\geq 1$:
\begin{equation*}
\begin{gathered}
(f\circ g)^{(n)}= \!\!\!\!\!\!\!\!\!\!\!\!\sum_{(m_1,\dots,m_n)\in I_n} \frac{n!f^{(m_1+\dots m_n)}\circ g}{m_1! 1!^{m_1}m_2!2!^{m_2}\dots m_n! n!^{m_n}} \prod_{j=1}^n\left( g^{(j)}\right)^{m_j}, \\
I_n=\{(m_1,\dots,m_n)\in \mathbb{N}^n, \sum_{j=1}^n jm_j = n \}.\end{gathered} \end{equation*}

\subsection{Proofs of Lemmata~\ref{lemme-decomposition},~\ref{lemme:core_S} and~\ref{lemme:equiv_quadratic_form}}
\label{app:proof_lemme_fin_2}
\begin{proof}[Proof of Lemma~\ref{lemme-decomposition}]
It is clear that $A$ is densely defined. In order to show that $A$ is closed, denote by $P_k$ the orthogonal projection onto $X_k$ for each $k\in \mathbb{N}$ and suppose that $(x^m)_{m\in\mathbb{N}}$ is a sequence of points of $D(A)$ such that $x^m \to x$ and $Ax^m \to y$ in $X$. Then for any $k\in \mathbb{N}$, $P_k x^m \to P_k x$ and $P_kAx^m=A_kP_kx^m \to P_k y$ by definition, but since $A_k$ is closed, it follows that $P_kx\in D(A_k)$ and $P_ky=A_kP_kx$. Thus~: \[\sum_k ||A_kP_kx||^2 = \sum_k ||P_ky||^2 <+\infty,\] so $x\in D(A)$ and $Ax=y$.

To prove that $A$ is self-adjoint we show that $A+z$ has dense range for any $z\in \mathbb{C}\setminus \mathbb{R}$. Let $y\in X$ be such that $(Ax+zx,y)=0$ for any $x\in D(A)$. In particular, for each $k\in \mathbb{N}$, and every $x\in D(A_k)$, $(A_k x+zx,P_ky)=0$, but then, since $A_k$ is self-adjoint, $P_ky=0$ for any $k\in \mathbb{N}$, i.e. $y=0$.
\end{proof}

\medskip

\begin{proof}[Proof of Lemma~\ref{lemme:core_S}]
For any $k,n$, $H_0^{k,n}$ is self-adjoint on $D(H_0^{k,n})=b_{k,n}^{-1}([H^1(\mathbb{R})]^4)$ ($b_{k,n}$ is defined by equation~\eqref{def_func_b_kn})and $[\mathscr{S}(\mathbb{R})]^4$ is dense in $[H^1(\mathbb{R})]^4$. Denote by $P_{k,n}$ the orthogonal projection onto $\mathscr{H}_{k,n}$. Let $u\in D(H_0)$ and $\varepsilon \in \mathbb{R}_+^*$. For each $(k,n)$, one can find $\phi_{k,n}\in [\mathscr{S}(\mathbb{R})]^4$ such that: \[||b_{k,n}P_{k,n}\psi-\phi_{k,n}||_{[H^1(\mathbb{R})]^4}\leq \varepsilon \frac{2^{-\frac{k+n+2}{2}}}{C_{k}},\] where $C_k=\lambda_k||g||_{\infty}+||f||_{\infty}+1$, it follows that:
\begin{equation}
\begin{aligned} &\sum_{k,n} ||P_{k,n}\psi - b^{-1}_{k,n}(\phi_{k,n})||^2 \leq \varepsilon^2, \\&
\sum_{k,n} ||H_0(P_{k,n}\psi - b^{-1}_{k,n}(\phi_{k,n}))||^2 \leq \varepsilon^2.
\end{aligned}
\end{equation}
Therefore, $\displaystyle \sum_{k,n} P_{k,n}\psi - b^{-1}_{k,n}(\phi_{k,n})$ converges to some $y\in D(H_0)$.
 Set $\phi=\psi-y$, then $||\phi-\psi|| + ||H_0(\phi-\psi)|| \leq 2\varepsilon$, and for every $k,n$:
 \[P_{k,n}\phi=P_{k,n}\psi - P_{k,n}y=b_{k,n}^{-1}(\phi_{k,n}),\]
 i.e. $\phi \in \mathscr{S}$. $\varepsilon$ being arbitrary this concludes the proof.
\end{proof}

\medskip

\begin{proof}[Proof of Lemma~\ref{lemme:equiv_quadratic_form}]
On $\mathscr{S}$, the following equation makes sense:
\[H_0^2=D_{r^*}^2 + g(r^*)^2\mathfrak{D}^2 + f(r^*)^2 + \frac{\Gamma^1}{i} g'(r^*)\mathfrak{D}+2g(r^*)\mathfrak{D}f(r^*) + \{f(r^*),\Gamma^1D_{r^*}\}.
 \]
Furthermore, for any $u \in \mathscr{S}$:
\begin{equation}
\begin{aligned} |(\{f(r^*),\Gamma^1D_{r^*}\}u,u) |&\leq |(\Gamma^1D_{r^*}u,f(r^*)u)| +|(f(r^*)u,\Gamma^1 D_{r^*}u)|,\\ &\leq 2 ||\Gamma^1D_{r^*}u|| ||f(r^*)u||,
\\& \leq 2||f||_{\infty}||\Gamma^1D_{r^*}u||||u||, \\ &\leq \frac{1}{2} ||\Gamma^1D_{r^*}u||^2 + 2||f||_{\infty}^2||u||^2. \end{aligned}
\end{equation}
It follows that:
\[ \frac{1}{2} D_{r^*}^2 +2||f||^2_\infty \geq \{f(r^*),\Gamma^1D_{r^*}\} \geq - \frac{1}{2}D_{r^*}^2 - 2||f||_{\infty}^2.\]
Exploiting the fact that $|g'(r^*)| \leq C|g(r^*)|$ for some $C>0$, one has:
\begin{equation}
\begin{aligned} |(\frac{\Gamma^1}{i}g'(r^*)\mathfrak{D}u,u)| &\leq ||g'(r^*)\mathfrak{D}u|| ||u||,
\\ &\leq \frac{1}{4C^2} ||g'(r^*)\mathfrak{D}u||^2 + C^2||u||^2,
\\ & \leq \frac{1}{4} ||g(r^*)\mathfrak{D}u||^2 + C^2||u||^2. \label{eq:justifier}
 \end{aligned}
 \end{equation}
We thus conclude that:
\[ \frac{1}{4}g^2(r^*)\mathfrak{D}^2 +C^2 \geq \frac{\Gamma^1}{i}g'(r^*)\mathfrak{D} \geq - \frac{1}{4}g(r^*)^2 \mathfrak{D}^2 - C^2.\]
In~\eqref{eq:justifier}, we have used the fact that:
\[g'^2(r^*)\mathfrak{D}^2 \leq C^2g^2(r^*)\mathfrak{D}^2.\]
This follows from the functional calculus, since, if $Z$ is an even function in the second variable:
\[\begin{gathered} (Z(r^*,\mathfrak{D})u,u) = \sum_{k,n} \int Z(r^*,\lambda_{k,n}) ||u_{k,n}||^2_{\mathbb{C}^4} \textrm{d}r^*,\\ u= \sum_{k,n} b_{k,n}^{-1}u_{k,n}, u_{k,n}\in [L^2(\mathbb{R})]^4,\end{gathered}\]
and so inequalities valid for $Z$ pass to the operators, here: \[Z(x,y)= g'(x)^2y^2,\] which clearly satisfies:
$Z(x,y)\leq C^2g(x)^2y^2$.
Finally: 
\begin{align*}
|(2g(r^*)\mathfrak{D}f(r^*)u,u)|=& 2|(g(r^*)\mathfrak{D}u,f(r^*)u)|, \\\leq& 2||f||_{\infty}||g(r^*)\mathfrak{D}u||||u||, \\ 
\leq& \frac{1}{4} ||g(r^*)\mathfrak{D}u||^2 + 4 ||f||^2_{\infty}||u||^2.
 \end{align*}
Thus: 
\[ \frac{1}{4}g(r^*)^2\mathfrak{D}^2 +4||f||_{\infty}^2 \geq 2g(r^*)f(r^*)\mathfrak{D} \geq -\frac{1}{4}g(r^*)^2\mathfrak{D}^2 -4||f||_\infty^2,\]
and therefore: 
\[H_0^2 \geq \frac{1}{2}(D^2_{r^*} + g(r^*)^2\mathfrak{D}^2) - C',\]
where $C'= 7||f||_\infty^2 +C^2 >0$.
Overall : \[ \frac{1}{2}Q -C' \leq H_0^2 \leq 2Q+C',\]
which concludes the proof.
\end{proof}

\section{Proof of the technical conditions of Mourre theory}
\label{app:proof_mourre}
\subsection{Preliminary discussion}
In this appendix, the reader will find proofs of Propositions~\ref{prop:technical_mourre1} and~\ref{prop:technical_mourre2}.
We begin by remarking that the condition $H\in C^1(A)$ is quite difficult to check directly, despite the following characterisation: \begin{thm}[{\cite[Theorem 6.2.10]{Amrein:1996aa}}] \label{thm:6210} $H\in C^1(A)$ if and only if the following two conditions are satisfied:
\begin{itemize}
\item there is $c\in \mathbb{R}_+$ such that for all $u\in D(A)\cap D(H)$:
\begin{equation} \label{eq:cond1_class_c1}|(Au,Hu)-(Hu,Au)|\leq c(||Hu||^2+||u||^2),\end{equation}
\item for some $z\in\rho(H)$ the set: \[\{ u \in D(A), (H-z)^{-1}u \in D(A) \textrm{ and } (H-\bar{z})^{-1}u \in D(A) \},\] is a core for $A$. 
\end{itemize}
\end{thm}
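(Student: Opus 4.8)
\emph{Proof strategy.} The plan is to route everything through the resolvent $R(z)=(H-z)^{-1}$ and the standard description of the class $C^1(A)$ for \emph{bounded} operators. First I would recall that a bounded operator $S$ satisfies $S\in C^1(A)$ (i.e. $s\mapsto e^{isA}Se^{-isA}u$ is $C^1$ for every $u$) if and only if the sesquilinear form $[A,S]^\flat(u,v):=(Au,Sv)-(S^*u,Av)$ on $D(A)\times D(A)$ extends to a bounded form on $\mathscr{H}$; in that case $S$ maps $D(A)$ into $D(A)$ and the extension represents the operator $SA-AS$. This is elementary: for $u,v\in D(A)$ one differentiates $t\mapsto (e^{itA}Se^{-itA}u,v)$, checks that the derivative is a uniformly bounded $[A,S]^\flat$-type expression, and then passes to arbitrary vectors by density together with uniform boundedness; the inclusion $SD(A)\subseteq D(A)$ follows because $u\mapsto(Au,Sv)$ is then a bounded antilinear functional. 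Since the first resolvent identity shows that membership of $R(z)$ in $C^1(A)$ is independent of $z\in\rho(H)$, the definition of $C^1(A)$ reduces the theorem to: \emph{conditions (a) and (b) hold if and only if $R(z)D(A)\subseteq D(A)$ and the form $[A,R(z)]^\flat$ is bounded.}

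For the implication $(\mathrm{a})\wedge(\mathrm{b})\Rightarrow H\in C^1(A)$: polarising $(\mathrm{a})$ shows that the form $[A,H]^\flat(u,v):=(Au,Hv)-(Hu,Av)$ on $D(A)\cap D(H)$ is continuous for the graph norm of $H$, hence extends to a bounded operator $D(H)\to D(H)^*$. I would then take $\psi,\phi$ in the core $\mathcal{C}=\{u\in D(A):R(z)u\in D(A),\ R(\bar z)u\in D(A)\}$ of $(\mathrm{b})$, set $w=R(\bar z)\psi$ and $v=R(z)\phi$ (both in $D(A)\cap D(H)$), and verify the identity
\[ [A,R(z)]^\flat(\psi,\phi) \;=\; -\,[A,H]^\flat(w,v), \]
which is the rigorous version of the formal relation $[A,R(z)]=-R(z)[A,H]R(z)$; the point is to carry out this computation only through the duality $D(H)\leftrightarrow D(H)^*$, never applying $A$ to a vector not already known to lie in $D(A)$. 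The right-hand side is bounded by $c'\bigl(\|Hw\|+\|w\|\bigr)\bigl(\|Hv\|+\|v\|\bigr)\le c''\|\psi\|\,\|\phi\|$ since $(H-\bar z)w=\psi$ and $(H-z)v=\phi$. Thus $[A,R(z)]^\flat$ is bounded on $\mathcal{C}$; because $\mathcal{C}$ is a core for $A$, a two-variable approximation argument extends the bound to all of $D(A)$, and the bounded-operator criterion gives $R(z)\in C^1(A)$, i.e. $H\in C^1(A)$.

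For the converse $H\in C^1(A)\Rightarrow(\mathrm{a})\wedge(\mathrm{b})$ I would essentially reverse this computation. From $H\in C^1(A)$ we get $R(z)D(A)\subseteq D(A)$, and likewise $R(\bar z)D(A)\subseteq D(A)$ by the same $z$-independence, so the set in $(\mathrm{b})$ is all of $D(A)$ and $(\mathrm{b})$ is automatic; moreover $[A,R(z)]^\flat$ is bounded. For $(\mathrm{a})$, I would use the bounded self-adjoint regularisations $A_\varepsilon=A(1+\varepsilon A^2)^{-1}$, which commute with $A$, converge strongly to $A$ on $D(A)$ and satisfy $\|A_\varepsilon u\|\le\|Au\|$; writing $u=R(z)(H-z)u$ for $u\in D(A)\cap D(H)$ and expanding $(A_\varepsilon u,Hu)-(Hu,A_\varepsilon u)$ in terms of $[A_\varepsilon,R(z)]$ and $\|[A,R(z)]^\flat\|$ produces a bound $c(\|Hu\|^2+\|u\|^2)$ uniform in $\varepsilon$; letting $\varepsilon\to0$ yields $(\mathrm{a})$.

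The hard part is making the product $R(z)[A,H]R(z)$ legitimate: neither factor preserves the relevant domains and $[A,H]$ exists only as a form, so the two hypotheses of the theorem are exactly what is engineered to repair this — $(\mathrm{a})$ realises $[A,H]$ as a bounded map $D(H)\to D(H)^*$ so that the composition makes sense, while $(\mathrm{b})$ provides a core's worth of vectors on which $R(z)$ returns to $D(A)$, allowing the form $[A,R(z)]$ to be evaluated honestly rather than formally. The delicate bookkeeping throughout is to stay at the level of the form duality and never write $Aw$ or $Av$ as an element of $D(H)$.
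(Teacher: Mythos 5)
The paper does not prove this statement at all: it is quoted verbatim as \cite[Theorem 6.2.10]{Amrein:1996aa}, so there is no internal proof to compare against. Your argument is correct and is essentially the standard proof from that reference — reduce to the bounded-operator form criterion for $R(z)=(H-z)^{-1}$, verify the identity $[A,R(z)]^\flat(\psi,\phi)=-[A,H]^\flat\bigl(R(\bar z)\psi,R(z)\phi\bigr)$ on the core supplied by condition (b), extend by graph-norm density, and recover (a) from $R(z)\in C^1(A)$ by regularisation; the only step you leave implicit, the uniform bound $\sup_\varepsilon\lVert[A_\varepsilon,R(z)]\rVert\leq C\lVert\overline{[A,R(z)]}\rVert$ (cleanest with $A_\varepsilon=A(1+i\varepsilon A)^{-1}$, of which your $A(1+\varepsilon A^2)^{-1}$ is the real part), is a standard lemma and not a genuine gap.
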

To overcome this, there is a useful scheme, based on Nelson's commutator theorem~\cite[Theorems X.36, X.37]{Reed:1975aa}, that greatly simplifies the proof that $H\in C^1(A)$ in many cases.
We first recall Nelson's theorem:
\begin{thm}[Nelson]
\label{thm:nelson}
Let $N$ be a self-adjoint operator with $N\geq 1$. Let $A$ be a symmetric operator with domain $D$ that is also a core for $N$. Suppose that:
\begin{itemize}
\item For some $c$ and all $\psi\in D$,
\begin{equation}
\label{eq:nelson_1}
||A\psi|| \leq c||N\psi||.
\end{equation}
\item For some $d$ and all $\psi \in D$:
\begin{equation}
|(A\psi,N\psi)-(N\psi,A\psi)|\leq d ||N^{\frac{1}{2}}\psi||^2.
\label{eq:nelson_2}
\end{equation}
Then $A$ is essentially self-adjoint on $D$ and its closure is essentially self-adjoint on any other core for $N$.
\end{itemize}
\end{thm}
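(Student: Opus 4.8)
The plan is to follow the classical route to the commutator theorem (cf.~\cite{Reed:1975aa}): first reduce to the case where the operator is defined on all of $D(N)$, and then kill the deficiency spaces by regularising with the resolvent of $N$.

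\emph{Reduction to $D=D(N)$.} Since $N\ge 1$, estimate~\eqref{eq:nelson_1} gives $\|A\psi\|\le c\|N\psi\|\le c(\|\psi\|+\|N\psi\|)$ for $\psi\in D$, so $A$ is bounded from $D$, equipped with the graph norm $\|\cdot\|_N:=\|\cdot\|+\|N\cdot\|$, into the ambient Hilbert space $\mathscr H$. As $D$ is a core for $N$ it is $\|\cdot\|_N$-dense in $D(N)$, so $A$ extends uniquely to a $\|\cdot\|_N$-continuous operator $\hat A\colon D(N)\to\mathscr H$. One checks that $\hat A$ is symmetric (the identity $(\hat Au,v)=(u,\hat Av)$ holds on $D$ and both sides are $\|\cdot\|_N$-continuous) and that $\overline{\hat A}=\overline A$ (a $\|\cdot\|_N$-convergent sequence in $D$ is graph-convergent for $A$, so $\hat A\subseteq\overline A$). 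Estimate~\eqref{eq:nelson_2} likewise extends to all of $D(N)$ by continuity. Hence it suffices to prove that $\hat A$ is essentially self-adjoint on $D(N)$; essential self-adjointness of $A$ on $D$, and of the canonical extension on any other core for $N$, then follows from the same density argument. From now on I write $A$ for $\hat A$ and take $D=D(N)$.

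\emph{Vanishing of the deficiency spaces.} Fix $\mu\in\mathbb R$ and suppose $\phi\in D(A^*)$ with $A^*\phi=i\mu\phi$. For $\varepsilon>0$ put $\phi_\varepsilon=(1+\varepsilon N)^{-1}\phi\in D(N)=D$, so $\phi=(1+\varepsilon N)\phi_\varepsilon$ and $\phi_\varepsilon\to\phi$ as $\varepsilon\to 0^+$. Then
\[ i\mu(\phi_\varepsilon,\phi)=(\phi_\varepsilon,A^*\phi)=(A\phi_\varepsilon,\phi)=(A\phi_\varepsilon,\phi_\varepsilon)+\varepsilon\,(A\phi_\varepsilon,N\phi_\varepsilon). \]
The term $(A\phi_\varepsilon,\phi_\varepsilon)$ is real because $A$ is symmetric and $\phi_\varepsilon\in D$; taking imaginary parts and using~\eqref{eq:nelson_2},
\[ |\mu|\,\bigl|\Re(\phi_\varepsilon,\phi)\bigr|=\varepsilon\,\bigl|\Im(A\phi_\varepsilon,N\phi_\varepsilon)\bigr|=\tfrac{\varepsilon}{2}\bigl|(A\phi_\varepsilon,N\phi_\varepsilon)-(N\phi_\varepsilon,A\phi_\varepsilon)\bigr|\le \tfrac{\varepsilon d}{2}\,\|N^{1/2}\phi_\varepsilon\|^2. \]
By the spectral theorem, $\|N^{1/2}\phi_\varepsilon\|^2=\bigl((1+\varepsilon N)^{-1}N(1+\varepsilon N)^{-1}\phi,\phi\bigr)\le(4\varepsilon)^{-1}\|\phi\|^2$, since $\sup_{\lambda\ge 1}\lambda(1+\varepsilon\lambda)^{-2}\le(4\varepsilon)^{-1}$. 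Hence $|\mu|\,|\Re(\phi_\varepsilon,\phi)|\le\tfrac d8\|\phi\|^2$, and letting $\varepsilon\to0^+$ (so $\Re(\phi_\varepsilon,\phi)\to\|\phi\|^2$) gives $|\mu|\,\|\phi\|^2\le\tfrac d8\|\phi\|^2$, so $\phi=0$ as soon as $|\mu|>d/8$. The same computation with $\mu$ replaced by $-\mu$ shows $\ker(A^*-i\mu)=\ker(A^*+i\mu)=\{0\}$ for every $\mu>d/8$; thus both deficiency indices of $A$ vanish and $A$ is essentially self-adjoint on $D(N)$, hence (undoing the relabelling) on the original $D$ and on any other core for $N$.

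\emph{Main obstacle.} The only point requiring genuine care is the domain bookkeeping in the reduction step: the inner product $(A\phi_\varepsilon,N\phi_\varepsilon)$ and the application of~\eqref{eq:nelson_2} presuppose that $\phi_\varepsilon$ lies in a space on which the hypotheses hold, which is exactly what the $\|\cdot\|_N$-continuity extension provides, and one must also verify that passing to $\hat A$ does not change the closure so that essential self-adjointness descends to the original core $D$. Everything else is the routine resolvent-mollifier estimate.
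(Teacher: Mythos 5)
Your proof is correct. Note that the paper does not prove this statement at all: it imports it verbatim from Reed--Simon (Theorems X.36--X.37), and your argument is essentially the standard proof of that theorem — extend $A$ by continuity in the graph norm of $N$ to $D(N)$, then annihilate the deficiency spaces by pairing an eigenvector of $A^*$ with the mollified vectors $(1+\varepsilon N)^{-1}\phi$, using the commutator bound together with $\sup_{\lambda\ge 1}\lambda(1+\varepsilon\lambda)^{-2}\le(4\varepsilon)^{-1}$. The only step you leave implicit is the passage from $\ker(A^*\mp i\mu)=\{0\}$ for $|\mu|>d/8$ to the vanishing of both deficiency indices; this is legitimate (apply the basic self-adjointness criterion to the symmetric operator $A/\mu$, or invoke the constancy of deficiency indices on each half-plane), so there is no gap.
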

\begin{rem}
Note that it follows that $D(N)\subset D(\bar{A})$ and $A$ is essentially self-adjoint on $D(N)$.
\end{rem}
The scheme is to find a third operator $N$ - that we will refer to as the \emph{comparison operator} - whose domain is a core for both $H$ and $A$; which we establish using Nelson's lemma. We then seek to apply the following:
\begin{thm}[{\cite[Lemma 3.2.2]{Gerard:2002aa}}]
\label{thm:gerard}
Let $(H,H_0,N)$ be a triplet of self-adjoint operators on $\mathscr{H}$, with $N\geq 1$, $A$ a symmetric operator on $D(N)$. Assume that:
\begin{enumerate} \item $D(H)=D(H_0)\supset D(N)$,
\item $D(N)$ is stable under the action of $(H-z)^{-1}$, 
\item $H_0$ and $A$ satisfy \eqref{eq:nelson_1} and~\eqref{eq:nelson_2},
\item for some $c>0$ and any $u \in D(N)$, \eqref{eq:cond1_class_c1} is satisfied.
\end{enumerate}
Then:
\begin{itemize}
\item $D(N)$ is dense in $D(A)\cap D(H)$ with norm $||Hu||+||Au||+||u||$,
\item the quadratic form $i[H,A]$ defined on $D(A)\cap D(H)$ is the unique extension of $i[H,A]$ on $D(N)$,
\item $H\in C^1(A)$.
\end{itemize}
\end{thm}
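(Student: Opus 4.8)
To prove Theorem~\ref{thm:gerard} I would establish the three conclusions in order, the density statement being the crux, and then deduce $H\in C^1(A)$ from the characterisation in Theorem~\ref{thm:6210}. First I would set up the cores: applying Nelson's commutator theorem (Theorem~\ref{thm:nelson}) to the pair $(N,A)$ — hypothesis (3) furnishing exactly the estimates~\eqref{eq:nelson_1} and~\eqref{eq:nelson_2} on the core $D(N)$ of $N$ — gives that $A$ is essentially self-adjoint on $D(N)$, that $D(N)\subset D(A)$, and that $D(N)$ is a core for $A$; with hypothesis (1) this yields $D(N)\subset D(A)\cap D(H)$. The companion statement for $H_0$ in (3), together with $D(H)=D(H_0)$ and the closed graph theorem, also shows that $H$ is $H_0$-bounded, hence $N$-bounded on $D(N)$, a fact I keep in reserve.

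The main work is Step~2: proving that $D(N)$ is dense in $\mathcal{G}:=D(A)\cap D(H)$ for the joint graph norm $\||u|\|^2=\|Hu\|^2+\|Au\|^2+\|u\|^2$. Given $u\in\mathcal{G}$, I would regularise using the resolvents $R_n:=(1+N/n)^{-1}$ of $N$ (bounded, self-adjoint, commuting with $N$, mapping $\mathscr{H}$ into $\bigcap_k D(N^k)$, and converging strongly to $\mathrm{Id}$), taking as approximants $u_n:=(H-z)^{-1}R_n(H-z)u$ for a fixed $z\in\rho(H)$. By hypothesis (2) each $u_n$ lies in $D(N)$; writing $u_n-u=(H-z)^{-1}(R_n-\mathrm{Id})(H-z)u$ gives $u_n\to u$ at once, and from $(H-z)u_n=R_n(H-z)u$ one gets $Hu_n=R_n(H-z)u+zu_n\to Hu$. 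It remains to show $Au_n\to Au$, and this is where hypotheses (2) and (3) must be made to cooperate: the form bound~\eqref{eq:nelson_2} for $A$ controls, in the form sense, the commutator $[A,R_n]=\frac{1}{n} R_n[N,A]R_n$, showing it is uniformly bounded in $n$ and tends strongly to $0$; combining this with the commutation of the resolvents of $H$ and of $N$ and with the relative bound from Step~1, one arranges that $Au_n\to Au$. Passing to the limit then gives $u_n\to u$ for $\||\cdot|\|$.

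Granted Step~2, Step~3 is bookkeeping. The sesquilinear form $(Au,Hu)-(Hu,Au)$ on $\mathcal{G}$ is continuous for $\||\cdot|\|$ by hypothesis (4) and Step~2, hence uniquely determined by its restriction to the dense subspace $D(N)$ — this is the asserted uniqueness of the extension of $i[H,A]$. For $H\in C^1(A)$ I would then verify the two conditions of Theorem~\ref{thm:6210}: condition~\eqref{eq:cond1_class_c1} on all of $D(A)\cap D(H)$ follows from hypothesis (4) by the same density and continuity; and for the core condition, hypothesis (2) together with $D(N)\subset D(A)$ shows that $D(N)$ is contained in the set $\{u\in D(A):(H-z)^{-1}u,\,(H-\bar z)^{-1}u\in D(A)\}$, and $D(N)$ is a core for $A$ by Step~1, so that set is a core for $A$; Theorem~\ref{thm:6210} then gives $H\in C^1(A)$.

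The hard part will be Step~2, the density of $D(N)$ in $D(A)\cap D(H)$ for the \emph{joint} graph norm: being a core for $A$ and, separately, for $H$ does not make $D(N)$ a core for the intersection. The obstruction is that the natural mollifier $R_n(N)$ regularises in the $A$-direction (via~\eqref{eq:nelson_2}) but not in the $H$-direction — $[H,R_n]$ need not even be uniformly bounded — so one is forced to regularise in the $H$-direction with the resolvent of $H$ (using hypothesis (2)) and then control the mismatch between the two regularisations; extracting genuine convergence, and not merely uniform bounds, from this interplay is the delicate point. Everything else reduces to Theorems~\ref{thm:nelson} and~\ref{thm:6210}.
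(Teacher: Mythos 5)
First, note that the paper does not prove this statement: it is quoted verbatim from Gérard--Łaba \cite[Lemma 3.2.2]{Gerard:2002aa}, so there is no in-paper argument to compare yours with; your proposal therefore has to stand on its own, and as written it does not. The gap is exactly where you locate the crux. Your approximants $u_n=(H-z)^{-1}R_n(H-z)u$, $R_n=(1+N/n)^{-1}$, make $u_n\to u$ and $Hu_n\to Hu$ trivial, but the convergence $Au_n\to Au$ is never actually established: your uniform bound on $[A,R_n]$ is of no use because the resolvent of $H$ sits \emph{between} $A$ and $R_n$, and to get past it you would need either the boundedness of $[A,(H-z)^{-1}]$ -- which is essentially the $C^1(A)$ property you are trying to prove, so the argument becomes circular -- or the ``commutation of the resolvents of $H$ and of $N$'' that you invoke, which is not among the hypotheses and is false in the situations the lemma is designed for (in this paper $N=Q+\langle r^*\rangle^2$ certainly does not commute with the resolvent of $H$). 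So the density step, and with it conditions (a) of Theorem~\ref{thm:6210} on all of $D(A)\cap D(H)$, is not proved.

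The missed ingredient is that hypothesis (3) gives the Nelson commutator bound \eqref{eq:nelson_2} for $H_0$ as well as for $A$, and you only ever use the $H_0$ part through \eqref{eq:nelson_1}. With it, the obstruction you describe in your last paragraph disappears and the naive mollifier suffices: take $u_\varepsilon=R_\varepsilon u$ with $R_\varepsilon=(1+\varepsilon N)^{-1}$, so $u_\varepsilon\in D(N)$ automatically (hypothesis (2) is not needed here; it is needed only for the core condition of Theorem~\ref{thm:6210}). Then $Au_\varepsilon\to Au$ exactly as you argue, and for the $H$-direction one does not need $[H,R_\varepsilon]$ at all: the form bound \eqref{eq:nelson_2} for $H_0$ makes $[H_0,R_\varepsilon]=-\varepsilon R_\varepsilon[H_0,N]R_\varepsilon$ uniformly bounded and strongly convergent to $0$ (sandwich it between $N^{1/2}R_\varepsilon$ factors), so $H_0u_\varepsilon\to H_0u$ for $u\in D(H_0)$; since $D(H)=D(H_0)$ forces $H-H_0$ to be $H_0$-bounded by the closed graph theorem, $Hu_\varepsilon\to Hu$ follows. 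This yields the density of $D(N)$ in $D(A)\cap D(H)$ for the joint graph norm, after which your Step~3 (extension of the estimate \eqref{eq:cond1_class_c1} by density, and the core condition via hypothesis (2), $D(N)\subset D(A)$ and Theorem~\ref{thm:nelson}) goes through as you wrote it.
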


\subsection{The comparison operator $N$}
Before identifying the comparison operator $N$, we begin with an important stability lemma:
\begin{lemme}
\label{lemme:stabilite} 
For any $n\in \mathbb{N}^*,z\in\rho(H_0)$, the domain of $\bra{r^*}^n$ is stable under the resolvent $(H_0-z)^{-1}$ and $\chi(H_0)$ for any $\chi\in C^{\infty}_0(\mathbb{R})$. The statement remains true if $H_0$ is replaced with $H$.
\end{lemme}
The proof is identical to that of~\cite[Proposition IV.3.2]{thdaude} and will not be repeated here. This lemma is very important for scattering purposes since it is an indication of how decay rates behave under the action of $H$, but it also serves to justify the use of the following comparison operator\footnote{That has an almost uncanny ressemblance to the harmonic oscillator...}:
\begin{equation} 
\label{eq:def_N} N=D^2_{r^*} + g(r^*)^2\mathfrak{D}^2 + \bra{r^*}^2=Q+\bra{r^*}^2.
\end{equation}
Decomposing $\mathscr{H}$ as in Section~\ref{self-adjointness}, Lemma~\ref{lemme-decomposition} and equation~\eqref{eq:domaine_d2_H} imply that:
\begin{equation} 
\label{eq:caract_domaine_N}D(N)=D(Q)\cap D(\bra{r^*}^2)=D(H_0^2)\cap D(\bra{r^*}^2).
\end{equation}
Finally~\eqref{eq:caract_domaine_N} and Lemmata~\ref{lemme:stabilite} and~\ref{lemme:domaine_H2}, together lead to:
\begin{equation}\begin{aligned}\forall z\in \rho(H_0), \quad (H_0-z)^{-1}D(N) &\subset D(N),
\\ \forall z \in \rho(H), \quad (H-z)^{-1}D(N) &\subset D(N).\end{aligned}\end{equation}
Thus, the first two conditions of Theorem~\ref{thm:gerard} are satisfied by the triplet $(H,H_0,N)$. 
\subsection{Nelson's lemma}
We will now check that $H_0$ and $A_{\pm}(S)$ satisfy the hypotheses of Theorem~\ref{thm:nelson}. To simplify notations, we will omit to specify the dependence on the parameter $S$ of the operator $A_{\pm}$ in this paragraph, as all the results discussed here hold for any $S\geq 1$.
As a first step, we deduce immediately the following useful estimates from~\eqref{eq:def_N}:
\begin{lemme}\label{lemme:ineq_N}
For any $u\in D(N):$
\begin{equation}
\begin{aligned}
||\Gamma^1D_{r^*}u||&\leq ||N^{\frac{1}{2}}u||, & ||g(r^*)\mathfrak{D}u||&\leq ||N^{\frac{1}{2}}u||,
\\ ||r^*u||&\leq ||N^{\frac{1}{2}}u||, & ||u||&\leq ||N^{\frac{1}{2}}u||.
\end{aligned}
\end{equation}
\end{lemme}

\begin{lemme}
With $N$ as comparison operator, $H_0$ satisfies Equations~\eqref{eq:nelson_1} and \eqref{eq:nelson_2}.
\end{lemme}
\begin{proof}
Fix $u\in D(N)$, from Lemma~\ref{lemme:ineq_N}, we have: 
\begin{equation} \begin{aligned}\label{eq:D_nelson_1} ||H_0u|| &\leq ||\Gamma^1D_{r^*}u|| + ||g(r^*)\mathfrak{D}u|| + ||f(r^*)u||,\\&\leq (2+||f||_{\infty})||N^{\frac{1}{2}}u||, \\&\leq(2+||f||_{\infty})||Nu||, \end{aligned} \end{equation}
this proves~\eqref{eq:nelson_1}. Moreover:
\begin{equation}
\begin{aligned} 
|([N,H_0]u,u)| \leq&  2|(\Gamma^1r^*u,u)|+2|(\Gamma^1g'(r^*)g(r^*){\mathfrak{D}}^2u,u)|\\&\hspace{.1in} +2||f'||_{\infty}||D_{r^*}u||||u||
+2||D_{r^*}u||||g'(r^*)\mathfrak{D}u||,
\\ \label{eq:intermediate_step}\leq& 2\big( ||r^*u||||u|| +C||g(r^*)\mathfrak{D}u||^2 \\ &\hspace{.1in}+ ||f'||_{\infty}||D_{r^*}u||||u||  + C||D_{r^*}u||||g(r^*)\mathfrak{D}u|| \big),
\\ \leq& 2(1 +||f'||_{\infty} + 2C)||N^{\frac{1}{2}}u||^2.
\end{aligned}
\end{equation}
In~\eqref{eq:intermediate_step}, we have used the fact that there is $C\in\mathbb{R}^*_+$ such that: \[|g'(r^*)|\leq C|g(r^*)|,\] and the functional calculus as in the proof of Lemma~\ref{lemme:equiv_quadratic_form}.
\end{proof}

In order to establish analogous estimates for $A_-$, we will also need the following estimates: 
\begin{lemme}
\label{lemme:r_estimate}
For any $u\in D(N)$,
\begin{equation}
\begin{aligned}
||{r^*}^2u||^2 &\leq ||Nu||^2 + ||u||^2,\\
||Qu||^2 & \leq ||Nu||^2 + ||u||^2.
\end{aligned}
\end{equation}
\end{lemme}
\begin{proof}
As usual, we will prove it for $u\in \mathscr{S}$. One has:
\begin{equation}
\begin{aligned}
||Nu||^2=&(N^2u,u) \\=&||Qu||^2 + ||{r^*}^2u||^2+||u||^2 +(Qu,{r^*}^2u) \\&+ ({r^*}^2u,Qu) + 2(Qu,u)+2||r^*u||^2.
\end{aligned}
\end{equation}
Since, for any $v\in D(Q)$, $(Qv,v) = ||\Gamma^1D_{r^*}^2v||^2 + ||g(r^*)\mathfrak{D}v||^2 \geq 0$, it follows that:
\begin{align}
||Nu||^2 &\geq ||Qu||^2 + ||{r^*}^2u||^2 + ||u||^2 + (Qu,{r^*}^2u) + ({r^*}^2u,Qu).
\end{align}
Now,
\begin{equation} (Qu,{r^*}^2u)=(r^*Qu,r^*u)= (Qr^*u,r^*u) + (2iD_{r^*}u,r^*u),\end{equation}
and so, adding the hermitian conjugate $({r^*}^2u,Qu)$, one obtains:
\begin{equation*}
\begin{aligned}(Qu,{r^*}^2u) + ({r^*}^2u,Qu)&=2(Qr^*u,r^*u) +(2ir^*D_{r^*}u,u) -(2iD_{r^*}r^*u,u)  
\\ &= 2(Qr^*u,r^*u) - 2||u||^2 \geq -2||u||^2.
\end{aligned}
\end{equation*} 
Hence, 
\begin{align}
||Nu||^2 \geq ||Qu||^2 + ||{r^*}^2u||^2 - ||u||^2.
\end{align}
\end{proof}

\begin{lemme}
\label{lemme:second_derivative_estimate}
There is a constant $d>0$ such that for any $u \in D(Q)=D(H_0^2)$,
\begin{equation} 
||D_{r^*}^2u||^2\leq d( ||Qu||^2 + ||u||^2 ).
\end{equation}
\end{lemme}

\begin{proof}
As quadratic forms on $\mathscr{S}$:
\begin{equation}
\begin{aligned}
Q^2=&D^4_{r^*} + (g^2(r^*){\mathfrak{D}}^2)^2 + D^2_{r^*}g^2(r^*)\mathfrak{D}^2 + g^2(r^*)\mathfrak{D}^2 D^2_{r^*}
,\\=& D^4_{r^*} + (g^2(r^*){\mathfrak{D}}^2)^2 +2\mathfrak{D}g(r^*)D_{r^*}^2g(r^*)\mathfrak{D} \\&
+ [D_{r^*}^2,g(r^*)]g(r^*){\mathfrak{D}}^2 - g(r^*){\mathfrak{D}}^2[D_{r^*}^2,g(r^*)],
\\ \geq& D^4_{r^*} + (g^2(r^*){\mathfrak{D}}^2)^2 + [[D^2_{r^*},g],g]{\mathfrak{D}}^2,
\\ =& D^4_{r^*} + (g^2(r^*){\mathfrak{D}}^2)^2 -i[\{D_{r^*},g'\},g(r^*)]{\mathfrak{D}}^2,
\\=&D^4_{r^*} + (g^2(r^*){\mathfrak{D}}^2)^2  -2(g'(r^*))^2{\mathfrak{D}}^2,
\\ \geq& D^4_{r^*} +(g^2(r^*){\mathfrak{D}}^2)^2 -2C^2g(r^*)^2{\mathfrak{D}}^2,
\\ \geq& D^4_{r^*} +\frac{1}{2}(g^2(r^*){\mathfrak{D}}^2)^2 -2C^4,
\\  \geq& D^4_{r^*} -2C^4.
\end{aligned}
\end{equation}
where we have used the fact that $|g'(r^*)| \leq C |g(r^*)|$.
\end{proof}
\noindent Combining Lemmata~\ref{lemme:r_estimate}~and~\ref{lemme:second_derivative_estimate} yields:
\begin{corollaire}
${r^*}^2,D_{r^*}^2 \in B(D(N),\mathscr{H})$.
\end{corollaire}
We are now ready to prove:
\begin{lemme}
$A_-$ satisfies \eqref{eq:nelson_1} and \eqref{eq:nelson_2}.
\end{lemme}
\begin{proof}
Until now we have not discussed the domain of $A_-$ and will simply consider it as being defined for $u\in\mathscr{S}$, which is a core for $N$. Then, the following estimates hold:
\begin{equation*}
\begin{aligned}
||A_-u||^2=&(R_-(r^*)D_{r^*}u,R_-(r^*)D_{r^*}u) + \frac{1}{4}||R'_-(r^*)u||^2 \\&
-\frac{1}{2}\left( (R_-(r^*)D_{r^*}u, iR'_-(r^*)u) + (iR'_-(r^*)u,R_-(r^*)D_{r^*}u)\right),
\\\leq& (R_-(r^*)D_{r^*}u,R_-(r^*)D_{r^*}u) \\&+ ||R'_-(r^*)R_-(r^*)u||||D_{r^*}u|| +\frac{1}{4}||R'_-(r^*)u||^2.
\end{aligned}
\end{equation*}
Since $R'_-(r^*)$ is a bounded operator, using Lemma~\ref{lemme:ineq_N} one can see that:
\begin{equation*}
\begin{aligned}
 ||R'_-(r^*)R_-(r^*)u||||D_{r^*}u|| + \frac{1}{4}||R'_-(r^*)u||^2 &\leq ||R'_-||_{\infty}||N^\frac{1}{2}u||^2 + \frac{1}{4}||R'_-||^2_{\infty} ||u||^2
\\&\leq ||R'_-||_{\infty}(1+||R'_-||_\infty)||Nu||^2.
\end{aligned}
\end{equation*}
Moreover, by Lemmata~\ref{lemme:r_estimate}~and~\ref{lemme:second_derivative_estimate}:
\begin{align*} |(R_-(r^*)D_{r^*}u,R_-(r^*)D_{r^*}u)| &= |(R_-^2(r^*)u,D_{r^*}^2 u) + 2(iR'_-(r^*)R_-(r^*)u,D_{r^*}u)|
\\\nonumber &\leq \sqrt{6d}||Nu||^2+2||R'_-||_\infty||Nu||^2.
\end{align*}
Combining the above gives~\eqref{eq:nelson_1}. To prove~\eqref{eq:nelson_2} we start with the following estimates:
\begin{equation*}
\begin{aligned} 
|([N,A_-]u,u)| =&\bigg|(-\frac{i}{2}(R^{(3)}_-(r^*)u,u) -i(\{D_{r^*}^2,R_-'(r^*)\}u,u)\\&+2i({r^*}^2j_-^2(\frac{r^*}{S})u,u)+(2ig'(r^*)g(r^*)R_-(r^*)\mathfrak{D}^2u,u) \bigg|,
\\=&\bigg|-i(\{D_{r^*},R'_-(r^*)D_{r^*}\}u,u) -\frac{1}{2}(\{D_{r^*},R_-''(r^*)\}u,u)
\\&+2i({r^*}^2j_-^2(\frac{r^*}{S})u,u)+(2ig'(r^*)g(r^*)R_-(r^*)\mathfrak{D}^2u,u) \bigg|,
\\ \leq& 2||D_{r^*}u|| \left(||R'||_\infty||D_{r^*}u|| + \frac{1}{2}||R''||_\infty ||u|| \right) \\&+2||j_-(\frac{r^*}{S})r^*u||^2
 +2||g(r^*)\mathfrak{D}u|| || g'(r^*)R_-(r^*)\mathfrak{D}u||.
\end{aligned}
\end{equation*}
The only term that may pose problem is:
\begin{equation}|| R_-(r^*)g'(r^*)\mathfrak{D}u||. \end{equation}
However, 
\begin{equation} 
\label{problem_term1}
R_-(r^*)g'(r^*)=g(r^*)j_-^2(r^*)r^*\left( \frac{\frac{\Delta'_r}{2}}{\Xi(r^2+a^2)}-\frac{2r\Delta_r}{\Xi(r^2+a^2)^2} \right),
\end{equation}
and the term between brackets is $\underset{r^* \to - \infty}{O}(\frac{1}{r^*})$ because when $r^*\to -\infty$, $r$ approaches $r_e$, the double root of $\Delta_r$, hence, both $\Delta_r$ and $\Delta'_r$ are at least $\underset{r \to x}{O}(r-r_e)$ and $r-r_e= \underset{r^* \to - \infty}{O}(\frac{1}{r^*})$  \footnote{Note that in~\eqref{problem_term1} $\Delta'_r=\frac{\partial \Delta_r}{\partial r}$}.
In conclusion, there is $C\in \mathbb{R}^*, |R_-(r^*)g'(r^*)|\leq C|g(r^*)|$ and thus, by the functional calculus:
\begin{equation} 
\label{eq:estimate_rdrg}
|| R_-(r^*)g'(r^*)\mathfrak{D}u|| \leq C ||g(r^*)\mathfrak{D}u||.
\end{equation}
Overall, 
\begin{align} 
|([N,A_-]u,u)| \leq \left(||R''_-||_{\infty} +2\left(||R'_-||_{\infty} +C +1\right)\right)||N^{\frac{1}{2}}u||^2
\end{align}
\end{proof}

According to the above result, we can conclude that $A_-$ is essentially self-adjoint on $D(N)$; the analogous result for $A_+$ is proved in~\cite[Lemma IV.4.5]{thdaude}, the arguments are identical. Theorem~\ref{thm:nelson} also applies to $A=A_+\pm A_-$. In all cases, we will consider the operators and their domains as being defined by the conclusion of Theorem~\ref{thm:nelson}.

\subsection{Proof that $H_0,H\in C^1(A)$}
In order to prove that $H,H_0\in C^1(A)$, we require one more estimate that will be the object of this section. According to Theorem~\ref{thm:gerard} it is sufficient to prove that for some $c>0$ and any $u\in D(N)$ one has the estimate:
\begin{equation} \label{eq:H1_eq1} |(Hu,A_\pm u)-(A_\pm u,Hu)| \leq c(||Hu||^2 + ||u||^2). \end{equation}
As before, we will focus our attention on $A_-$ and refer to~\cite[Lemma IV.4.7]{thdaude} for $A_+$. In order to apply Mourre theory, we will additionally need to show that $i[H,A]$ extends to a bounded operator from $D(H)=D(H_0)$\footnote{This equality is to be understood to imply that the graph norms are equivalent.} to $\mathscr{H}$. Both of these are covered by the following estimates, established, first, on the common core $\mathscr{S}$; we begin with $H_0$.

Let $u\in\mathscr{S}$, then:
\begin{equation*}
\begin{aligned} ||i[H_0,A_-]u||=&\big\lVert\Gamma^1 R'_-(r^*)D_{r^*}u - \frac{i}{2}\Gamma^1R^{''}_-(r^*)u \\&- R_-(r^*)g'(r^*)\mathfrak{D}u-R_-(r^*)f'(r^*)u\big\rVert,
\\ \leq& ||R'_-||_{\infty}||D_{r^*}u|| +\frac{1}{2} ||R_-''||_{\infty}||u|| \\&+||R_-(r^*)g'(r^*)\mathfrak{D}u|| + ||R_-f'||_{\infty}||u||.
\end{aligned}
\end{equation*}
Using~\eqref{eq:estimate_rdrg} and Corollary~\ref{corollaire:DrDbounded}, we thus conclude that for some $c>0$ and any $u \in \mathscr{S}$:
\begin{equation}  ||i[H_0,A_-]u|| \leq c(||H_0u|| + ||u||).\end{equation} 
Hence, $i[H_0,A_-]$ extends uniquely to an element of $B(D(H_0),\mathscr{H}$) and~\eqref{eq:H1_eq1} holds. In order to establish the analogous result for $H$, we write:
\begin{equation*} 
[H,A_-]= h[H_0,A_-]h +i(hH_0R_-(r^*)h'+R_-(r^*)h'H_0h) + iR_-(r^*)V'.
\end{equation*}
Since $h,R'_-(r^*) \in B(D(H_0))$, $h[H_0,A_-]h$ and $R'_-(r^*)h'H_0h$ extend to elements of $B(D(H_0),\mathscr{H})$. For similar reasons to $h$, $R_-(r^*)h'\in B(D(H_0))$ also, and, using~\eqref{eq:daabruno}, $R_-(r^*)V'\in B(\mathscr{H})$. It follows then that $[H_0,A_-]$ extends to a bounded operator $D(H_0) \to \mathscr{H}$.

Assembling all the results above, we have thus shown that $H_0,H\in C^1(A)$ and that the first two assumptions of Theorem~\ref{thm:mourre} are satisfied. It remains to verify the final assumption regarding the double commutator.

\subsection{The double commutator assumption}
Theorem~\ref{thm:mourre} only requires that the double commutator extends to a bounded operator from $D(H)$ to $D(H)^*$, this section will be devoted to showing a slightly stronger result: 
\begin{lemme}
\label{lemme:double_commutators}
$[A,[A,H_0]]$ and $[A,[A,H]]$ extend to elements of $B(D(H),\mathscr{H})$.
\end{lemme} 
The consequence will be that $H$ and $H_0$ are in fact $C^2(A)$ (see~\cite[Chapter 5]{Amrein:1996aa}), proving the final point of Proposition~\ref{prop:technical_mourre2}. Beginning with $H_0$, it is sufficient to prove this for the four double commutators $[A_\pm,[A_\pm,H_0]]$ separately; we will mainly concentrate on $A_-$, but it will also be informative to consider the mixed terms $[A_\pm,[A_\mp,H_0]]$.

\paragraph{(a) $[[H_0,A_-],A_-]$}
A short calculation shows that:
\begin{equation}\begin{aligned} \label{eq:dc1}(-i)[i[H_0,A_-],A_-]=&(-i)\bigg( -\frac{1}{2}\Gamma^1R'_-(r^*)R''_-(r^*)-i(R'_-(r^*))^2\Gamma^1D_{r^*} \\&+ iR_-(r^*)R''_-(r^*)\Gamma^1D_{r^*}   -\frac{i}{2}\Gamma^1R_-(r^*)R'''(r^*) \\&-iR_-(r^*)\left(\underline{(R_-(r^*)g'(r^*))'\mathfrak{D}} + (R_-(r^*)f'(r^*))' \right)\bigg).\end{aligned}
\end{equation}
Many of the terms in~\eqref{eq:dc1} extend clearly to elements of $B(D(H),\mathscr{H})$, either because they are bounded on $\mathscr{H}$ or using Corollary~\ref{corollaire:DrDbounded}. The term that merits comment is underlined; it expands as follows:
\begin{equation} R_-(r^*)g''(r^*)\mathfrak{D} +R'_-(r^*)g'(r^*)\mathfrak{D}. \end{equation}
We have already shown how to deal with the second term, and the first is treated very similarly as it is easily seen that $|g''(r^*)|\leq C|g(r^*)|$ for some $C\in\mathbb{R_+^*}$.
\newline
\paragraph{(b) $[i[H_0,A_-],A_+]$}
This double commutator, as a quadratic form on $\mathscr{S}$, can be computed as:
\begin{align*}
(-i)[i[H_0,A_-],A_+]=&(-i)\bigg( [\Gamma^1R'_-(r^*)D_{r^*},A_+] \\&- 2 R_-(r^*)g'(r^*)R_+(r^*,\mathfrak{D})\Gamma^1\mathfrak{D}\bigg).
\end{align*}
The first term vanishes, since on $\mathscr{S}$ it can be evaluated as:
\[[\Gamma^1R'_-(r^*)D_{r^*},A_+]=-R_-'(r^*)R_+'(r^*,\mathfrak{D}),\]
and $j_+$ and $j_-$ have disjoint support (cf.~\eqref{eq:cutoff_def}).
The second term, which, on first glance, seems difficult to control, will equally vanish entirely due to our choice cut-off functions $j_+,j_-,j_1$. To see this, recall that: 
\[R_+(r^*,\mathfrak{D})=(r^*-\kappa^{-1}\ln |\mathfrak{D}| )j_+^2\left(\frac{r^*-\kappa^{-1}\ln |\mathfrak{D}|}{S}\right).\]
Hence, since $j_1$ satisfies $j_1(t) = 1, t\geq -1$, then:
\begin{equation} \label{eq:r+prop} R_+(r^*,\mathfrak{D})=j_1^2(\frac{r^*}{S})R_+(r^*,\mathfrak{D}). \end{equation}
It follows that: \[2 R_-(r^*)g'(r^*)R_+(r^*,\mathfrak{D})\Gamma^1\mathfrak{D} = 2R_-(r^*)j_1^2(\frac{r^*}{S})g'(r^*)R_+(r^*,\mathfrak{D})\Gamma^1\mathfrak{D},\] but, $j_-$ and $j_1$ are chosen such that $\supp j_- \cap \supp j_1 =\emptyset$, therefore this term vanishes.

\paragraph{(c) $[i[H_0,A_+],A_-]$}
Here, we start from\footnote{In this equation $R'(r^*,\mathfrak{D})$ denotes the operator obtained after differentiating with respect to $r^*$}: 
\[i[H_0,A_+]=R_+'(r^*,\mathfrak{D}) + 2ig(r^*)\mathfrak{D}R_+(r^*,\mathfrak{D})\Gamma^1,\]
this leads to:
\[[i[H_0,A_+],A_-]=R_+^{''}(r^*,\mathfrak{D})R_-(r^*) + 2i\left(g(r^*)R_+(r^*,\mathfrak{D})\right)'R_-(r^*)\mathfrak{D}\Gamma^1 .\]
Since~\eqref{eq:r+prop} is equally true if $R_+(r^*,\mathfrak{D})$ is replaced by its first or second derivative with respect to $r^*$, one can argue as before and find that this double commutator vanishes entirely. We refer to~\cite{thdaude} for the appropriate treatment of $[[H_0,A_+],A_+]$.

This concludes the proof that $(H_0,A)$ satisfies the first hypotheses of Mourre theory. To show that this is equally true of $(H,A)$, we proceed as before using~\eqref{eq:relH}. For example:
\begin{multline*} [[H,A_-],A_-]= h[[H_0,A_-],A_-]h+2ih[H_0,A_-]R_-(r^*)h'\\+ 2iR_-(r^*)h'[H_0,A_-]h 
-2h'R_-(r^*)H_0R_-(r^*)h' \\-hH_0R(r^*)(R_-(r^*)h')' -R_-(r^*)(R_-(r^*)h')'H_0h  - R_-(r^*)(R_-(r^*)V')'.
\end{multline*}
This extends to an element of $B(D(H),\mathscr{H})$, thanks to the decay of $h', V'$, etc. Similar computations show that this is equally true of the other double commutators. The reader may be concerned that a long-range potentiel may jeopardise our efforts in the mixed commutators, causing unbounded terms to appear. However, this is not the case since either commutation with $A_-$ introduces the necessary decay through differentiation or terms vanish entirely due to the choice that $j_1$ and $j_-$ have disjoint supports. For the first point, more precisely, if, for instance, $f\in \mathscr{S}(\mathbb{R})$, then :
\[[f(r^*),A_-]=iR_-(r^*)f'(r^*).\]
In all cases encountered, $f$, when expressed as a function of $r$, has bounded derivative and therefore, at least, $[f(r^*),A_-]=O(\frac{1}{r^*})$.

\section{Proof of the existence of the Dollard modified wave operators}\label{app:dollard}
In this appendix we shall prove the existence of~:
\begin{equation}
\label{eq:wavedollard1}
s-\lim_{t\to +\infty} e^{it\mathfrak{h}}U(t)e^{-it\mathfrak{h}_0}\bm{1}_{\{-1\}}(\Gamma^1).
\end{equation}
\begin{proof}[Proof of the existence of~\eqref{eq:wavedollard1}]
The asymptotic velocity operator is simply $\Gamma^1$ for $\mathfrak{h}_0$ which is the reason why we use it to split incoming and outgoing states for $\mathfrak{h}_0$. The first step is to replace the projection with an operator that is more convenient to work with. First of all, for any $J\in C^{\infty}_0(\mathbb{R})$ such that, $\supp J \subset (-\infty,0)$ and $J(-1)=1$, $J(\Gamma^1)=\bm{1}_{\{-1\}}(\Gamma^1)$.
Furthermore for each $t$, one has: 
\begin{equation}
\begin{aligned} e^{it\mathfrak{h}}U(t)J(\frac{r^*}{t})e^{-it\mathfrak{h}_0}&= e^{it\mathfrak{h}}U(t)e^{-it\mathfrak{h}_0}(e^{it\mathfrak{h}_0}J(\frac{r^*}{t})e^{-it\mathfrak{h}_0}-J(\Gamma^1))\\ &\quad + e^{it\mathfrak{h}}U(t)e^{-it\mathfrak{h}_0} J(\Gamma^1).\end{aligned} \end{equation}
Now, $e^{it\mathfrak{h}}U(t)e^{-it\mathfrak{h}_0}$ is uniformly bounded in $t$ so applying\footnote{although it is simpler for $\mathfrak{h}_0$} Corollary~\ref{lemme:asymptotic_velocity} to $\mathfrak{h}_0$, we find that the strong limit of the first term exists and is $0$, so, using another classical density argument we only need to prove the existence of:
\[s-\lim_{t\to+\infty} e^{it\mathfrak{h}}U(t)J(\frac{r^*}{t})e^{-it\mathfrak{h}_0}\chi(\mathfrak{h}_0),\]
for any $\chi \in C^{\infty}_0(\mathbb{R}), 0 \not\in \supp \chi$, this in particular implies that $\chi \equiv 0$ on a neighbourhood of $0$.

Once more, we use Cook's method and to that end we calculate the derivative; one finds: 
\[\begin{aligned} e^{it\mathfrak{h}}\left(iJ(\frac{r^*}{t})(-\mu)g(r^*)\Gamma^2 +\frac{1}{t}J'(\frac{r^*}{t})(\Gamma^1- \frac{r^*}{t})\right)\chi(\mathfrak{h}_0)U(t)e^{-it\mathfrak{h}_0}\\ +e^{it\mathfrak{h}}\left(iJ(\frac{r^*}{t})f(r^*) - iJ(\frac{r^*}{t})\tilde{f}(t\Gamma^1)\right)\chi(\mathfrak{h}_0)U(t)e^{-it\mathfrak{h}_0}.\end{aligned}\]
The term involving $J'$ can be treated by the second method explained in the proof of Proposition~\ref{prop:comparaison1exist}; we will not repeat the reasoning here.

Let us examine the first term:
\[T_1=e^{it\mathfrak{h}}(iJ(\frac{r^*}{t})(-\mu)g(r^*)\Gamma^2 e^{-it\mathfrak{h}_0}U(t)\chi(\mathfrak{h}_0),\] where we have used the fact that $\Gamma^1$ commutes with $\mathfrak{h}_0$, hence $U(t)$ commutes with $\chi(\mathfrak{h}_0)$ and $e^{-it\mathfrak{h}_0}$.
Since $\Gamma^2$ anti-commutes with $\Gamma^1$, $\Gamma^2U(t)=\tilde{U}(t)\Gamma^2$, where\footnote{The operators under consideration here are all bounded, the series defining $U(t)$ converges in norm and $\tilde{f}$ is continuous and bounded, so one only needs to check the anti-commutation property on polynomials.} $\tilde{U}(t)=T\exp(i\tilde{f}(-\Gamma^1t))$, so one can rewrite $T_1$ as follows:
\[T_1=e^{it\mathfrak{h}}iJ(\frac{r^*}{t})(-\mu)g(r^*)\tilde{U}(t)e^{-it\mathfrak{h}_0}e^{it\mathfrak{h}_0}\Gamma^2 e^{-it\mathfrak{h}_0}\chi(\mathfrak{h}_0).$$
Set $E(t)= \int_0^t e^{is\mathfrak{h}_0}\Gamma^2e^{-is\mathfrak{h}_0}\chi(\mathfrak{h}_0)\textrm{d}s$. $\Gamma^2$ anti-commutes with $\mathfrak{h}_0$, therefore:  $$E(t)=\Gamma^2 \int_0^t e^{-2is\mathfrak{h}_0}\chi(\mathfrak{h}_0)\textrm{d}s.\]
However, it follows from the bounded functional calculus that:
\[\left \lVert \int_0^t e^{-2is\mathfrak{h}_0}\chi(\mathfrak{h}_0)\textrm{d}s \right \rVert = \sup_{\lambda \in \mathbb{R}} \left|\int_0^t e^{-2is\lambda}\chi(\lambda)\textrm{d}s \right|.\]
Since $\chi\equiv 0$ on a neighbourhood of $0$, this is finite and bounded independently of $t$, so $E(t)$ is a uniformly bounded function of $t$.
Now, for any $t_1,t_2 \geq 1$,
\begin{multline}\int_{t_1}^{t_2}T_1(t)\textrm{dt}= \left[e^{it\mathfrak{h}}iJ(\frac{r^*}{t})(-\mu)g(r^*)\tilde{U}(t)e^{-it\mathfrak{h}_0}E(t) \right]_{t_1}^{t_2} \\ - \int_{t_1}^{t_2}\partial_t\left(e^{it\mathfrak{h}}(iJ(\frac{r^*}{t})(-\mu)g(r^*)\tilde{U}(t)e^{-it\mathfrak{h}_0}\right)E(t)\textrm{d}t \end{multline}
Since $J$ vanishes on a neighbourhood of $0$, and $E(t)$ is uniformly bounded, the term in the squared brackets vanishes as $t_1,t_2 \to +\infty$:
\begin{equation}
\begin{aligned}
\left\lVert e^{it\mathfrak{h}}iJ(\frac{r^*}{t})(-\mu)g(r^*)\tilde{U}(t)e^{-it\mathfrak{h}_0}E(t) \right\rVert &= O\left(|g(r^*)|J(\frac{r^*}{t})\right)\\&=O\left(\frac{1}{t}\right).
\end{aligned}
\end{equation}
Additionally, due to the further derivative, the integrand in the second term is $O(t^{-2})$ and hence integrable. 
It remains to treat the final terms:
\[T_2=e^{it\mathfrak{h}}\left(iJ(\frac{r^*}{t})f(r^*) - iJ(\frac{r^*}{t})\tilde{f}(t\Gamma^1)\right)\chi(\mathfrak{h}_0)U(t)e^{-it\mathfrak{h}_0}.\]
Notice first that, $\supp J \subset (0,-\infty)$, so $J(\frac{r^*}{t})=J(\frac{r^*}{t})j(r^*)$ and:
\[T_2=e^{it\mathfrak{h}}iJ(\frac{r^*}{t})\left(\tilde{f}(r^*) - \tilde{f}(t\Gamma^1)\right)\chi(\mathfrak{h}_0)U(t)e^{-it\mathfrak{h}_0}.\]
It follows from~\eqref{eq:daabruno} and the subsequent remarks that $\tilde{f} \in \bm{S}^{1,1}$, and one can use the Helffer-Sjöstrand formula to obtain an expression for $(\tilde{f}(r^*) - \tilde{f}(t\Gamma^1))J(\frac{r^*}{t})$ as in the proof of Lemma~\ref{lemme:asymptotic_velocity}:
\[(\tilde{f}(r^*) - \tilde{f}(t\Gamma^1))J(\frac{r^*}{t})=B(t)(\Gamma^1-\frac{r^*}{t})J(\frac{r^*}{t})\] where $B$ is a uniformly bounded operator in $t$. The desired integrability result is hence a consequence of the microlocal velocity estimate~\eqref{eq:ml_ve}; the existence of~\eqref{eq:wavedollard1} follows.
\end{proof}

\bibliography{biblio}

\def\bysame{\leavevmode ---------\thinspace}
\makeatletter\if@francais\providecommand{\og}{<<~}\providecommand{\fg}{~>>}
\else\gdef\og{``}\gdef\fg{''}\fi\makeatother
\def\cdrandname{\&}
\providecommand\cdrnumero{no.~}
\providecommand{\cdredsname}{eds.}
\providecommand{\cdredname}{ed.}
\providecommand{\cdrchapname}{chap.}
\providecommand{\cdrmastersthesisname}{Memoir}
\providecommand{\cdrphdthesisname}{PhD Thesis}
\begin{thebibliography}{10}

\bibitem{Abrikosov:2002aa}
{\scshape A.~A. Abrikosov, Jr.}, {\og {Dirac operator on the Riemann
  Sphere}\fg}, arXiv:hep-th/021234, 2002,
  \url{https://arxiv.org/abs/hep-th/0212134}.

\bibitem{Amrein:1996aa}
{\scshape W.~O. Amrein, A.~B. de~Monvel {\normalfont
  \cdrandname}~V.~Georgescu}, \emph{{$C^0$-Groups, Commutator Methods and
  Spectral Theory of N-Body Hamiltonians}}, Modern Birkh{\"a}user Classics,
  vol. XIV, Birkh{\"a}user Basel, 1996.

\bibitem{Aretakis:2011aa}
{\scshape S.~Aretakis}, {\og {Stability and Instability of Extreme
  Reissner-Nordstr\"om Black Hole Spacetimes for Linear Scalar Perturbations
  I}\fg}, \emph{Communications in Mathematical Physics} \textbf{307} (2011),
  \cdrnumero 1, p.~17.

\bibitem{Aretakis:2011hc}
\bysame , {\og {Stability and Instability of Extreme Reissner-Nordstr\"om Black
  Hole Spacetimes for Linear Scalar Perturbations II}\fg}, \emph{Annales Henri
  Poincare} \textbf{12} (2011), p.~1491-1538,
  \url{https://arxiv.org/abs/1110.2009}.

\bibitem{Batic:2007jb}
{\scshape D.~Batic {\normalfont \cdrandname}~H.~Schmid}, {\og {The Dirac
  propagator in the extreme Kerr metric}\fg}, \emph{J. Phys. A} \textbf{40}
  (2007), p.~13443-13452, \url{https://arxiv.org/abs/gr-qc/0703023}.

\bibitem{Belgiorno:2009aa}
{\scshape F.~Belgiorno {\normalfont \cdrandname}~S.~L. Cacciatori}, {\og
  {Absence} of normalizable time-periodic solutions for the {Dirac} equation in
  {Kerr}-{Newman}-{DS} black hole background\fg}, \emph{Journal of Physics A:
  Mathematical and Theoretical} \textbf{42} (2009), \cdrnumero 13, p.~135207,
  \url{https://arxiv.org/abs/arXiv:0807.4310}.

\bibitem{Belgiorno:2010aa}
\bysame , {\og {The Dirac equation in Kerr-Newman-AdS black hole
  background}\fg}, \emph{Journal of Mathematical Physics} \textbf{51} (2010),
  \cdrnumero 3, p.~033516-033517.

\bibitem{Bizon:2012we}
{\scshape P.~Bizon {\normalfont \cdrandname}~H.~Friedrich}, {\og {A remark
  about wave equations on the extreme Reissner-Nordstr\"om black hole
  exterior}\fg}, \emph{Class. Quant. Grav.} \textbf{30} (2013), p.~065001,
  \url{https://arxiv.org/abs/1212.0729}.

\bibitem{Borthwick:2018aa}
{\scshape J.~{Borthwick}}, {\og {Maximal Kerr de Sitter spacetimes}\fg},
  \emph{Classical Quantum Gravity} \textbf{35} (2018), \cdrnumero 21,
  p.~215006.

\bibitem{Camporesi:1996aa}
{\scshape R.~Camporesi {\normalfont \cdrandname}~A.~Higuchi}, {\og {On the
  eigenfunctions of the Dirac operator on spheres and real hyperbolic
  spaces}\fg}, \emph{Journal of Geometry and Physics} \textbf{20} (1996),
  p.~1-18.

\bibitem{Dafermos:2014jwa}
{\scshape M.~Dafermos, I.~Rodnianski {\normalfont
  \cdrandname}~Y.~Shlapentokh-Rothman}, {\og {A scattering theory for the wave
  equation on Kerr black hole exteriors}\fg}, \emph{Annales Scientifiques de
  l'ENS} \textbf{51} (2018), \cdrnumero 2, p.~371-486,
  \url{https://arxiv.org/abs/1412.8379}.

\bibitem{Dappiaggi:2009fx}
{\scshape C.~Dappiaggi, V.~Moretti {\normalfont \cdrandname}~N.~Pinamonti},
  {\og {Rigorous construction and Hadamard property of the Unruh state in
  Schwarzschild spacetime}\fg}, \emph{Adv. Theor. Math. Phys.} \textbf{15}
  (2011), \cdrnumero 2, p.~355-447, \url{https://arxiv.org/abs/0907.1034}.

\bibitem{thdaude}
{\scshape T.~Daud{\'e}}, {\og {Scattering theory for Dirac fields in various
  spacetimes of the General Relativity}\fg}, Theses, {Universit{\'e} Sciences
  et Technologies - Bordeaux I}, 12 2004.

\bibitem{Daude:2010aa}
{\scshape T.~Daud\'e}, {\og {Time-dependent scattering theory for charged Dirac
  fields on a Reissner-Nordstr{\"o}m black hole}\fg}, \emph{{Journal of
  Mathematical Physics}} \textbf{51} (2010), \cdrnumero 10, p.~102504.

\bibitem{Daude:2016aa}
{\scshape T.~Daud{\'e} {\normalfont \cdrandname}~F.~Nicoleau}, {\og {Direct and
  inverse scattering at fixed energy for massless charged Dirac fields by
  Kerr-Newman-de Sitter black holes}\fg}, \emph{Memoirs of the American
  Mathematical Society} \textbf{247} (2016), \cdrnumero 1170,
  \url{https://arxiv.org/abs/hal-00841788}.

\bibitem{Derezinski:1997aa}
{\scshape J.~Derezinski {\normalfont \cdrandname}~C.~Gerard}, \emph{{Scattering
  theory of Classical and Quantum N-particle System}}, Theoretical and
  Mathematical Physics, vol. XII, Springer-Verlag Berlin Heidelberg, 1997.

\bibitem{Dollard:1966aa}
{\scshape J.~Dollard {\normalfont \cdrandname}~G.~Velo}, {\og {Asymptotic
  Behaviour of a Dirac Particle in a Coulomb Field}\fg}, \emph{Il Nuovo Cimento
  A (1965-1970)} \textbf{45} (1966), \cdrnumero 4, p.~801-812.

\bibitem{Evans:2010aa}
{\scshape L.~Evans}, \emph{Partial Differential Equations}, {Second}
  \cdredname, Graduate Studies in Mathematics, vol.~19, American Mathematical
  Society, 2010.

\bibitem{Mantoui:2001aa}
{\scshape V.~Georgescu {\normalfont \cdrandname}~M.~Mantoui}, {\og {On the
  Spectral Theory of Singular Dirac Type Hamiltonians}\fg}, \emph{{Journal of
  Operator Theory}} \textbf{46} (2001), p.~289-321.

\bibitem{Georgescu:2017vl}
{\scshape V.~Georgescu, C.~G{\'e}rard {\normalfont \cdrandname}~D.~H{\"a}fner},
  {\og {Asymptotic completeness for superradiant Klein-Gordon equations and
  applications to the De Sitter Kerr metric}\fg}, \emph{Journal of the European
  Mathematical Society} \textbf{19} (2017), \cdrnumero 8, p.~2371-2444.

\bibitem{gerard:hal-02939993}
{\scshape C.~G{\'e}rard, D.~H{\"a}fner {\normalfont \cdrandname}~M.~Wrochna},
  {\og {The Unruh state for massless fermions on Kerr spacetime and its
  Hadamard property}\fg}, {Accepted for publication in Annales Scientifiques de
  l'ENS}, 09 2020.

\bibitem{Gerard:2002aa}
{\scshape C.~Gerard {\normalfont \cdrandname}~I.~Laba}, \emph{{Multiparticle
  Quantum Scattering}}, Mathematical Surveys and Monographs, vol.~90, American
  Mathematical Society, 2002.

\bibitem{Geroch:1968aa}
{\scshape R.~Geroch}, {\og {Spinor Structure of Space-Times in General
  Relativity. I}\fg}, \emph{{Journal of Mathematical Physics}} \textbf{9}
  (1968), \cdrnumero 11, p.~1739-1744.

\bibitem{Geroch:1970aa}
\bysame , {\og {Spinor Structure of Space-Times in General Relativity. II}\fg},
  \emph{{Journal of Mathematical Physics}} \textbf{1} (1970), \cdrnumero 11,
  p.~343-348.

\bibitem{Nicolas:2004aa}
{\scshape D.~H{\"a}fner {\normalfont \cdrandname}~J.-P. Nicolas}, {\og
  Scattering of massless {Dirac} fields by a slow {Kerr} black hole\fg},
  \emph{Reviews in Mathematical Physics} \textbf{16} (2004), \cdrnumero 1,
  p.~29-123.

\bibitem{Hafner:2003aa}
{\scshape D.~H{\"a}fner}, {\og {Sur la th{\'e}orie de la diffusion pour
  l'{\'e}quation de Klein-Gordon dans la m{\'e}trique de Kerr}\fg},
  \emph{Dissertationes Mathematicae} (2003), \cdrnumero 421, p.~102.

\bibitem{Helffer:1987aa}
{\scshape B.~Helffer {\normalfont \cdrandname}~J.~Sj{\"o}strand}, {\og Equation
  de Schr{\"o}dinger avec champ magn{\'e}tique et {\'e}quation de Harper\fg},
  \emph{Journ{\'e}es {\'e}quations aux d{\'e}riv{\'e}es partielles} (1987),
  p.~1-9.

\bibitem{idelonriton:tel-01370116}
{\scshape G.~Idelon-Riton}, {\og {On scattering theory for the massive Dirac
  equation in Schwarzschild-Anti- de Sitter space-time and applications}\fg},
  \cdrphdthesisname, {Universit{\'e} Grenoble Alpes}, 07 2016.

\bibitem{Ince:1956aa}
{\scshape E.~Ince}, \emph{Ordinary Differential Equations}, Dover Publications,
  1956.

\bibitem{Kato:1980aa}
{\scshape T.~Kato}, \emph{Perturbation Theory for Linear Operators}, {Second}
  \cdredname, Classics in Mathematics, Springer-Verlag Berlin Heidelberg, 1980.

\bibitem{Lax:2002aa}
{\scshape P.~D. Lax}, \emph{Functional {Analysis}}, Pure and Applied
  Mathematics, Wiley, 2002.

\bibitem{Mourre:1981aa}
{\scshape E.~Mourre}, {\og {Absence of singular continous spectrum for certain
  self-adjoint operators}\fg}, \emph{Communications in Mathematical Physics}
  \textbf{78} (1981), \cdrnumero 3, p.~391-408.

\bibitem{Nicolas:2002aa}
{\scshape J.-P. Nicolas}, {\og Dirac fields on asymptotically flat
  space-times\fg}, \emph{Dissertationes Mathematicae} \textbf{408} (2002).

\bibitem{Penrose:1984aa}
{\scshape R.~Penrose {\normalfont \cdrandname}~W.~Rindler}, \emph{Spinors and
  space-time : {Two-spinor calculus and relativistic fields}}, Cambridge
  Monographs on Mathematical Physics, Cambridge University Press, 1984.

\bibitem{Reed:1975aa}
{\scshape M.~Reed {\normalfont \cdrandname}~B.~Simon}, \emph{{Fourier Analysis,
  Self-adjointness}}, {Methods of Modern Mathematical Physics}, vol.~II,
  Academic Press, Inc, 1975.

\bibitem{Reed:1979aa}
\bysame , \emph{{Scattering theory}}, {Methods of Modern Mathematical Physics},
  vol. III, Academic Press, Inc, 1979.

\bibitem{Sigal:1988aa}
{\scshape I.~Sigal {\normalfont \cdrandname}~A.~Soffer}, {\og Local Decay and
  Velocity bounds for Quantum Propagation\fg}, \emph{Princeton University}
  (1988).

\bibitem{Trautman:1993aa}
{\scshape A.~Trautman}, {\og {Spin structures on hypersurfaces and the spectrum
  of the Dirac operator on spheres}\fg}, in \emph{{Spinors,Twistors, Clifford
  Algebras and Quantum Deformations}} (Dordrecht) (Z.~Oziewicz, B.~Jancewicz
  {\normalfont \cdrandname}~A.~Borowiec, \cdredsname), Springer Netherlands,
  1993, p.~25-29.

\end{thebibliography}
\end{document}